\definecolor{t0}{rgb}{1.0, 0.6470588235294115, 0.0}
\definecolor{t1}{rgb}{0.7584702998846602, 0.9377450980392157, 0.0}
\definecolor{t2}{rgb}{0.2317474048442904, 0.8754901960784314, 0.0}
\definecolor{t3}{rgb}{0.0, 0.8132352941176471, 0.22722750865051902}
\definecolor{t4}{rgb}{0.0, 0.7509803921568627, 0.6184544405997694}
\definecolor{t5}{rgb}{0.0, 0.43551758938869645, 0.6887254901960784}
\definecolor{t6}{rgb}{0.0, 0.055276816608996346, 0.6264705882352941}
\definecolor{t7}{rgb}{0.257215974625144, 0.0, 0.5642156862745098}
\definecolor{t8}{rgb}{0.5019607843137255, 0.0, 0.5019607843137254}
\def\yshift{0.7}
\def\xshift{-7.5}
\definecolor{t02}{HTML}{FFA500}
\definecolor{t12}{HTML}{8EE345}
\definecolor{t22}{HTML}{5CC95E}
\definecolor{t32}{HTML}{3A83AF}
\definecolor{t42}{HTML}{110193}
\definecolor{t52}{HTML}{75147C}
\newtheorem{theorem}{Theorem}[section]
\newtheorem{corollary}[theorem]{Corollary}
\newtheorem{lemma}[theorem]{Lemma}
\crefname{section}{Section}{Sections}
\crefname{theorem}{Theorem}{Theorems}
\crefname{lemma}{Lemma}{Lemmas}
\crefname{table}{Table}{Tables}
\crefname{algocf}{Algorithm}{Algorithms}
\Crefname{algocf}{Algorithm}{Algorithms}
\crefname{ALC@unique}{Line}{Lines}
\newcommand{\x}{\mathbf{x}}
\newcommand{\R}{\mathds{R}}
\newcommand{\istate}{k}
\newcommand{\xnom}{\mathring{\mathbf{x}}}
\newcommand{\bpxnom}{\mathcal{B}_p(\xnom, \bm{\epsilon)}}
\newcommand{\CROWNAu}{\mathbf{\Psi}}
\newcommand{\CROWNAl}{\mathbf{\Phi}}
\newcommand{\CROWNbu}{\bm{\alpha}}
\newcommand{\CROWNbl}{\bm{\beta}}
\newcommand{\ubar}[1]{\underaccent{\bar}{#1}}
\newcommand{\nstep}{ReBReach-LP}
\newcommand{\basic}{BReach-LP}
\newcommand{\hybrid}{HyBReach-LP}
\newcommand{\BRNL}{BReach-MILP}
\definecolor{purple}{RGB}{136, 0, 199}
\definecolor{cardinal}{RGB}{177, 4, 14}
\begin{document}

\sptitle{Article Category}

\newcommand{\thetitle}{Backward Reachability Analysis of Neural Feedback Loops: Techniques for Linear and Nonlinear Systems}
\title{\thetitle}

\editor{This paper was recommended by Associate Editor F. A. Author.}

\author{Nicholas Rober\affilmark{1} (Student Member, IEEE)}

\author{Sydney M. Katz\affilmark{2} (Student Member, IEEE)}

\author{Chelsea Sidrane\affilmark{2}}

\author{Esen Yel\affilmark{2} (Member, IEEE)} 

\author{Michael Everett\affilmark{1} (Member, IEEE)}

\author{Mykel J. Kochenderfer\affilmark{2} (Senior Member, IEEE)}

\author{Jonathan P. How\affilmark{1} (Fellow, IEEE)}

\affil{Department of Aeronautics \& Astronautics, Massachusetts Institute of Technology, Cambridge, MA 02139 USA} 
\affil{Department of Aeronautics \& Astronautics, Stanford University, Stanford, CA 94305 USA}

\corresp{CORRESPONDING AUTHOR: Nicholas Rober (e-mail: \href{mailto:nrober@mit.edu}{nrober@mit.edu})}
\authornote{This work was supported in part by Ford Motor Company. The NASA University Leadership Initiative (grant \#80NSSC20M0163) also provided funds to assist the authors with their research. This research was also supported by the National Science Foundation Graduate Research Fellowship under Grant No. DGE–1656518. Any opinion, findings, and conclusions or recommendations expressed in this material are those of the authors and do not necessarily reflect the views of any NASA entity or the National Science Foundation. This work was also supported by AFRL and DARPA under contract FA8750-18-C-0099.}

\markboth{\MakeUppercase{\thetitle}}{\MakeUppercase{Nicholas Rober} {\itshape ET AL}.}

% \author{
% \thanks{Aerospace Controls Laboratory, Massachusetts Institute of Technology, Cambridge, USA. e-mail: {\tt \small \{nrober,mfe,jhow\}@mit.edu}. Research supported by Ford Motor Company. \stan{Add research support}}}

\begin{abstract}
{\color{black}
As neural networks (NNs) become more prevalent in safety-critical applications such as control of vehicles, there is a growing need to certify that systems with NN components are safe.
This paper presents a set of backward reachability approaches for safety certification of neural feedback loops (NFLs), i.e., closed-loop systems with NN control policies.
While backward reachability strategies have been developed for systems without NN components, the nonlinearities in NN activation functions and general noninvertibility of NN weight matrices make backward reachability for NFLs a challenging problem.
To avoid the difficulties associated with propagating sets backward through NNs, we introduce a framework that leverages standard forward NN analysis tools to efficiently find over-approximations to backprojection (BP) sets, i.e., sets of states for which an NN policy will lead a system to a given target set.
We present frameworks for calculating BP over-approximations for both linear %\footnote{\textbf{Code}: \url{https://github.com/mit-acl/nn_robustness_analysis}} 
and nonlinear %\footnote{\textbf{Code}: \url{https://github.com/smkatz12/BackwardsReach.git}}
systems with control policies represented by feedforward NNs and propose computationally efficient strategies.
We use numerical results from a variety of models to showcase the proposed algorithms, including a demonstration of safety certification for a 6D system.
}

\end{abstract}

\begin{IEEEkeywords}
Reachability, Neural Networks, Safety, Neural Feedback Loops, Linear Systems, Nonlinear Systems
% Enter key words or phrases in alphabetical order, separated by commas. For a list of\goodbreak suggested keywords, send a blank e-mail to \href{mailto:keywords@ieee.org}{keywords@ieee.org} or visit\goodbreak \href{http://www.ieee.org/organizations/pubs/ani_prod/keywrd98.txt}{http://www.ieee.org/organizations/pubs/ani\_prod/keywrd98.txt}
\end{IEEEkeywords}

\maketitle

%!TEX root=main.tex

\section{Introduction}
% Neural networks (NNs) play an important role in many modern robotic systems.
Neural network-based control has seen success in domains that remain difficult for more classical control methods, such as robotic locomotion over varied terrain~\cite{tan2018sim, lee2020learning}, recovery of a quadrotor from aggressive maneuvers~\cite{hwangbo2017control}, and robotic manipulation~\cite{gu2017deep, kalashnikov2018scalable}. 
Neural network-based control also offers complementary properties compared to classical linear control and more recent optimal control approaches~\cite{franklin2014feedback, grune2017nonlinear}.
Neural network-based control is capable of handling complex nonlinear systems, and like linear control it can be fast to execute.
However, the standard training procedures for NN-based control do not provide the convergence and safety guarantees that can usually be derived for more classical control approaches. 
Further, despite achieving high performance during testing, many works have demonstrated that NNs can be sensitive to small perturbations in the input space and produce unexpected behavior or incorrect conclusions as a result~\cite{kurakin2016adversarial, yuan2019adversarial}.
Thus, before applying NNs to safety-critical systems such as self-driving cars~\cite{chen2015deepdriving, kendall2019learning, kelly2019hg} and aircraft collision avoidance systems~\cite{julian2019deep}, there is a need for tools that provide safety guarantees.
Providing safety guarantees presents computational challenges due to the high dimensional and nonlinear nature of NNs.

\begin{figure}[t]
\setlength\belowcaptionskip{-0.7\baselineskip}
\centering
\captionsetup[subfigure]{aboveskip=-1pt,belowskip=-1pt}
    \begin{subfigure}[t]{\columnwidth}
        \includegraphics[width=\columnwidth]{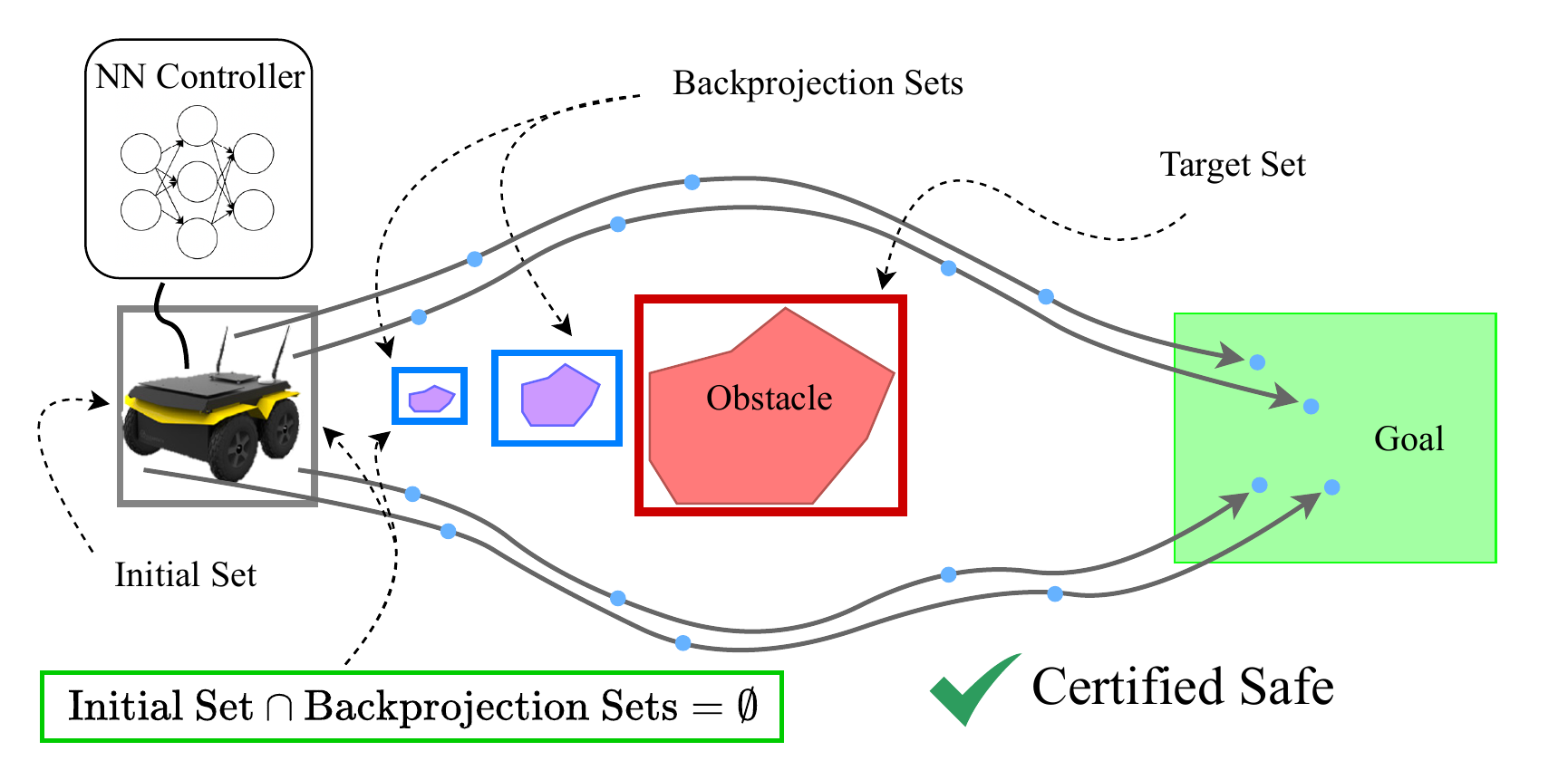}
        \caption{{\color{black}Backward reachability for collision avoidance: BP set estimates (blue) approximate the set of states that lead to the obstacle (red), thus if the initial state set does not intersect with any BPs, the situation is safe.}}
        \label{fig:reach_comp:backward}
    \end{subfigure}
    \begin{subfigure}[t]{\columnwidth}
        \includegraphics[width=\columnwidth]{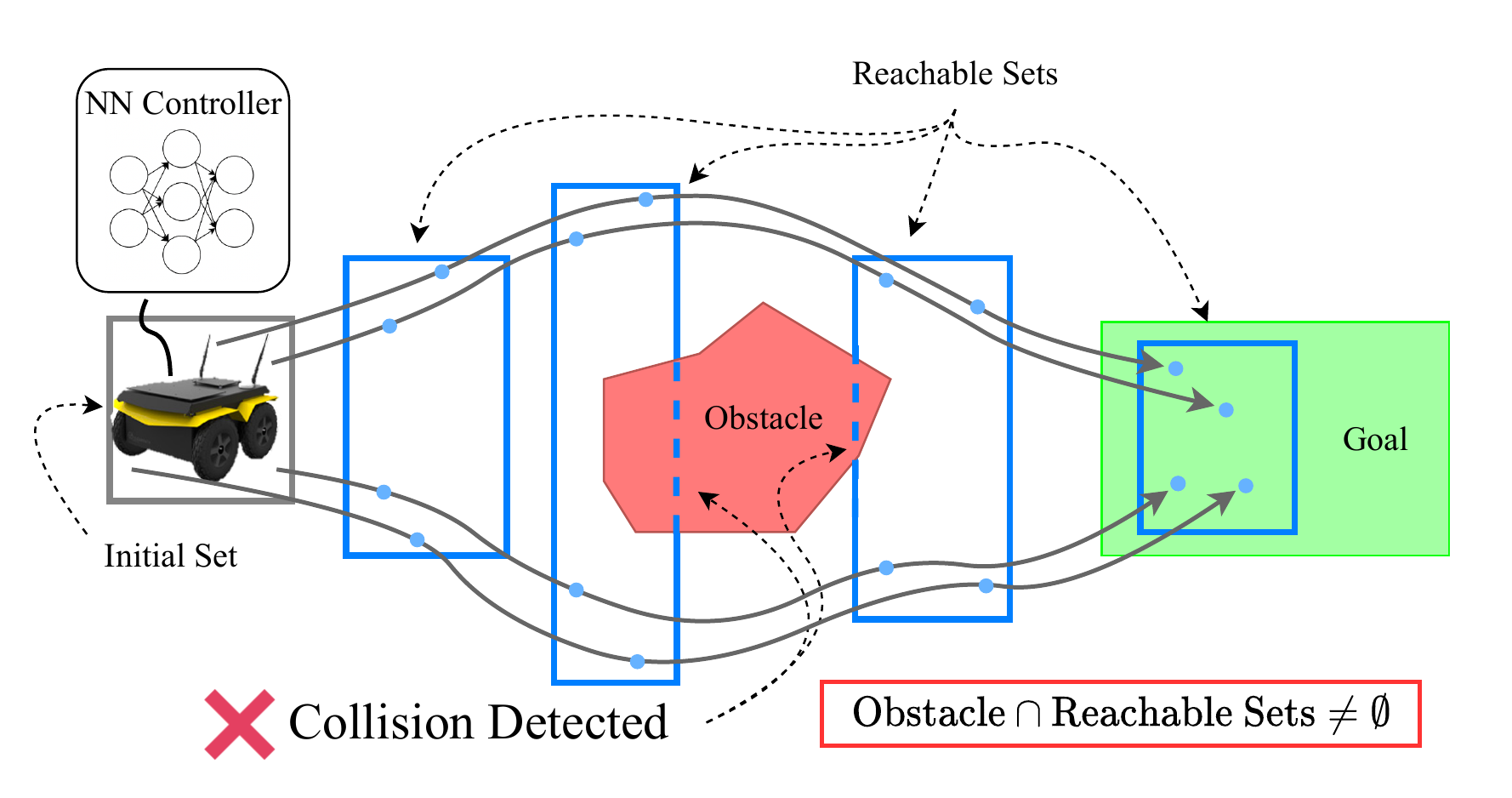}
        \caption{{\color{black}Forward reachability for collision avoidance: forward reachable set estimates (blue) approximate the set of possible future states of the system, thus any intersection with an obstacle means safety cannot be certified.}}
        \label{fig:reach_comp:forward}
    \end{subfigure}
    \caption{{\color{black}Collision avoidance scenario. Backward reachability correctly guarantees safety whereas forward reachability fails.}}
    \label{fig:reach_comp}
\end{figure}

There is a growing body of work focused on synthesizing NN controllers with safety and performance guarantees~\cite{chang2019neural, sun2020learning, han2020actor, qin2021learning, dai2021lyapunov, dawson2022safe}, but this does not preclude the need for verification and safety analysis after synthesis.
{\color{black}To this end, many open-loop NN analysis tools \cite{zhang2018efficient, weng2018towards, xu2020automatic, raghunathan2018semidefinite, tjeng2017evaluating, katz2019marabou, katz2017reluplex, jia2021verifying,vincent2021reachable} have been developed to make statements about possible NN outputs given a set of inputs.
To extend analysis techniques to closed-loop systems with NN controllers, i.e., neural feedback loops (NFLs), there has also been effort towards developing reachability analysis techniques \cite{vincent2021reachable,dutta2019reachability, huang2019reachnn, ivanov2019verisig, fan2020reachnn, xiang2020reachable, hu2020reach, sidrane2021overt, everett2021reachability,chen2022one,  bak2022closed} that determine how a system's state evolves over time.
While forward reachability \cite{vincent2021reachable,dutta2019reachability, huang2019reachnn, ivanov2019verisig, fan2020reachnn, xiang2020reachable, hu2020reach, sidrane2021overt, everett2021reachability, chen2022one} certifies safety by checking that possible future states of the system do not enter dangerous regions, this paper focuses on backward reachability \cite{bak2022closed}, wherein safety is certified by checking that the system does not start from a state that could lead to a dangerous region, as shown in \cref{fig:reach_comp:backward}.
Thus, the challenge of backward reachability analysis is to calculate \textit{backprojection} (BP) \textit{sets} that define regions of the state space for which the NN control policy will drive the system into a given \textit{target set}, which can be chosen to contain a dangerous part of the state space.

Contrasting \cref{fig:reach_comp:backward} with \cref{fig:reach_comp:forward} shows how backward reachability can be less conservative than forward reachability in scenarios where there are multiple modes for possible trajectories branching from a given initial state set.
In \cref{fig:reach_comp:forward}, the robot's position within the initial state set determines whether it will go above or below the obstacle.
The forward reachable sets must contain all possible future trajectories, so when the forward reachable sets are represented by single convex sets (as is the case in \cite{dutta2019reachability, huang2019reachnn, ivanov2019verisig, fan2020reachnn, xiang2020reachable, hu2020reach, sidrane2021overt, everett2021reachability}), they must span the upper and lower trajectories, thus detecting a possible collision with the obstacle and failing to certify safety.
Alternatively, because the BP sets in \cref{fig:reach_comp:backward} do not intersect with the initial state set, the robot is not among the states that leads to an obstacle, allowing backward reachability to certify safety.

For systems without NNs, switching between forward and backward reachability analysis is relatively simple \cite{bansal2017hamilton,evans2010partial,mitchell2007comparing}.
However, nonlinear activation functions and noninvertible weight matrices common to NNs make it challenging to determine a set of inputs given a set of outputs.
While recent work \cite{vincent2021reachable,bak2022closed,everett2021reachability} has made advances in backward reachability of NFLs, there are no existing techniques that efficiently find BP set estimates over multiple timesteps for linear and nonlinear systems with feedforward NNs that give continuous outputs.
}
Thus, our contributions include:
\begin{itemize}
    \item A set of algorithms that enable computationally efficient safety certification of linear and nonlinear NFLs by calculating over-approximations of BP sets.
    \item Validation of our methods using numerical experiments for control-relevant applications including obstacle avoidance for mobile robots and quadrotors.
\end{itemize}
This work extends prior work \cite{rober2022backward} by introducing:
\begin{compactitem}
    \item \hybrid{}: A hybrid of the two previously proposed algorithms \cite{rober2022backward} that improves the tradeoff between conservativeness and computation time.
    \item A guided partitioning algorithm that reduces conservativeness faster than uniform partitioning strategies.
    \item \BRNL{}: an algorithm to compute BP sets for systems with nonlinear dynamics using techniques developed by \citet{sidrane2021overt}.
    \item Numerical experiments that exhibit our BP estimation techniques on higher-order and nonlinear systems, including an ablation study marking improvements from  \cite{rober2022backward}.
\end{compactitem}

\section{Related Work}
{\color{black}This section describes how reachability analysis can be applied to three categories of systems: NNs in isolation (i.e., open-loop analysis), closed-loop systems without neural components, and NFLs.

Open-loop NN analysis refers to methods that determine a relation between sets of inputs to an NN and sets of the NN's output.
Open-loop NN analysis encompasses techniques that relax the nonlinearities in the NN activation functions to quickly provide relatively conservative bounds on NN outputs~\cite{zhang2018efficient, raghunathan2018semidefinite}, and techniques that take more time to provide exact bounds~\cite{katz2017reluplex, vincent2021reachable}.
Often, these tools are motivated by the threat posed by adversarial attacks \cite{raghunathan2018semidefinite, tjeng2017evaluating} that can drastically affect the output of perception models.
To extend the use of open-loop analysis tools to NFLs, the closed-loop system dynamics must be considered.

Reachability analysis is a well-established method for safety verification of closed-loop systems without NN components.
Hamilton-Jacobi methods \cite{bansal2017hamilton,evans2010partial} provide the main theoretical framework for these analyses, and computational tools such as CORA~\cite{althoff2015introduction}, Flow*~\cite{chen2013flow}, SpaceEx~\cite{frehse2011spaceex}, HYLAA~\cite{bak2017hylaa}, and C2E2~\cite{duggirala2015c2e2} can be used to compute reachable sets for a given system.
However, these tools cannot currently be used to analyze NFLs.

When it comes to NFLs, forward reachability analysis is the focus of  recent work~\cite{vincent2021reachable,dutta2019reachability, huang2019reachnn, ivanov2019verisig, fan2020reachnn, xiang2020reachable, hu2020reach, sidrane2021overt, everett2021reachability, chen2022one}.
Unlike traditional reachability analysis techniques such as Hamilton-Jacobi methods \cite{bansal2017hamilton,evans2010partial}, backward reachability analysis of NFLs introduces significant challenges that are not present in forward reachability of NFLs.
Forward reachability of NFLs relies on open-loop NN analysis tools that are configured to pass sets forward through an NN.
This suggests that backward reachability would need a way to propagate sets backward through an NN, but this presents two major challenges:
First, many activation functions do not have a one-to-one mapping of inputs to outputs (consider $\mathrm{ReLU}(x) = 0$, which corresponds to any value $x \leq 0$).
In these cases, propagating sets backward through an NN can cause an infinitely large BP sets.
Additionally, the weight matrices in each layer of the NN may be singular or rank deficient, meaning they are not generally invertible, which presents another fundamental challenge in determining a set of inputs from a set of outputs.
}

While recent works on NN inversion have developed NN architectures that are designed to be invertible~\cite{ardizzone2018analyzing} and training procedures that regularize for invertibility~\cite{behrmann2019invertible}, dependence on these techniques would be a major limitation on the class of systems for which backward reachability analysis could be applied.
Our approach avoids the challenges associated with NN-invertibility.
For linear systems, our approach can be applied to the same class of NN architectures as CROWN \cite{zhang2018efficient}, i.e., NNs for which an affine relaxation can be found.
For nonlinear systems, our approach can be applied to any NN with piecewise-linear activation functions (ReLU, Leaky ReLU, hardtanh, etc.). 

% While the body of work concerning reachability analysis is growing, less attention has been paid to the reverse problem of finding \emph{backprojection} sets, i.e. the set of possible states for which the NN control policy will deliver the system to a target set. 
{\color{black}
Recent work investigated backwards reachability analysis for NFLs.
\Citet{vincent2021reachable} describe a method for backward reachability on an NN dynamics model, but this method is limited to NNs with ReLU activations.
Alternatively, \citet{bak2022closed} use a quantized state approach~\cite{jia2021verifying} to quickly compute BP sets, but it requires an alteration of the original NN through a preprocessing step that can affect its overall behavior and does not consider continuous NN outputs.
Finally, \citet{everett2021reachability} derive a closed-form equation that can be used to find BP set under-approximations, but under-approximations cannot be used to guarantee safety because they may miss some states that reach the target set (under-approximations are most useful for goal checking, where it is desirable to guarantee that all states in the BP estimate reach the target set).
This work expands upon the existing literature on backward reachability for NFLs by introducing a framework for efficiently calculating BP set over-approximations for NFLs with continuous action spaces.
}

\section{Preliminaries}

\subsection{System Dynamics}
We consider a discrete-time time-invariant system with dynamics given by
\begin{equation} \label{eq:general_dyn}
    \mathbf{x}_{t+1} = p(\mathbf{x}_{t}, \mathbf{u}_t),
\end{equation}
where $\mathbf{x}_t \in \mathds{R}^{n_x}$ is the state, $n_x$ is the state dimension, $\mathbf{u}_t \in \mathds{R}^{n_u}$ is the control input, and $n_u$ is the control dimension.
In the linear case, \cref{eq:general_dyn} may be written
\begin{equation} \label{eqn:lti_dynamics}
    \mathbf{x}_{t+1} = \mathbf{A}\mathbf{x}_t + \mathbf{B}\mathbf{u}_t +\mathbf{c}.
\end{equation}
The matrices $\mathbf{A}$, $\mathbf{B}$, and $\mathbf{c}$ are assumed known.
In both linear and nonlinear cases, the control input $\mathbf{u}_t$ is generated by a state-feedback control policy specified by an $m$-layer feedforward NN  $\pi(\cdot)$, i.e., $\mathbf{u}_t = \pi(\mathbf{x}_t)$.
The closed-loop system \cref{eq:general_dyn} with control policy $\pi$ is denoted
\begin{equation}
    \mathbf{x}_{t+1} = f(\mathbf{x}_{t}; \pi). \label{eqn:closed_loop_dynamics}
\end{equation}
We assume $\mathbf{u}_t$ is constrained by control limits and is thus limited to a convex set of allowable control values $\mathcal{U}$, i.e., $\mathbf{u}_t \in \mathcal{U}$. 
Similarly, we consider $\mathbf{x}_t$ to be constrained to a convex operating region of the state space $\mathcal{X}$, i.e., $\mathbf{x}_t \in \mathcal{X}$.
% Note that for systems that can be written as $\mathbf{x}_{t+1} = f(\mathbf{u}_t)$, 
Note that for nonlinear systems, $\mathcal{X}$ represents the set of states that are physically realizable for a given system, e.g., the maximum speed of a vehicle that cannot be exceeded regardless of input. 
We implement this by using clipping in the dynamics. 
However, for linear systems, we do not clip, as this would introduce non-linearity.
%$\mathcal{X}$ should be considered an operating region that the system is assumed to stay within.
Defining $\mathcal X$ is particularly important for backward reachability, because the target set says nothing about initial states that are practical.
For example, a very high velocity could cause the positional range of the backprojection set to also be quite large, but the practitioner might know that the system would never start from a set of states that allow for such a high velocity to be achieved in the time horizon. 
To encode this idea in the optimization problems, we assume $\mathcal{X}$ is given and contains all possible previous states.

\subsection{Nonlinear System Over-Approximation}
We calculate backreachable and backprojection sets for nonlinear systems by extending the mixed integer linear programming (MILP) methods of \citet{sidrane2021overt}, which over-approximate nonlinear dynamics using piecewise-linear representations.
Using a piecewise-linear over-approximation of the nonlinear dynamics preserves the ability to calculate global optima for the optimization problem as well as make sound claims about the original nonlinear system.
As described by \citet{sidrane2021overt}, the closed-loop system dynamics function \cref{eq:general_dyn} mapping $\mathbb{R}^{n_x} \rightarrow \mathbb{R}^{n_x}$ is considered as $n_x$ functions mapping $\mathbb{R}^{n_x} \rightarrow \mathbb{R}$. Each function mapping $\mathbb{R}^{n_x} \rightarrow \mathbb{R}$ is re-written into a series of affine functions and $\mathbb{R} \rightarrow \mathbb{R}$ nonlinear functions.
For each nonlinear function $e(x)$ defined over the interval $x \in [a,b]$, an optimally-tight piecewise-linear upper and lower bound is computed such that:
$g_{e_{LB}}(x) \leq e(x) \leq g_{e_{UB}}(x)~~\forall x \in [a,b]$.
Together, the affine functions and piecewise-linear lower and upper bounds form an abstraction $\hat{p}$ of the original nonlinear closed-loop system $p$~\cref{eq:general_dyn}.
% The use of lower and upper bounds mean that $\x_{t+1}$ no longer has a \textit{functional} relationship to $\x_t$ and $\u_t$.
With control policy $\mathbf{u}_t = \pi(\x_t)$, we denote the abstract closed-loop system relating $\x_t$ and $\x_{t+1}$ as $\hat{f}(\x_t;\pi)$.
The abstract closed-loop system represents an over-approximation of $f(\cdot)$, or more formally $\forall \x_t \in \mathcal{X}\nonumber$,
% \begin{align}
% &\forall \x_t \in \mathcal{X}\nonumber \\
% &\{\x_{t+1} \lvert\ \x_{t+1} = f(\x_t;\pi) \} \subseteq \{\x_{t+1} \lvert\ \x_{t+1} \bowtie \hat{f}(\x_t; \pi)\}
% \end{align}
\begin{equation}
    \{\x_{t+1} \lvert\ \x_{t+1} = f(\x_t;\pi) \} \subseteq \{\x_{t+1} \lvert\ \x_{t+1} \bowtie \hat{f}(\x_t; \pi)\}
\end{equation}
where we denote that a state $\x_{t+1}$ is a valid successor of state $\x_t$ under the abstract system if $\x_{t+1} \bowtie \hat{f}(\x_t; \pi)$.

% To calculate backreachable and backprojection sets for nonlinear systems, this paper builds upon the approaches presented in \cite{sidrane2021overt, rober2022backward} which rely on solving LPs and Mixed-Integer Linear Programs (MILPs). 
% LPs and MILPs are well-studied problems for which one can reliably find the global optimum.

\subsection{Control Policy Neural Network Structure}

% For a feedforward NN with $L$ hidden layers (and two additional input and output layers), the number of neurons in each layer is $n_l\in[L+1]$, where $[i]$ denotes the set $\{0,1,\ldots,i\}$.
% We introduce the $l$-th layer's weight matrix $\mathbf{W}^{(l)}\in\R^{n_{l+1}\times n_{l}}$, bias vector $\mathbf{b}^{(l)}\in\R^{n_{l+1}}$, and coordinate-wise activation function $\sigma^{(l)}: \R^{n_{l+1}} \to \R^{n_{l+1}}$, which can include tanh, sigmoid, ReLU (i.e., $\sigma(\mathbf{z})=\mathrm{max}(0,\mathbf{z})$), and many others.
% For input $\x\in\R^{n_0}$, the NN output $\pi(\x)$ is,
% \begin{align}
% \begin{split}
%     \x^{(0)} &= \x \\
%     \z^{(l)} &= \mathbf{W}^{(l)} \x^{(l)}+\mathbf{b}^{(l)}, \forall l\in[L] \\
%     \x^{(l+1)} &= \sigma^{(l)}(\z^{(l)}), \forall l\in[L-1] \\
%     \pi(\x) &= \z^{(L)}.
% \end{split}
% \end{align}

Consider a feedforward NN with $M$ hidden layers, and one input and one output layer.
We denote the number of neurons in each layer as $n_m,\ \forall m \in [M+1]$ where  $[i]$ denotes the set $\{0,1,\ldots,i\}$.
The $m$-th layer has weight matrix $\mathbf{W}^{m}\in\R^{n_{m+1}\times n_{m}}$, bias vector $\mathbf{b}^{m}\in\R^{n_{m+1}}$, and activation function $\sigma^{m}: \R^{n_{m+1}} \to \R^{n_{m+1}}$.
For linear systems, $\sigma^{m}$ can be any option handled by CROWN~\cite{zhang2018efficient}, e.g., sigmoid, tanh, ReLU, and for nonlinear systems $\sigma^{m}$ can be any piecewise-linear function, e.g. ReLU, Leaky ReLU, hardtanh, etc. 
For an input $\x\in\R^{n_0}$, the NN output $\pi(\x)$ is computed as
% \begin{align}
% \begin{split}
%     \x^{0} &= \x \\
%     \z^{l} &= \mathbf{W}^{l} \x^{l}+\mathbf{b}^{l}, \forall l\in[L] \\
%     \x^{l+1} &= \sigma^{l}(\z^{l}), \forall l\in[L-1] \\
%     \pi(\x) &= \z^{L}.
% \end{split}
% \end{align}
\begin{align}
\begin{split}
    \x^{0} &= \x \\
    \x^{m+1} &= \sigma^{m}(\mathbf{W}^{m} \x^{m}+\mathbf{b}^{m}), \forall m\in[M-1] \\
    \pi(\x) &= \mathbf{W}^{M} \x^{M}+\mathbf{b}^{M}.
\end{split}
\end{align}

\subsection{Neural Network Relaxation}

% A key step in efficiently conducting reachable set analysis of an NFL~\cref{eqn:closed_loop_dynamics} is to relax the nonlinearities introduced by the NN's activation functions. By converting the nonlinearities in the activation function into an affine upper and lower bound, where each bound holds for the known range of inputs to the NN, we can obtain a simplified relationship between the NN's input and output. 

% We represent the range of inputs to the activation functions with the $\epsilon$-ball (also called the $\ell_p$-ball).
% Denote the $\epsilon$-ball under the $\ell_p$ norm, centered at $\xnom$, with scalar radius, $\epsilon$,
% \begin{align}
%     \bpxnomscalar &= \{\mathbf{x}\ \lvert\ \lvert\lvert \mathbf{x} - \xnom \rvert \rvert_p \leq \epsilon \}.
% \end{align}
% The $\epsilon$-ball is extended to the case of vector $\bm{\epsilon}\in\R^n_{\geq 0}$ (i.e., $\bm{\epsilon}$-ball), defined as
% \begin{align}
%     \bpxnom &= \{\mathbf{x}\ \lvert\ \lim_{\bm{\epsilon}' \to \bm{\epsilon}^+} \lvert\lvert (\mathbf{x} - \xnom) \oslash \bm{\epsilon}' \rvert\rvert_p \leq 1\},
% \end{align}
% where $\oslash$ denotes element-wise division.

%To avoid the computational cost associated with calculating exact BP sets, 
For linear dynamics,
we relax the NN's activation functions to obtain affine bounds on the NN outputs for a known set of inputs.
This preserves the problem as an LP and reduces computational cost.
The range of inputs are represented using the $\ell_p$-ball 
% \begin{align}
%     \bpxnom &\triangleq \{\mathbf{x}\ \lvert\ \lim_{\bm{\epsilon}' \to \bm{\epsilon}^+} \| (\mathbf{x} - \xnom) \oslash \bm{\epsilon}' \|_p \leq 1\},
% \end{align}
\begin{align}
    \bpxnom &\triangleq \{\mathbf{x}\ \lvert\ \| (\mathbf{x} - \xnom) \oslash \bm{\epsilon} \|_p \leq 1\},
\end{align}
where $\xnom \in \R^n$ is the center of the ball, $\bm{\epsilon}\in\R^n_{\geq 0}$ is a vector whose elements are the radii for the corresponding elements of $\mathbf{x}$, and $\oslash$ denotes element-wise division.

% Given an $m$-layer neural network function $f : \R^{n_0} \rightarrow \R^{n_m}$, there exists two explicit functions $f^L_j : \R^{n_0} \rightarrow \R$ and $f^U_j :\R^{n_0} \rightarrow \R$ such that $\forall j \in [n_m], \; \forall \x \in \Ball$, the inequality $\, f_{j}^{L}(\x) \leq f_{j}(\x) \leq f_{j}^{U}(\x)$ holds true, where
\begin{theorem}[\!\!\cite{zhang2018efficient}, Convex Relaxation of NN]\label{thm:crown_particular_x}
Given an $m$-layer neural network control policy $\pi:\R^{n_x}\to\R^{n_u}$, there exist two explicit functions $\pi_j^L: \R^{n_x}\to\R^{n_u}$ and $\pi_j^U: \R^{n_x}\to\R^{n_u}$ such that $\forall j\in [n_m], \forall \mathbf{x}\in\bpxnom$, the inequality $\pi_j^L(\mathbf{x})\leq \pi_j(\mathbf{x})\leq \pi_j^U(\mathbf{x})$ holds true, where
\begin{equation}
\label{eq:f_j_UL}
    \pi_{j}^{U}(\x) = \CROWNAu_{j,:} \x + \CROWNbu_j, \quad
    \pi_{j}^{L}(\x) = \CROWNAl_{j,:} \x + \CROWNbl_j,
    % \pi_{j}^{U}(\y) &= \Au{(0)}_{j,:} \y + \sum_{k=1}^{m}\Au{(k)}_{j,:}(\bias{(k)}+\upbias{(k)}_{:,j}) \\
    % \pi_{j}^{L}(\y) &= \Al{(0)}_{j,:} \y + \sum_{k=1}^{m}\Al{(k)}_{j,:}(\bias{(k)}+\lwbias{(k)}_{:,j}),
\end{equation}
where $\CROWNAu, \CROWNAl \in \R^{n_u \times n_x}$ and $\CROWNbu, \CROWNbl \in \R^{n_u}$ are defined recursively using NN weights, biases, and activations (e.g., ReLU, sigmoid, tanh) \cite{zhang2018efficient}.
\end{theorem}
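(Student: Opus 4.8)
Since \cref{thm:crown_particular_x} is quoted verbatim from \cite{zhang2018efficient}, the plan is to recapitulate the CROWN construction rather than to prove something genuinely new; I would present it in enough detail to be self-contained and then defer to \cite{zhang2018efficient} for the explicit form of the recursions. The overall strategy is a layer-by-layer \emph{backward substitution}: sandwich each nonlinear activation between an affine lower and an affine upper bound, then propagate these affine expressions backward through the preceding affine layers, accumulating the resulting coefficients into the matrices $\CROWNAu,\CROWNAl \in \R^{n_u\times n_x}$ and vectors $\CROWNbu,\CROWNbl \in \R^{n_u}$.

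First I would establish the per-neuron relaxation. For a scalar activation $\sigma$ (ReLU, sigmoid, tanh, \dots) restricted to a pre-activation interval $[l,u]$, one can find slopes and intercepts with $\underline{\alpha}\,z + \underline{\beta} \le \sigma(z) \le \overline{\alpha}\,z + \overline{\beta}$ for all $z \in [l,u]$: for ReLU this is the standard triangle relaxation, and for the S-shaped activations it follows from convexity/concavity on the relevant sub-intervals together with tangent-line and chord constructions. This step needs a valid pre-activation box $[l,u]$ for every neuron in every layer; those boxes are themselves produced by applying the same bounding procedure to the truncated sub-networks (or by cheaper interval propagation), so the construction is organized as an outer forward sweep over layer index — computing the interval needed at layer $m$ from the bounds already obtained for layers $0,\dots,m-1$ — with a backward substitution carried out inside each sweep.

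The core step is the backward substitution itself. Assume inductively that on $\bpxnom$ we have affine bounds $\CROWNAl^{(m)}\x^{m} + \CROWNbl^{(m)} \le \pi_j(\x) \le \CROWNAu^{(m)}\x^{m} + \CROWNbu^{(m)}$; substituting $\x^{m} = \sigma^{m-1}(\W{m-1}\x^{m-1} + \bias{m-1})$ and then the per-neuron affine relaxations above produces bounds affine in $\x^{m-1}$, after which folding in the affine map $\W{m-1}(\cdot) + \bias{m-1}$ advances the induction one layer. The one genuinely delicate point — and, I expect, the main obstacle to a clean write-up — is the sign-dependent choice of which activation bound to substitute: where an accumulated coefficient is positive the upper activation bound must be used and where it is negative the lower one (and the mirror image for the lower bound on $\pi_j$). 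Splitting each coefficient matrix into its positive and negative parts handles this uniformly and is what makes the recursion closed-form; verifying that the bookkeeping of these splits across all layers of the network collapses exactly into the recursive definitions underlying \cref{eq:f_j_UL} is the bulk of the work. Iterating down to $m=0$ gives bounds affine in $\x^{0}=\x$, and soundness is immediate by transitivity of the chain of inequalities; because the endpoints are affine in $\x$, the inequality $\pi_j^L(\x)\le\pi_j(\x)\le\pi_j^U(\x)$ holds for \emph{every} $\x\in\bpxnom$, not merely pointwise.
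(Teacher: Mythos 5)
The paper offers no proof of this statement: \cref{thm:crown_particular_x} is imported verbatim from \cite{zhang2018efficient}, and your outline is a faithful recapitulation of that paper's argument --- per-neuron affine relaxation of each activation over its pre-activation interval, backward substitution through the affine layers, and the sign-dependent split of the accumulated coefficient matrices that makes the recursion closed-form. One small quibble: validity for all $\mathbf{x}\in\bpxnom$ follows from each relaxation in the chain holding over the entire relevant interval (which is why the pre-activation boxes must be sound), not from the affineness of the final bounds, but your appeal to transitivity already covers this.
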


% \subsection{Backward Reachability Analysis}\label{sec:backward_reachability}

% In this section, we discuss the problem of \textit{backward} reachability: given a target set $\mathcal{X}_T$, from which states will the system end up in the target set?

\subsection{Backreachable \& Backprojection Sets}

\begin{figure}[t]
\centering
\resizebox{1\columnwidth}{!}{
\begin{tikzpicture}
    % \draw [red, ultra thick, fill=red, fill opacity=0.1] (2.5,-0.5) rectangle (4,0.5) node[above, fill opacity=1] {$\mathcal{X}_{T}$};
    \draw [red, ultra thick, fill=red, fill opacity=0.1] (2.5,-0.5) rectangle (4,0.5);
    \node[red] at (3.25, 0.75) {$\mathcal{X}_{T}$};

    % Overapproximate backreachable set
    \draw [fill=white]  plot[smooth, tension=.7] coordinates {(-3.5,3.1) (-4.1,2.4) (-4.5,1.8) (-4.0,0) (-4.0,-1.8) (-2.6,-2.3) (.5,-1.7) (-0.2,0.1) (-.8,1.5) (-2.4,3.3) (-3.5,3.1)};
    \node[black] at (-2.8,2.9) {$\bar{\mathcal{R}}_{-1}(\mathcal{X}_T)$};% node[above] {$\bar{\mathcal{R}}_{-1}(\mathcal{X}_T)$};
    \node[] (i) at (3.6, 3) {};
    \node[circle,fill=black,inner sep=0pt,minimum size=3pt] (g) at (-2.1,3.) {};
    \node[circle,fill=black,inner sep=0pt,minimum size=3pt] (h) at (3.65,-0.1) {};
    %\draw [->,black, ultra thick] (g.east) to (i.west) to [out=15,in=0] (h.east);
    \draw [->,black, ultra thick] (g.east) to [out=35,in=55] (h.east);
    \node[black] (RTset) at (.0,3.8) {$\color{black} \exists \mathbf{u} \in \mathcal{U}$};

    % True Backreachable Set
    \draw [fill=black!30!green!60!white]  plot[smooth, tension=.7] coordinates {(-3,2.5) (-3.5,2.2) (-4,1.8) (-3.5,0) (-3.5,-1.8) (-2.5,-1.9) (0,-1.7) (-0.3,0.1) (-1,1.5) (-2.3,2.5) (-3,2.5)}; \node[black] at (-3,1.9) {$\color{black!40!green!90!black} \mathcal{R}_{-1}(\mathcal{X}_T)$};
    \node[circle,fill=black,inner sep=0pt,minimum size=3pt] (a) at (-2.1,2.) {};
    \node[circle,fill=black,inner sep=0pt,minimum size=3pt] (b) at (3.55,0.2) {};
    \draw [->,black!40!green!60!white, ultra thick] (a.east) to [out=25,in=55] (b.north);
    \node[black] (RTset) at (1.3,2.7) {$\color{black!40!green!90!white} \exists \mathbf{u} \in \mathcal{U}$};

    % Overapproximate backprojection
    \draw [fill=red!40!blue!20!white]  plot[smooth, tension=.7] coordinates {(-2.9,1.6) (-1.5,1.3) (-0.2,-1.0) (-1.1,-1.8) (-3.2,-1.2) (-3,0.1) (-3.7,1.1) (-2.9,1.6)}; \node[black] at (-2.5,1.) {$\color{red!60!blue!90!white} \bar{\mathcal{P}}_{-1}(\mathcal{X}_T)$};
    \node[circle,fill=black,inner sep=0pt,minimum size=3pt] (c) at (-2.3,1.4) {};
    \node[circle,fill=black,inner sep=0pt,minimum size=3pt] (d) at (2.8,0.2) {};
    \draw [->,red!55!blue!35!white, ultra thick] (c.east) to [out=350,in=120] (d.north);
    \node[black] (PTbarset) at (1, 1.65) {$ \color{red!60!blue!90!} \exists \mathbf{u} \in [\pi^L(\mathbf{x}_{-1}), \pi^U(\mathbf{x}_{-1})]$};

    % true backprojection set
    \draw [fill=blue!50!white]  plot[smooth, tension=.7] coordinates {(-2.1,0.7) (-1.1,0.45) (-0.5,-0.8) (-1.4,-1.6) (-2.8,-0.8) (-2.8,0.45) (-2.1,0.7)};
    \node[black] at (-1.9,0.25) {$\color{black} \mathcal{P}_{-1}(\mathcal{X}_T)$};
    \node[circle,fill=black,inner sep=0pt,minimum size=3pt] (e) at (-1,-0.1) {};
    \node[circle,fill=black,inner sep=0pt,minimum size=3pt] (f) at (3.0,-0.2) {};
    \draw [->,blue, ultra thick] (e.east) to [out=15,in=190] (f.west);
    \node[black] (PTset) at (1.5, -0.6) {$\color{black} \mathbf{u} = \pi(\mathbf{x}_{-1})$};

    \draw [fill=blue!20!green!20!white]  plot[smooth, tension=.7] coordinates {(-1.95,0.0) (-1.4,-0.1) (-0.6,-0.8) (-1.1,-1.4) (-2.7,-0.4) (-1.95,0.0)}; \node[black] at (-1.7, -0.5) {$\color{blue!45!green!45!black} \ubar{\mathcal{P}}_{-1}(\mathcal{X}_T)$};
    \node[circle,fill=black,inner sep=0pt,minimum size=3pt] (c) at (-1.0,-1.25) {};
    \node[circle,fill=black,inner sep=0pt,minimum size=3pt] (d) at (3.8,-.4) {};
    \draw [->,blue!50!green!50, ultra thick] (c.east) to [out=360,in=240] (d.south);
    \node[black] (PTbarset) at (2.6, -1.8) {$\color{blue!45!green!45!black} \forall \mathbf{u} \in [\pi^L(\mathbf{x}_{-1}), \pi^U(\mathbf{x}_{-1})]$};

\end{tikzpicture}}
\iffalse
\caption{{\color{red}
Backreachable, backprojection, and target sets.
Given a target set, $\mathcal{X}_T$, the backreachable set $\mathcal{R}_{-1}(\mathcal{X}_T)$ contains all states for which \textit{some} control exists to move the system to $\mathcal{X}_T$ in one timestep.
BP set $\mathcal{P}_{-1}(\mathcal{X}_T)$ contains all states for which the NN controller leads the system to $\mathcal{X}_T$.
Backreachable overapproximation $\bar{\mathcal{R}}_{-1}(\mathcal{X}_T)$ contains state for which \textit{some} states for which a control input exists to move the system to $\mathcal{X}_T$. BP under-approximation $\protect\ubar{\mathcal{P}}_{-1} (\mathcal{X}_T)$ and over-approximation $\bar{\mathcal{P}}_{-1}(\mathcal{X}_T)$ contain states for which \textit{all} and \textit{some}, respectively, controls that the relaxed NN could apply lead the system to $\mathcal{X}_T$.}
}
\fi
\caption{{\color{black}
Target, backreachable, and backprojection sets/over-approximations.
Given a target set, $\mathcal{X}_T$, the backreachable set $\mathcal{R}_{-1}(\mathcal{X}_T)$ contains all states where \textit{some} control exists such that the system goes to $\mathcal{X}_T$ in one timestep.
The BP set $\mathcal{P}_{-1}(\mathcal{X}_T)$ contains all states for which the NN controller leads the system to $\mathcal{X}_T$.
Backreachable overapproximation $\bar{\mathcal{R}}_{-1}(\mathcal{X}_T)$ contains states for which \textit{some} control exists to move the system to $\mathcal{X}_T$ under the relaxed dynamics. 
BP under-approximation $\protect\ubar{\mathcal{P}}_{-1} (\mathcal{X}_T)$ and over-approximation $\bar{\mathcal{P}}_{-1}(\mathcal{X}_T)$ contain states for which \textit{all} and \textit{some}, respectively, successor states under the relaxed system (controller, dynamics, or both) lead the system to $\mathcal{X}_T$.}
}
\label{fig:backreachable}
\vspace*{-0.12in}
\end{figure}
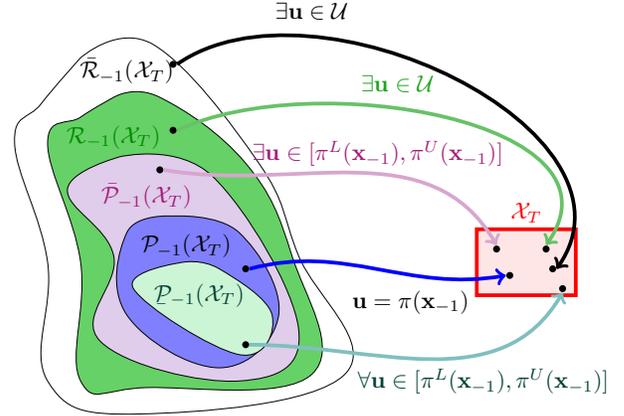

Given a target set $\mathcal{X}_T$ as shown on the right side of \cref{fig:backreachable}, each of the five sets on the left contain states that will reach $\mathcal{X}_T$ under different conditions as described below.
First, the one-step backreachable set
% \begin{align}
%     \mathcal{R}_{-1}(\mathcal{X}_T) &= \{ \mathbf{x}\ \lvert\ \exists\mathbf{u} \in \mathcal{U} \mathrm{\ s.t.\ } \label{eqn:backreachable_sets} \\ \nonumber
%      &\mathbf{A}\mathbf{x} + \mathbf{B}\mathbf{u} + \mathbf{c} \in \mathcal{X}_T \}\
% \end{align}
\begin{equation}
    \mathcal{R}_{-1}(\mathcal{X}_T) \triangleq \{ \mathbf{x}_t\ \in \mathcal{X} \lvert\ \exists\mathbf{u}_t \in \mathcal{U} \mathrm{\ s.t.\ } \label{eqn:backreachable_sets}
     p(\x_t, \mathbf{u}_t) \in \mathcal{X}_T \},
\end{equation}
contains the set of all states that transition to $\mathcal{X}_T$ in one timestep given some $\mathbf{u}_t\in \mathcal{U}$.
However, in the nonlinear case, the backreachable set cannot be computed exactly.
The dynamics $p$ are abstracted using an over-approximation and as a consequence we compute an over-approximation of the backreachable set 
\begin{align}
    \bar{\mathcal{R}}_{-1}(\mathcal{X}_T) \triangleq& \{ \mathbf{x}_t\ \in \mathcal{X} \lvert\ \exists\mathbf{u}_t \in \mathcal{U},~ \exists \x_{t+1} \bowtie \hat{p}(\x_t, \mathbf{u}_t) \mathrm{\ s.t.\ } \nonumber \\ &~~ \label{eqn:backreachable_sets_NL}
      \x_{t+1} \in \mathcal{X}_T \}
\end{align}
where a state $\x_{t+1}$ is a valid successor of $\x_t$ for some $\mathbf{u}_t \in \mathcal U$ under the abstract system $\hat p$ if $\x_{t+1} \bowtie \hat{p}(\x_t, \mathbf{u}_t)$.

The importance of the backreachable set is that it only depends on the control limits $\mathcal{U}$ and not the NN control policy $\pi$.
Thus, while $\mathcal{R}_{-1}(\mathcal{X}_T)$ is itself a very conservative over-approximation of the true set of states that will reach $\mathcal{X}_T$ under $\pi$, for the linear case it provides a region over which we can relax the NN with forward NN analysis tools, and in the nonlinear case, a less conservative region over which to approximate the dynamics and encode the NN.
%thereby avoiding issues with NN invertibility.

Next, we define the one-step true BP set as
\begin{equation}
    \mathcal{P}_{-1}(\mathcal{X}_T) \triangleq \{ \mathbf{x}_t\ \lvert\ f(\x_t;\pi) \in \mathcal{X}_T \}, \label{eqn:backprojection_sets}
\end{equation}
which denotes the set of all states that will reach $\mathcal{X}_T$ in one timestep given a control input from $\pi$.
As previously noted, calculating $\mathcal{P}_{-1}(\mathcal{X}_T)$ exactly is computationally intractable for both linear and nonlinear systems due to the presence of the NN control policy, which motivates the development of approximation techniques.
Thus, the final two sets shown in~\cref{fig:backreachable} are the BP set over-approximation and BP set under-approximation.

In the linear case, the NN control policy $\pi$ is abstracted and in the nonlinear case the dynamics $p$ are abstracted.
We generalize by denoting the abstract system $\hat f$ in both cases.
The BP set over-approximation can then be written
\begin{align}
    \bar{\mathcal{P}}_{-1}(\mathcal{X}_T) &\triangleq \{ \mathbf{x}_t\ \lvert\ \exists \x_{t+1} \bowtie \hat{f}(\x_t;\pi) \mathrm{\ s.t.\ } \label{eqn:backprojection_set_over_gen} \\ \nonumber
     & \x_{t+1} \in \mathcal{X}_T \},\
\end{align}
and the BP set under-approximation can then be written
\begin{align}
    \ubar{\mathcal{P}}_{-1}(\mathcal{X}_T) &\triangleq \{ \mathbf{x}_t\ \lvert\ \forall \x_{t+1} \bowtie \hat{f}(\x_t;\pi) \mathrm{\ s.t.\ } \label{eqn:backprojection_set_under_gen} \\ \nonumber
     & \x_{t+1} \in \mathcal{X}_T \}.\
\end{align}
%Thus, the final two sets shown in~\cref{fig:backreachable} are the BP set over-approximation
In the linear case the BP set over-approximation can be written more explicitly as
\begin{align}
    \bar{\mathcal{P}}_{-1}(\mathcal{X}_T) &\triangleq \{ \mathbf{x}\ \lvert\ \exists\mathbf{u} \in [\pi^L(\x), \pi^U(\x)] \mathrm{\ s.t.\ } \label{eqn:backprojection_set_over} \\ \nonumber
     &\mathbf{A}\mathbf{x} + \mathbf{B}\mathbf{u} + \mathbf{c} \in \mathcal{X}_T \},\
\end{align}
and the BP set under-approximation as
\begin{align}
    \ubar{\mathcal{P}}_{-1}(\mathcal{X}_T) &\triangleq \{ \mathbf{x}\ \lvert\ \forall\mathbf{u} \in [\pi^L(\x), \pi^U(\x)] \mathrm{\ s.t.\ } \label{eqn:backprojection_set_under} \\ \nonumber
     &\mathbf{A}\mathbf{x} + \mathbf{B}\mathbf{u} + \mathbf{c} \in \mathcal{X}_T \}.\
\end{align}
Comparison of the motivation behind over- and under-approximation strategies is given in the next section. From here we will omit the argument $\mathcal{X}_T$ from our notation, i.e., $\bar{\mathcal{P}}_t \triangleq \bar{\mathcal{P}}_t(\mathcal{X}_T)$, except where necessary for clarity.

\subsection{Backprojection Set: Over- vs. Under-Approximations}
{\color{black}
Given the need to approximate BP sets, it is natural to question whether it is necessary to \textit{over-} or \textit{under-}approximate the BP sets.
The ``for all'' condition (i.e., $\forall \mathbf{u} \in [\pi^L(\x), \pi^U(\x)]$) in \cref{eqn:backprojection_set_under} means that BP under-approximations can be used to provide a guarantee that a state will reach the target set since all states within the under-approximation reach the target set.
Thus, BP under-approximations are useful to check if a set of states is guaranteed to go to a goal set.
Alternatively, the ``exists'' condition (i.e., $\exists \mathbf{u} \in [\pi^L(\x), \pi^U(\x)]$) in \cref{eqn:backprojection_set_over} means that BP over-approximations can be used to check if it is possible for a state to reach the target set.
Thus, BP over-approximations are useful to check if a set of states is guaranteed to go to an obstacle/avoid set as we consider in this paper.
Note that in this work we use ``over-approximation'' and ``outer-bound'' interchangeably.
}
\iffalse
{\color{red}The need to approximate the BP set naturally leads to the question of whether we should compute over- or under-approximations.
Both types of BP set approximations have relevant physical meaning and are valuable for different reasons.
An under-approximation is useful if the target set is a goal set, because we aim to find a set of states at the previous timestep that will \textit{certainly} drive the system into the goal set.
% In other words, we do not want the system to miss the goal because the NN relaxation is too loose.
This leads to a ``for all'' condition on the relaxed NN (i.e., $\forall \mathbf{u} \in [\pi^L(\x), \pi^U(\x)]$) in \cref{eqn:backprojection_set_under}. Conversely, an over-approximation is useful if the target set is an obstacle/avoid set, because we aim to find all states at the previous timestep for which $\pi$ \textit{could} drive the system into the avoid set.
% In other words, we do not want the system to collide with the obstacle because the NN relaxation is too loose.
This leads to an ``exists'' condition on the relaxed NN (i.e., $\exists \mathbf{u} \in [\pi^L(\x), \pi^U(\x)]$) in \cref{eqn:backprojection_set_over}. Note that in this work we use ``over-approximation'' and ``outer-bound'' interchangeably.

Prior work \cite{everett2021reachability} introduced a closed-form equation capable of one-step under-approximations of BP sets.
Unfortunately, this closed form equation hinged on the ``for all'' condition and thus cannot be used to generate BP set over-approximations, necessitating a different approach. % is not apparent, necessitating a different approach.
% In this work, we provide two algorithms for computing multi-step over-approximations of the BP sets that can be used to verify that the system will not enter an unsafe region of the state space over a given time horizon.
}
\fi

%!TEX root=main.tex

\section{Approach} \label{sec:approach}

\begin{figure*}[t]
\centering
\includegraphics[width=\textwidth]{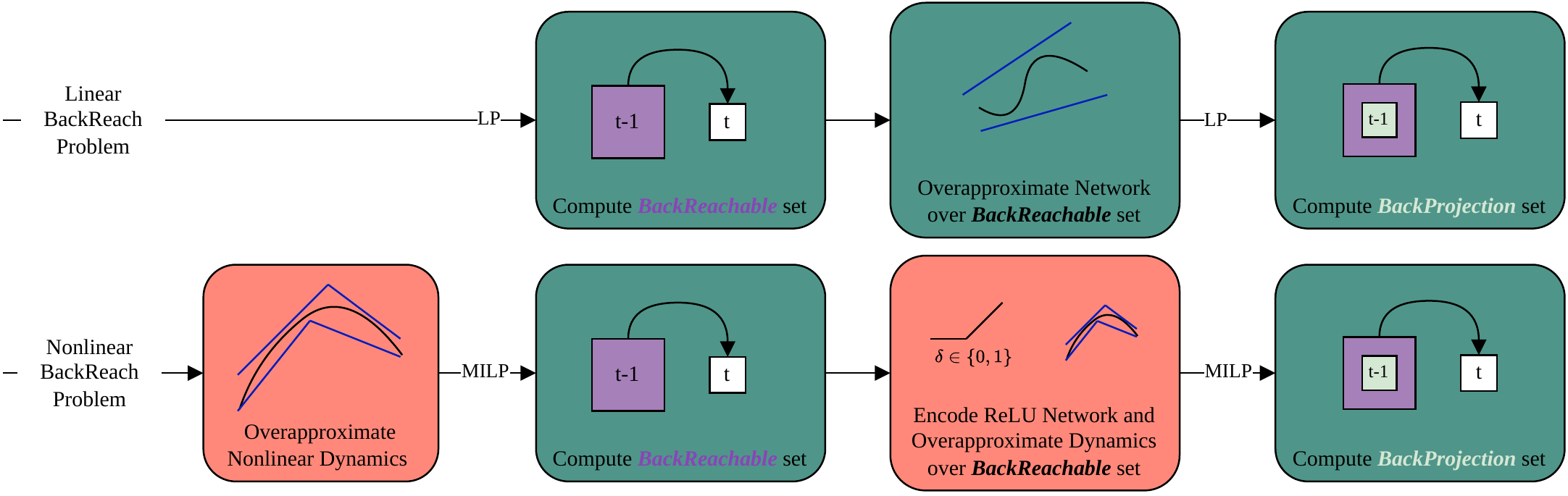}
\caption{A summary of our approach for computing backprojection sets for both linear and nonlinear dynamical systems.}
\label{fig:approach_summary}
\end{figure*}

This section outlines techniques to find multi-step BP set over-approximations $\bar{\mathcal{P}}_{-1}$ for a target set $\mathcal{X}_T$.
The key insights of computing over-approximate BP sets are captured by first discussing one-step back-projections, and then we explain the extension to multi-step BP set computation.
The general procedures for computing one-step BP sets for linear and nonlinear systems are similar and are depicted in \cref{fig:approach_summary} as well as explained in more detail as follows:
\begin{enumerate}
    \item {For smooth nonlinear dynamical systems, construct a piecewise-linear over-approximation of the dynamics, $\hat p$, over the operating domain $\mathcal{X}$}. For a linear system, $\hat p = p$.
    \item Ignoring the NN and using control limits $\mathcal{U}$ and state limits $\mathcal{X}$, find the hyper-rectangular bounds $\bar{\mathcal{R}}_{-1}$ on the true backreachable set $\mathcal{R}_{-1}$ (note that $\mathcal{P}_{-1} \subseteq \mathcal{R}_{-1} \subseteq \tilde{\mathcal{R}}_{-1} \subseteq \bar{\mathcal{R}}_{-1}$) by setting up an optimization problem with the following constraints
 
    \begin{equation}
    \mathcal{F}_{\bar{\mathcal{R}}_{-1}} \triangleq \left\{
    \mathbf{x}_t, \mathbf{u}_t\ \left| \
    \begin{aligned}
    & \x_{t+1} \bowtie \hat{p}(\x_t, \mathbf{u}_t), \\ 
    & \x_{t+1} \in \mathcal{X}_T,\\% \label{eqn:backreachable:lpmin_constr_state} \\
    & \mathbf{u}_t \in \mathcal{U}, \\ 
    & \mathbf{x}_t \in \mathcal{X}, \\ 
    \end{aligned}\right.\kern-\nulldelimiterspace
    \right\},
    \label{eqn:backreachable:constr_gen}
    \end{equation}
    and solve
    \begin{equation}
    \bar{\bar{\mathbf{x}}}_{t; \istate} = \min_{\mathbf{x}_t, \mathbf{u}_t \in \mathcal{F}_{\bar{\mathcal{R}}_{-1}} } \mathbf{e}_k^\top  \mathbf{x}_{t}, \quad \quad \ubar{\ubar{\mathbf{x}}}_{t; \istate} = \max_{\mathbf{x}_t, \mathbf{u}_t \in \mathcal{F}_{\bar{\mathcal{R}}_{-1}}} \mathbf{e}_k^\top  \mathbf{x}_{t}, \label{eqn:backreachable:obj_gen}
    \end{equation}
    for each state $\istate \in [n_x]$, where the notation $\mathbf{x}_{t;k}$ denotes the $k^\mathrm{th}$ element of $\mathbf{x}_t$ and $\mathbf{e}_k \ \in \R^{n_x}$ denotes the indicator vector, i.e., the vector with the $k^\mathrm{th}$ element equal to one and all other elements equal to zero. 
    Eq. \Cref{eqn:backreachable:obj_gen} provides a hyper-rectangular outer bound ${\bar{\mathcal{R}}_{-1}} \triangleq \{\mathbf{x}\ | \ \ubar{\ubar{\mathbf{x}}}_{t} \leq \mathbf{x} \leq \bar{\bar{\mathbf{x}}}_{t}\}$ on the backreachable set.
    
    \item Encode the neural network over the domain of the over-approximation of the backreachable set $\bar{\mathcal{R}}_{-1}$
    \label{enum:nn_relaxation} to obtain contraints corresponding to $\hat{f}(\mathbf{x}_t; \pi)$. For more detail on the differences in this step between linear and nonlinear systems, see the following two sections.
    \item Compute hyper-rectangular bounds $\bar{\mathcal{P}}_{-1}$ on the true  backprojection set $\mathcal{P}_{-1}$
    %which is the set of states that will lead to the target set when the control input is specified by the NN.
    by setting up an optimization problem with the following constraints
    \begin{equation}
    \mathcal{F}_{\bar{\mathcal{P}}_{-1}} \! \triangleq \! \left\{
    \! \mathbf{x}_{t} \left| \ 
    \begin{aligned}
    & \x_{t+1} \bowtie \hat{f}(\mathbf{x}_t; \pi) \\
    & \x_{t+1} \in \mathcal{X}_T \\ 
    & \x_t \in \bar{\mathcal{R}}_{-1} \\
    \end{aligned}\right.\kern-\nulldelimiterspace 
    \right\},
    \label{eqn:bp_gen}
    \end{equation}
    and solving the following optimization problems for each state $\istate \in [n_x]$:
    \begin{equation}
    \begin{split}
    & \bar{\bar{\mathbf{x}}}_{t; \istate} = \min_{\mathbf{x}_{t} \in \mathcal{F}_{\bar{\mathcal{P}}_{-1}}} \mathbf{e}_k^\top \mathbf{x}_{t}, \\ & \ubar{\ubar{\mathbf{x}}}_{t; \istate} = \max_{\mathbf{x}_{t} \in \mathcal{F}_{\bar{\mathcal{P}}_{-1}}} \mathbf{e}_k^\top \mathbf{x}_{t}. \label{eqn:bp_opt_gen}
    \end{split}
    \end{equation}
    where input $\mathbf{u}_t$ constraints are implicitly included by $\hat{f}(\x_t; \pi)$.
    %where $\mathbf{u}_t$ is implicitly constrained by $\hat{f}(\x_t; \pi)$.
    %The optimization problems described by \cref{eqn:bp_gen} and \cref{eqn:bp_opt_gen} are shown in \cref{fig:hybrid_constraints}. 
    %\todo{decide if we want \cref{fig:hybrid_constraints} to refer to general case or linear case.}
    
\end{enumerate}

% \begin{figure*}[t]
% % \setlength\belowcaptionskip{-0.7\baselineskip}
%     \centering
% %    \captionsetup{aboveskip=-12pt,belowskip=-12pt}
%     \includegraphics[width=1\textwidth]{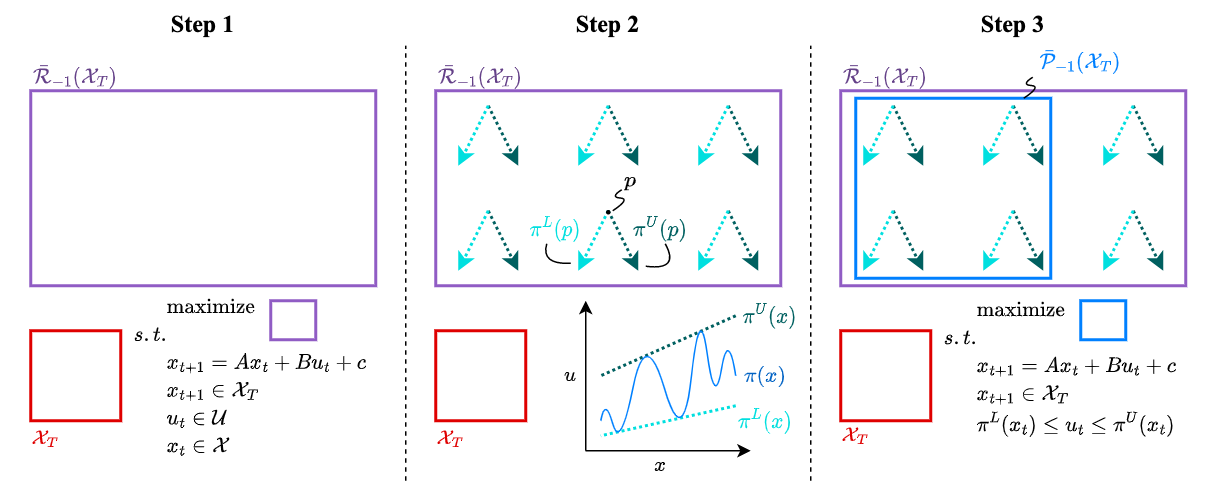}
%     \caption{Main idea behind our approach. \nr{Taken from a poster and added here as something we could adapt for the paper if we like}}
%     \label{fig:buggy_NN_demo}
%     \vspace*{-0.18in}
% \end{figure*}

% Detailed procedures for finding multiple-step over-approximate backprojection sets for linear dynamical systems and nonlinear dynamical systems are described in the following two sections. 
\subsection{Multi-Step BP Sets}
Now that we have explained single step BP set computation, we will explain multi-step BP set computation. 
The way multi-step BP sets are computed materially affects the speed/tractability of computation and conservativeness/tightness of the BP sets. 
Tighter sets are desirable as they more accurately reflect the system and allow proving stricter safety properties, as discussed in \cite{rober2022backward, sidrane2021overt}.
A \emph{concrete} multi-step reachability approach iteratively computes reachable sets by considering one timestep at a time:
\begin{equation}
\bar{\mathcal{P}}_t \triangleq \{\x_t \mid 
\x_{t+1} \bowtie  \hat{f}(\x_t;\pi),
\x_{t+1} \in \bar{\mathcal{P}}_{t+1} \}.
\end{equation}
If we could use the true BP sets $\mathcal P$, this would not cause any problems, but given that we are using an over-approximation of the BP set $\bar{\mathcal{P}}_{t+1}$, this  adds over-approximation error. 
In contrast, a \emph{symbolic} approach computes reachable sets by considering the constraints of several timesteps at a time:
% \begin{equation}
%     \bar{\mathcal{P}}_t \triangleq \{\x_t \mid \x_{t+1} \in 
%     \bar{\mathcal{P}}_{t+1}, \x_{t+2} \in 
%     \bar{\mathcal{P}}_{t+2}, ..., \x_{t+n} \in 
%     \mathcal{P}_{\text{final}} \}
% \end{equation}
\begin{align}
    \bar{\mathcal{P}}_t \triangleq \{\x_t \mid~ &\x_{t+1} \bowtie  \hat{f}(\x_t;\pi), \nonumber \\ 
    &\x_{t+2} \bowtie \hat{f}(\x_{t+1};\pi), ..., \nonumber \\
    &\x_{t+n} \bowtie \hat{f}(\x_{t+n-1};\pi),\nonumber \\
    &\x_{t+n} \in \mathcal{P}_{\text{final}} \}
\end{align}
where $\mathcal{P}_{\text{final}}$ may be the BP set at a point $n$ steps into the future $\bar{\mathcal{P}}_{t+n}$ or may be the target set $\mathcal{X}_T$. 
The addition of constraints results in a tighter and less conservative estimate of $\bar{\mathcal{P}}_t$ but adds to the computational complexity of the optimization problems to be solved.
The sections that follow describe in more detail both how one-step BP sets are calculated and how multi-step BP set computation is handled for both linear and nonlinear dynamical systems. 

\subsection{Linear Dynamics: Over-Approximation of BP Sets} \label{sec:approach:linear}

% We point out that while \cite{rober2022backward} iteratively calculates one-step BP over-approximations using \basic{} then refines them using \nstep{}, which implements more sophisticated LP constraints, we instead combine the two algorithms into \hybrid{} which directly uses the constraints from \nstep{} in the first calculation, thus eliminating the need for an additional refinement step.

% The general procedure described prior to this section provides the key insights to calculate a one-step BP set over-approximation.
% In this section, we outline the procedure more explicitly for linear systems, and additionally consider the calculation of a BP over-approximation at time step $t\in \mathcal{T} \triangleq \{-\tau, -\tau+1, \ldots, -1\}$ to reduce the conservativeness associated with iteratively calculating one-step BP over-approximations discussed in \cite{rober2022backward}.
In this section, we outline the procedure for calculating BP sets for linear systems.
First, to find the backreachable set for a single time step $t \in \mathcal{T} \triangleq \{ -\tau, -\tau+1, \ldots, -1 \}$, we configure an LP by writing the linear equivalent of \cref{eqn:backreachable:constr_gen} as 
\begin{equation}
\mathcal{F}_{\bar{\mathcal{R}}_t} \triangleq \left\{
\mathbf{x}_t, \mathbf{u}_t\ \left| \
\begin{aligned}
& \mathbf{A} \mathbf{x}_t + \mathbf{B} \mathbf{u}_t + \mathbf{c} \in \bar{\mathcal{P}}_{t+1},\\% \label{eqn:backreachable:lpmin_constr_state} \\
& \mathbf{u}_t \in \mathcal{U}, \\ 
& \mathbf{x}_t \in \mathcal{X}, \\ 
\end{aligned}\right.\kern-\nulldelimiterspace
\right\},
\label{eqn:backreachable:lp_constr}
\end{equation}
where $\bar{\mathcal{P}}_{0} \triangleq \mathcal{X}_{T}$, and solve
\begin{equation}
\bar{\bar{\mathbf{x}}}_{t; \istate} = \min_{\mathbf{x}_t, \mathbf{u}_t \in \mathcal{F}_{\bar{\mathcal{R}}_t}} \mathbf{e}_k^\top  \mathbf{x}_{t}, \quad \quad \ubar{\ubar{\mathbf{x}}}_{t; \istate} = \max_{\mathbf{x}_t, \mathbf{u}_t \in \mathcal{F}_{\bar{\mathcal{R}}_t}} \mathbf{e}_k^\top  \mathbf{x}_{t}, \label{eqn:backreachable:lp_obj}
\end{equation}
for each state $\istate \in [n_x]$. %, where the notation $\mathbf{x}_{t;k}$ denotes the $k^\mathrm{th}$ element of $\mathbf{x}_t$ and $\mathbf{e}_k \ \in \R^{n_x}$ denotes the indicator vector, i.e., the vector with the $k^\mathrm{th}$ element equal to one and all other elements equal to zero. 
Eq. \cref{eqn:backreachable:lp_obj} provides a hyper-rectangular outer bound $\bar{\mathcal{R}}_{t} \triangleq \{\mathbf{x}\ | \ \ubar{\ubar{\mathbf{x}}}_{t} \leq \mathbf{x} \leq \bar{\bar{\mathbf{x}}}_{t}\}$ on the backreachable set.
Note that this is a LP for the convex $\mathcal{X},\ \mathcal{U}$ used here.

Next, to compute the BP set, we need to encode the NN.
To encode the NN into an LP framework, the bounds $\ubar{\ubar{\mathbf{x}}}_{t} \leq \mathbf{x}_{t} \leq \bar{\bar{\mathbf{x}}}_{t}$ are used with \cref{thm:crown_particular_x} to construct the linear relaxation $\pi^U(\mathbf{x}_t) = \CROWNAu\mathbf{x}_t + \CROWNbu$ and $\pi^L(\mathbf{x}_t) = \CROWNAl\mathbf{x}_t + \CROWNbl$. 
Because LPs are fast to solve, a purely symbolic approach is used for multi-step BP set computation.
To compute backwards reachable sets from e.g., $t=0$ to $t=-\tau$, a set of LPs is solved for each timestep $t$ where constraints are encoded for all steps from the current timestep $t$ up to $t=0$ into each LP.
The linear, symbolic multi-step equivalent of \cref{eqn:bp_gen} is written by defining the set of state and control constraints $\mathcal{F}_{\bar{\mathcal{P}}_t}$ as
% \begin{equation}
% \mathcal{F}_{\bar{\mathcal{P}}_n} \! \triangleq \! \left\{
% \! \mathbf{x}_{\mathfrak{t}}, \mathbf{u}_{\mathfrak{t}} \left| \ 
% \begin{aligned}
% & \x_\mathfrak{t} \in \bar{\mathcal{R}}_\mathfrak{t} \\
% & \pi^L_\mathfrak{t}(\x_{\mathfrak{t}}) \leq \mathbf{u}_{\mathfrak{t}} \leq \pi^U_\mathfrak{t}(\x_\mathfrak{t}) \\
% & \x_{\mathfrak{t}+1} = \mathbf{A} \mathbf{x}_\mathfrak{t} + \mathbf{B} \mathbf{u}_\mathfrak{t} + \mathbf{c} \\
% & \x_{\mathfrak{t}+1} \in \bar{\mathcal{P}}_{\mathfrak{t}+1} \\ 
% & \mathfrak{t}\in \{-n, \ldots, -1\}
% \end{aligned} \; \;  \right.\kern-\nulldelimiterspace 
% \right\},
% \label{eqn:hybrid:lp_constr}
% \end{equation}
% and solving the following optimization problems for each state $\istate \in [n_x]$:
% \begin{equation}
% \begin{split}
% & \bar{\mathbf{x}}_{t; \istate} = \min_{\mathbf{x}_{-n:-1}, \mathbf{u}_{-n:-1} \in \mathcal{F}_{\bar{\mathcal{P}}_n}} \mathbf{e}_k^\top \mathbf{x}_{t}, \\ & \ubar{\mathbf{x}}_{t; \istate} = \max_{\mathbf{x}_{-n:-1}, \mathbf{u}_{-n:-1} \in \mathcal{F}_{\bar{\mathcal{P}}_n}} \mathbf{e}_k^\top \mathbf{x}_{t} \label{eqn:hybrid:lp_obj}
% \end{split}
% \end{equation}
\begin{equation}
\mathcal{F}_{\bar{\mathcal{P}}_t} \! \triangleq \! \left\{
\! \mathbf{x}_{i}, \mathbf{u}_{i} \left| \ 
\begin{aligned}
& \x_i \in \bar{\mathcal{R}}_i \\
& \pi^L_i(\x_{i}) \leq \mathbf{u}_{i} \leq \pi^U_i(\x_i) \\
& \x_{i+1} = \mathbf{A} \mathbf{x}_i + \mathbf{B} \mathbf{u}_i + \mathbf{c} \\
& \x_{i+1} \in \bar{\mathcal{P}}_{i+1} \\ 
& i\in \{t, \ldots, -1\}
\end{aligned} \; \;  \right.\kern-\nulldelimiterspace 
\right\},
\label{eqn:hybrid:lp_constr}
\end{equation}
and solving the following optimization problems for each state $\istate \in [n_x]$:
\begin{equation}
\begin{split}
& \bar{\mathbf{x}}_{t; \istate} = \min_{\mathbf{x}_{t:-1}, \mathbf{u}_{t:-1} \in \mathcal{F}_{\bar{\mathcal{P}}_t}} \mathbf{e}_k^\top \mathbf{x}_{t}, \\ & \ubar{\mathbf{x}}_{t; \istate} = \max_{\mathbf{x}_{t:-1}, \mathbf{u}_{t:-1} \in \mathcal{F}_{\bar{\mathcal{P}}_t}} \mathbf{e}_k^\top \mathbf{x}_{t} \label{eqn:hybrid:lp_obj}
\end{split}
\end{equation}
for each step $t \in \mathcal{T}$.
The LPs described by \cref{eqn:hybrid:lp_constr,eqn:hybrid:lp_obj} are visualized in \cref{fig:hybrid_constraints}.
Notice that for all $i > t$, the constraint $\x_i \in \bar{\mathcal{R}}_i$ is redundant because $\x_i$ is already constrained to $\bar{\mathcal{P}}_i$ by the previous time step; however, it is included for simplicity in expressing \cref{eqn:hybrid:lp_constr}.
From \cref{eqn:hybrid:lp_obj} we obtain $\bar{\mathcal{P}}_{t} \triangleq \{\mathbf{x}\ | \ \ubar{\mathbf{x}}_{t} \leq \mathbf{x} \leq \bar{\mathbf{x}}_{t}\}$ and provide the guarantee that $\mathcal{P}_{t}(\mathcal{X}_T) \subseteq \bar{\mathcal{P}}_{t}$ via the following lemma:

\begin{restatable}{lemmma}{thmhybridbackprojection}\label[lemma]{thm:hybridbackprojection}
Given an $m$-layer NN control policy $\pi:\R^{n_x}\to\R^{n_u}$, closed-loop dynamics $f: \R^{n_x} \times \Pi \to \R^{n_x}$ as in~\cref{eqn:lti_dynamics,eqn:closed_loop_dynamics}, and a convex target set $\mathcal{X}_T$, the relation
\begin{equation}
    \mathcal{P}_{t}(\mathcal{X}_T) \subseteq \bar{\mathcal{P}}_{t}(\mathcal{X}_T) \triangleq \{\x\ |\ \ubar{\x}_t \leq \x \leq \bar{\x}_t \},
\end{equation}
where $\ubar{\x}_t$ and $\bar{\x}_t$ are calculated using the LPs in \cref{eqn:hybrid:lp_obj}, holds for all $t \in \mathcal{T}$.
See~\cref{sec:proof_hybridbackprojection} for a proof.
\end{restatable}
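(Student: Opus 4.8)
The plan is to prove the containment $\mathcal{P}_t(\mathcal{X}_T) \subseteq \bar{\mathcal{P}}_t(\mathcal{X}_T)$ by backward induction on $t$, running from $t=0$ down to $t=-\tau$, with inductive hypothesis that $\mathcal{P}_{i+1}(\mathcal{X}_T) \subseteq \bar{\mathcal{P}}_{i+1}(\mathcal{X}_T)$ for every later timestep $i \geq t$. The base case is immediate, since $\bar{\mathcal{P}}_0 \triangleq \mathcal{X}_T = \mathcal{P}_0(\mathcal{X}_T)$. For the inductive step the core idea is that any genuine closed-loop trajectory $\x_t^\star, \x_{t+1}^\star, \dots, \x_0^\star$ with $\x_{i+1}^\star = f(\x_i^\star;\pi)$ and $\x_0^\star \in \mathcal{X}_T$, together with the controls $\mathbf{u}_i^\star = \pi(\x_i^\star)$, constitutes a feasible point of the LP feasible set $\mathcal{F}_{\bar{\mathcal{P}}_t}$ in \cref{eqn:hybrid:lp_constr}. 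Once this is established, the objectives in \cref{eqn:hybrid:lp_obj} minimize and maximize $\mathbf{e}_k^\top \x_t$ over a set containing this point, so each coordinate of $\x_t^\star$ is trapped between $\ubar{\x}_{t;k}$ and $\bar{\x}_{t;k}$, i.e., $\x_t^\star \in \bar{\mathcal{P}}_t$; since $\x_t^\star$ was an arbitrary element of $\mathcal{P}_t(\mathcal{X}_T)$, this closes the step.

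To verify feasibility I would check the constraints of \cref{eqn:hybrid:lp_constr} one at a time along the trajectory, for each $i \in \{t,\dots,-1\}$. The dynamics constraint $\x_{i+1}^\star = \mathbf{A}\x_i^\star + \mathbf{B}\mathbf{u}_i^\star + \mathbf{c}$ holds by construction via \cref{eqn:lti_dynamics,eqn:closed_loop_dynamics}. The constraint $\x_{i+1}^\star \in \bar{\mathcal{P}}_{i+1}$ holds because the trajectory stays in the true BP sets (by the recursive extension of \cref{eqn:backprojection_sets}, $\x_t^\star \in \mathcal{P}_t$ unrolls to $\x_i^\star \in \mathcal{P}_i$ for all $i$), and $\mathcal{P}_{i+1} \subseteq \bar{\mathcal{P}}_{i+1}$ by the inductive hypothesis; for $i=-1$ this is simply $\x_0^\star \in \mathcal{X}_T = \bar{\mathcal{P}}_0$. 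The containment $\x_i^\star \in \bar{\mathcal{R}}_i$ follows because the pair $(\x_i^\star, \mathbf{u}_i^\star)$ satisfies the backreachable-set constraints \cref{eqn:backreachable:lp_constr}: $\mathbf{u}_i^\star = \pi(\x_i^\star) \in \mathcal{U}$ by the control limits, $\mathbf{A}\x_i^\star + \mathbf{B}\mathbf{u}_i^\star + \mathbf{c} = \x_{i+1}^\star \in \bar{\mathcal{P}}_{i+1}$ as just argued, and $\x_i^\star \in \mathcal{X}$ by the standing assumption that $\mathcal{X}$ contains all physically realizable previous states; since $\bar{\mathcal{R}}_i$ is by \cref{eqn:backreachable:lp_obj} the coordinatewise min/max hull over exactly this feasible set, $\x_i^\star$ lies inside it. Finally, the relaxation constraint $\pi^L_i(\x_i^\star) \leq \mathbf{u}_i^\star \leq \pi^U_i(\x_i^\star)$ is precisely the conclusion of \cref{thm:crown_particular_x} applied on the hyper-rectangle $\bar{\mathcal{R}}_i$ over which the affine bounds $\pi^L_i(\x) = \CROWNAl\x + \CROWNbl$, $\pi^U_i(\x) = \CROWNAu\x + \CROWNbu$ were constructed — legitimate exactly because we have just shown $\x_i^\star \in \bar{\mathcal{R}}_i$.

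The one delicate point — and the step I expect to require the most care — is the circular-looking dependency between $\bar{\mathcal{R}}_i$ and $\bar{\mathcal{P}}_{i+1}$: the CROWN bounds used in the BP-set LP are sound only over $\bar{\mathcal{R}}_i$, yet $\bar{\mathcal{R}}_i$ is itself computed using the already over-approximated $\bar{\mathcal{P}}_{i+1}$ rather than the true $\mathcal{P}_{i+1}$. The backward induction is what untangles this: assuming the result for all later steps supplies $\mathcal{P}_{i+1} \subseteq \bar{\mathcal{P}}_{i+1}$, which is exactly what is needed to place $\x_i^\star$ in $\bar{\mathcal{R}}_i$ and hence license the CROWN bound at step $i$. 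I would also make explicit that the redundant constraints $\x_i \in \bar{\mathcal{R}}_i$ for $i > t$ in \cref{eqn:hybrid:lp_constr} do no harm, since the true trajectory satisfies them anyway by the same argument; everything else (the min/max-over-a-superset inequality and the coordinatewise bookkeeping with $\mathbf{e}_k$) is routine.
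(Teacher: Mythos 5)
Your proposal is correct and follows essentially the same route as the paper's proof: the appendix argues the $t=-1$ and $t=-2$ cases via the chain of set containments (true trajectory $\Rightarrow$ CROWN-relaxed feasible set $\Rightarrow$ LP bounding box) and then iterates backward, which is exactly your backward induction with the true trajectory exhibited as a feasible point of $\mathcal{F}_{\bar{\mathcal{P}}_t}$. Your explicit treatment of the $\bar{\mathcal{R}}_i$ / $\bar{\mathcal{P}}_{i+1}$ dependency and the constraint-by-constraint feasibility check is a cleaner write-up of the same argument, not a different one.
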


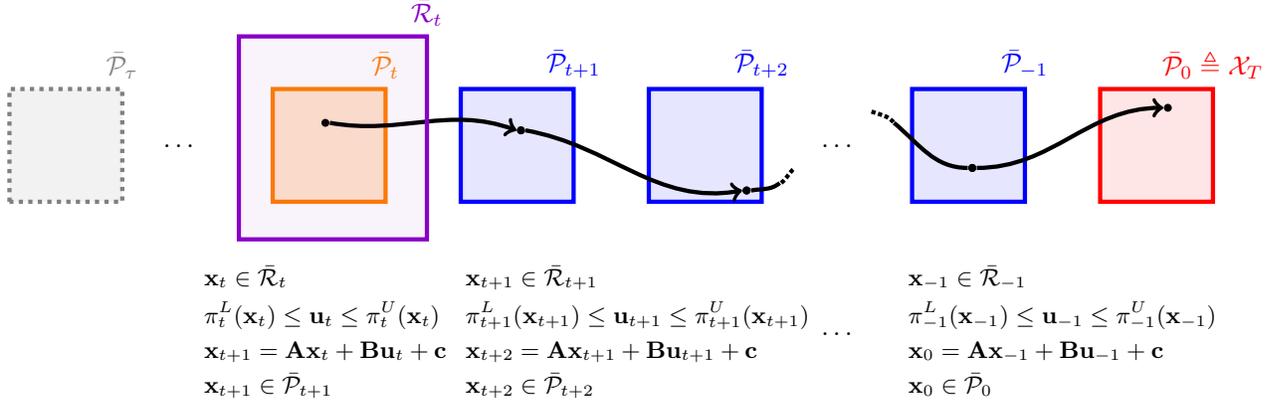
\begin{figure*}[t]
\centering
\begin{tikzpicture}

    \draw [red, ultra thick, fill=red, fill opacity=0.1] (11.5,-0.75) rectangle (13,0.75) node[above, fill opacity=1] {$\bar{\mathcal{P}}_{0} \triangleq \mathcal{X}_{T}$};

    \draw [blue, ultra thick, fill=blue, fill opacity=0.1] (9.0,-0.75) rectangle (10.5,0.75) node[above, fill opacity=1] {$\bar{\mathcal{P}}_{-1}$};
    \node[circle,fill=black,inner sep=0pt,minimum size=3pt] (c) at (9.8,-0.3) {};
    \node[circle,fill=black,inner sep=0pt,minimum size=3pt] (d) at (12.4,0.5) {};
    \draw [->,black, ultra thick] (c.east) to [out=360,in=180] (d.west);
    
    \node[black] (elipsis) at (8.0, -2.5) {$\ldots$};
    \node[black] (t0_constraints) at (11, -2.5) {\footnotesize $ \begin{aligned} & \mathbf{x}_{-1} \in \bar{\mathcal{R}}_{-1} \\ & \pi^L_{-1}(\mathbf{x}_{-1}) \leq \mathbf{u}_{-1} \leq \pi^U_{-1}(\mathbf{x}_{-1}) \\ & \mathbf{x}_{0} = \mathbf{A} \mathbf{x}_{-1} + \mathbf{B} \mathbf{u}_{-1} + \mathbf{c} \\ & \mathbf{x}_{0} \in \bar{\mathcal{P}}_{0} \end{aligned} $};

    \node[black] (elipsis) at (8.0, 0) {$\ldots$};
    \node[circle,fill=none,inner sep=0pt,minimum size=3pt] (c) at (8.4,0.45) {};
    \node[circle,fill=none,inner sep=0pt,minimum size=3pt] (d) at (8.82,0.3) {};
    \draw [densely dotted,black, ultra thick] (c.east) to [out=340,in=130] (d.west);
    \node[circle,fill=none,inner sep=0pt,minimum size=3pt] (c) at (8.7,0.3) {};
    \node[circle,fill=black,inner sep=0pt,minimum size=3pt] (d) at (9.8,-0.3) {};
    \draw [-,black, ultra thick] (c.east) to [out=320,in=180] (d.west);
    
    \draw [blue, ultra thick, fill=blue, fill opacity=0.1] (5.5,-0.75) rectangle (7,0.75) node[above, fill opacity=1] {$\bar{\mathcal{P}}_{t+2}$};
    \node[circle,fill=black,inner sep=0pt,minimum size=3pt] (c) at (6.8,-0.6) {};
    \node[circle,fill=none,inner sep=0pt,minimum size=3pt] (d) at (7.3,-0.5) {};
    \draw [-,black, ultra thick] (c.east) to [out=10,in=220] (d.west);
    \node[circle,fill=none,inner sep=0pt,minimum size=3pt] (c) at (7.2,-0.5) {};
    \node[circle,fill=none,inner sep=0pt,minimum size=3pt] (d) at (7.5,-0.3) {};
    \draw [densely dotted,black, ultra thick] (c.east) to [out=40,in=220] (d.west);
    
    \draw [blue, ultra thick, fill=blue, fill opacity=0.1] (3.,-0.75) rectangle (4.5,0.75) node[above, fill opacity=1] {$\bar{\mathcal{P}}_{t+1}$};
    \node[circle,fill=black,inner sep=0pt,minimum size=3pt] (c) at (3.8,0.2) {};
    \node[circle,fill=black,inner sep=0pt,minimum size=3pt] (d) at (6.8,-0.6) {};
    \draw [->,black, ultra thick] (c.east) to [out=350,in=190] (d.west);
    
    \draw [orange, ultra thick, fill=orange, fill opacity=0.2] (0.5,-0.75) rectangle (2.,0.75) node[above, fill opacity=1] {$\bar{\mathcal{P}}_{t}$};
    \node[circle,fill=black,inner sep=0pt,minimum size=3pt] (c) at (1.2,0.3) {};
    \node[circle,fill=black,inner sep=0pt,minimum size=3pt] (d) at (3.8,0.2) {};
    \draw [->,black, ultra thick] (c.east) to [out=350,in=160] (d.west);
    
    \node[black] (t_constraints) at (1.2, -2.5) {\footnotesize $ \begin{aligned} & \mathbf{x}_{t} \in \bar{\mathcal{R}}_{t} \\ & \pi^L_t(\mathbf{x}_t) \leq \mathbf{u}_{t} \leq \pi^U_t(\mathbf{x}_t) \\ & \mathbf{x}_{t+1} = \mathbf{A} \mathbf{x}_t + \mathbf{B} \mathbf{u}_t + \mathbf{c} \\ & \mathbf{x}_{t+1} \in \bar{\mathcal{P}}_{t+1}  \end{aligned} $};
    
    \node[black] (t_constraints) at (5.35, -2.5) {\footnotesize $ \begin{aligned} & \mathbf{x}_{t+1} \in \bar{\mathcal{R}}_{t+1} \\ & \pi^L_{t+1}(\mathbf{x}_{t+1}) \leq \mathbf{u}_{t+1} \leq \pi^U_{t+1}(\mathbf{x}_{t+1}) \\ & \mathbf{x}_{t+2} = \mathbf{A} \mathbf{x}_{t+1} + \mathbf{B} \mathbf{u}_{t+1} + \mathbf{c} \\ & \mathbf{x}_{t+2} \in \bar{\mathcal{P}}_{t+2}  \end{aligned} $};

    \draw [purple, ultra thick, fill=purple, fill opacity=0.05] (0.05,-1.25) rectangle (2.55,1.45) node[above, fill opacity=1] {$\bar{\mathcal{R}}_{t}$};
    
    \node[black] (elipsis2) at (-0.75, 0) {$\ldots$};
    
    \draw [gray, dotted, ultra thick, fill=gray, fill opacity=0.1] (-3,-0.75) rectangle (-1.5,0.75) node[above, fill opacity=1] {$\bar{\mathcal{P}}_{\tau}$};

\end{tikzpicture}
\caption{
Constraints implemented at time $t \in \mathcal{T}$ to find $\bar{\mathcal{P}}_t$ within $\bar{\mathcal{R}}_t$ by ensuring $\mathbf{x}_t$ leads to a sequence of states that can reach $\mathcal{X}_T$ with the application of control values that satisfy the relaxed neural network at each time step.
}
\label{fig:hybrid_constraints}
\vspace*{-0.12in}
\end{figure*}

\begin{algorithm}[t]
 \caption{\hybrid{}}
 \begin{algorithmic}[1]
 \setcounter{ALC@unique}{0}
 \renewcommand{\algorithmicrequire}{\textbf{Input:}}
 \renewcommand{\algorithmicensure}{\textbf{Output:}}
 \REQUIRE target state set $\mathcal{X}_T$, trained NN control policy $\pi$, time horizon $\tau$, partition parameter $r$, minimum partition element volume $v_m$
 \ENSURE BP set approximations $\bar{\mathcal{P}}_{-\tau:0}$
    \STATE $\bar{\mathcal{P}}_{-\tau:0} \leftarrow \emptyset$
    \STATE $\mathbf{\Omega}_{-\tau:-1} \leftarrow \emptyset$
    \STATE $\bar{\mathcal{P}}_{0} \leftarrow \mathcal{X}_T$
    \FOR{$t$ in $\{-1, \ldots, -\tau\}$} \label{alg:hybridBackprojection:timestep_for_loop}
        \STATE $\bar{\mathcal{R}}_{t}\! \leftarrow\! \mathrm{backreach}(\bar{\mathcal{P}}_{t\text{+}1}, \mathcal{U}, \mathcal{X})$
        \STATE $\mathcal{S} \leftarrow \mathrm{partition}(\bar{\mathcal{R}}_{t}, \mathcal{X}_T, \pi, t, r, v_m)$
        \FOR{$\mathfrak{s}$ in $\mathcal{S}$}
            \STATE $[\ubar{\ubar{\mathbf{x}}}_{t}, \bar{\bar{\mathbf{x}}}_{t}] \leftarrow \mathfrak{s}$
            \STATE $\CROWNAu, \CROWNAl, \CROWNbu, \CROWNbl \leftarrow \mathrm{CROWN}(\pi, [\ubar{\ubar{\mathbf{x}}}_{t}, \bar{\bar{\mathbf{x}}}_{t}])$
            \FOR{$\istate \in n_x$}
                \STATE $\bar{\mathbf{x}}_{t;k} \leftarrow \! \mathrm{LpMax}(\bar{\mathcal{P}}_{t+1:0}, \mathbf{\Omega}_{t:\text{-}1},\CROWNAu, \CROWNAl, \CROWNbu, \CROWNbl)$
                \STATE $\ubar{\mathbf{x}}_{t;k} \leftarrow \! \mathrm{LpMin}(\bar{\mathcal{P}}_{t+1:0},\ \mathbf{\Omega}_{t:\text{-}1},\CROWNAu, \CROWNAl, \CROWNbu, \CROWNbl)$
            \ENDFOR
            \STATE $\mathcal{A} \leftarrow \{\mathbf{x}\ \lvert\ \forall \istate \in n_x,\ \ubar{\mathbf{x}}_{t;k} \leq \mathbf{x} \leq \bar{\mathbf{x}}_{t;k}\}$
            \STATE $\bar{\mathcal{P}}_{t} \leftarrow \bar{\mathcal{P}}_{t} \cup \mathcal{A}$ \label{alg:hybridBackprojection:union}
            \STATE $\bar{\mathcal{P}}_{t} = [\ubar{\mathbf{x}}'_{t}, \bar{\mathbf{x}}'_{t}] \leftarrow \text{boundRectangle}(\bar{\mathcal{P}}_{t})$
        \ENDFOR
        \STATE $\mathbf{\Omega}_{t} = [\bar{\CROWNAu}, \bar{\CROWNAl}, \bar{\CROWNbu}, \bar{\CROWNbl}] \leftarrow \mathrm{CROWN}(\pi, [\ubar{\mathbf{x}}'_{t}, \bar{\mathbf{x}}'_{t}])$
    \ENDFOR
 \RETURN $\bar{\mathcal{P}}_{-\tau:0}$ \label{alg:hybridBackprojection:return}
 \end{algorithmic}\label{alg:hybridBackprojection}
\end{algorithm}

Due to the use of linear relaxation of the NN and axis-aligned hyper-rectangular over-approximations, the sets produced using this approach may be conservative. 
In response to this, \cite{rober2022backward} introduced the idea of partitioning the backwards reachable sets. 
Here, we combine \basic{} and \nstep{} from \cite{rober2022backward} into an algorithm that we call \hybrid{} which is shown in \cref{alg:hybridBackprojection}.
\hybrid{} implements the constraints used by \nstep{} in the same framework as \basic{}, i.e., whereas \nstep{} first calculates one-step BP over-approximations using \basic{} and adds an additional refinement step, \hybrid{} generates refined BPs during the initial calculation.
% \Cref{alg:hybridBackprojection} implements the procedure described above. 
% The algorithm is a combination of \basic{} and \nstep{} \cite{rober2022backward} that implements the constraints used by \nstep{} in the same framework as \basic{}, i.e., whereas \nstep{} first calculates one-step BP over-approximations using \basic{} and adds an additional refinement step, \hybrid{} generates refined BPs during the initial calculation.

\subsection{Efficient Partitioning for Linear Dynamics}
\label{sec:approach:partition}
% In the preceding sections we have generally treated the backreachable set as a single unit at each timestep, relaxing the NN over $\bar{\mathcal{R}}_{t}$ in its entirety.
% However, in practice, this can yield overly conservative results when taken over a large region of the state space because the policy, which can be highly nonlinear, cannot be tightly captured by affine bounds from \cref{thm:crown_particular_x}.
If the backreachable set is treated as a single unit at each timestep, relaxing the NN over $\bar{\mathcal{R}}_{t}$ in its entirety can yield overly conservative results when taken over a large region of the state space because the policy, which can be highly nonlinear, cannot be tightly captured by affine bounds from \cref{thm:crown_particular_x}.
\cref{alg:guided_partition} specifies our proposed approach for partitioning $\bar{\mathcal{R}}$ into a set of $r$ finer elements $\mathcal{S}$.
\begin{algorithm}[t]
 \caption{Guided Partitioning}
 \begin{algorithmic}[1]
 \setcounter{ALC@unique}{0}
 \renewcommand{\algorithmicrequire}{\textbf{Input:}}
 \renewcommand{\algorithmicensure}{\textbf{Output:}}
 \REQUIRE backreachable set $\bar{\mathcal{R}}$, target set $\mathcal{X}_T$, trained NN control policy $\pi$, time to target $t$, number partitions $r$, minimum element volume $v_m$
 \ENSURE list of backreachable set elements $\mathcal{S}$
 
 \STATE $\mathcal{S} \leftarrow \mathcal{R}$
 \STATE $\mathcal{Q} \leftarrow \mathrm{estimateBpSet}(\mathcal{X}_T, t)$  \label{alg:guided_partition:estimate}
 
 \WHILE{$\mathcal{S}[\mathrm{end}]$.shouldBeSplit and length($\mathcal{S}) < r$}
    \STATE $\mathfrak{s} \leftarrow \mathcal{S}.\mathrm{pop()}$
    \STATE $\mathfrak{s}_1, \mathfrak{s}_2 \leftarrow \mathrm{split}(\mathfrak{s}, \mathcal{Q})$
    \FOR{$\mathfrak{s}_i$ in $[\mathfrak{s}_1, \mathfrak{s}_2]$}
        \STATE $\mathfrak{s}_i.\mathrm{value} = \mathrm{L1norm}(\mathfrak{s}_i, \mathcal{Q})$
        \IF{satisfiable($\mathfrak{s}_i, \mathcal{F}_{\bar{\mathcal{P}}_t}$) and $\mathfrak{s}_i$.volume $> v_{m}$}
            \STATE $\mathfrak{s}_i.\mathrm{shouldBeSplit} \leftarrow$ True
        \ELSE
            \STATE $\mathfrak{s}_i.\mathrm{shouldBeSplit} \leftarrow$ False
            \STATE $\mathfrak{s}_i.\mathrm{value} = 0$
        \ENDIF
        \STATE $\mathcal{S}.\mathrm{insert}(\mathfrak{s}_i)$
    \ENDFOR

 \ENDWHILE
 
 \RETURN $\mathcal{S}$ \label{alg:guided_partition:return}
 \end{algorithmic}\label{alg:guided_partition}
\end{algorithm}

We first initiate the list $\mathcal{S}$ with the full backreachable set $\bar{\mathcal{R}}$. \cref{alg:guided_partition:estimate} then generates an estimate $\mathcal{Q}$ of the true BP set by randomly sampling points over $\bar{\mathcal{R}}$ and providing a rectangular bound of the points that reach the target set. 
%Note that while the estimation step is not required, it is inexpensive and helps the partitioner determine which areas need to be partitioned more finely. 
Note that because $\mathcal{Q}$ is an under-approximation of $\mathcal{P}$, there is no reason to partition within $\mathcal{Q}$ because $\mathcal{Q} \subset \bar{\mathcal{P}}$ regardless of how tight our NN relaxation is.
Next, the last element (at first, the only element is $\bar{\mathcal{R}}$) is popped from the end of $\mathcal{S}$ and split such that: if it contains samples from \cref{alg:guided_partition:estimate}, all the samples are segregated into only one of the resulting elements.
If no samples are in the element, it is simply split in half.
The two new elements are then added back to the list and ordered by their distance from $\mathcal{Q}$.
This process is repeated until the partitioning budget is met, or if the last element in $\mathcal{S}$ should not be split (either because it is below the minimum volume $v_m$ or does not contain any $\mathbf{x}$ that satisfies \cref{eqn:hybrid:lp_constr}.
\cref{fig:partitioning_step} shows the beginning of an execution of \cref{alg:guided_partition} with a visual illustration of how the {\tt split} and {\tt L1norm} functions work.
\begin{figure}[t]
    \centering
%    \captionsetup{aboveskip=-12pt,belowskip=-12pt}
    \includegraphics[width=1\columnwidth]{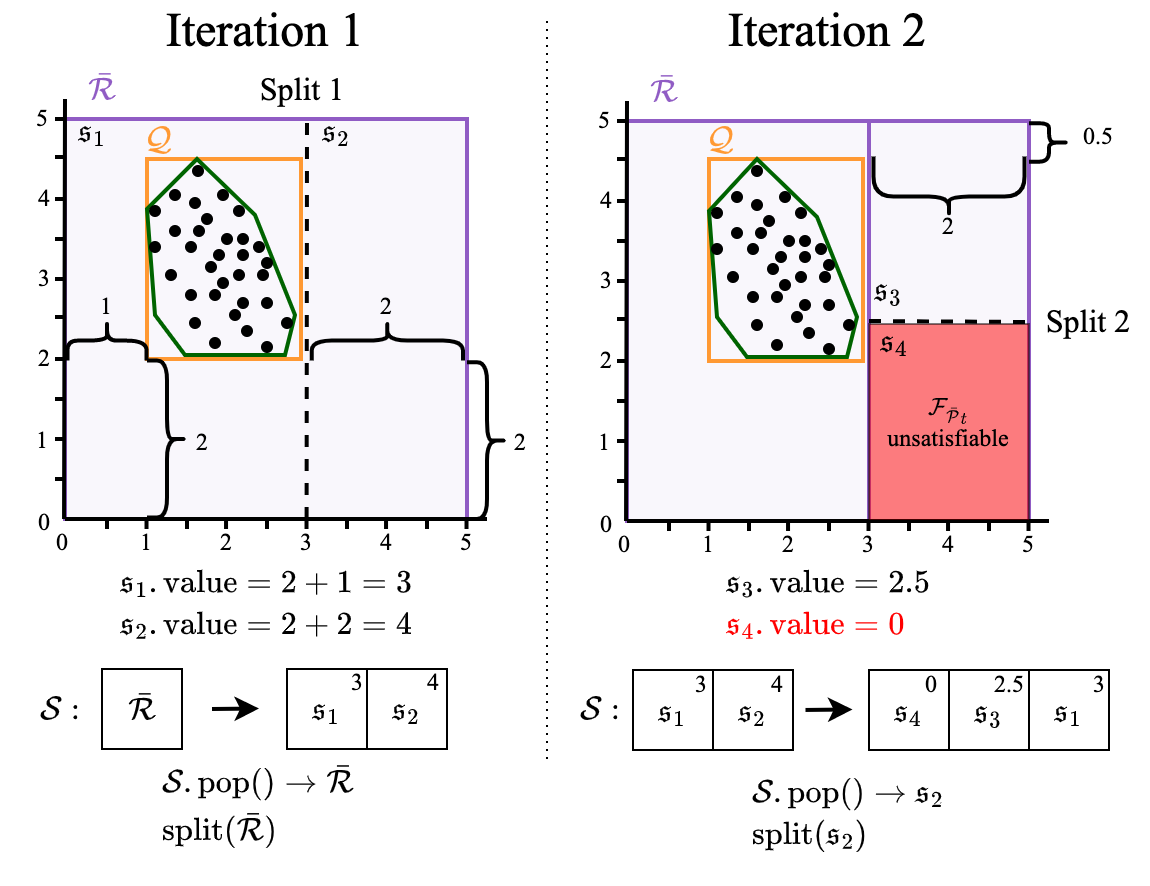}
    \caption{Two iterations of guided partitioning. In the first iteration, a split is made such that all the Monte-Carlo samples (black) that under-approximate $\mathcal{P}$ are contained in $\mathfrak{s}_1$. In the second iteration, $\mathfrak{s}_2$ is split because it is furthest from the MC sample bound $\mathcal{Q}$, and one resulting element $\mathfrak{s}_4$ cannot satisfy $\mathcal{F_{\bar{P}}}_t$. The update made to the element queue $\mathcal{S}$ is displayed for each iteration.}
    \label{fig:partitioning_step}
    \vspace*{-0.18in}
\end{figure}

Finally, note that \cref{alg:hybridBackprojection} requires solving two LPs for each state.
From this, it is apparent that the number of LPs solved $N_{LP}$ can be written as $N_{LP} = 2 n_x r \tau$, i.e., two LPs per state for each partition over the time horizon $\tau$.
However, as $\bar{\mathcal{P}}_t$ is built by adding each element's contribution $\mathcal{A}$ (\cref{alg:hybridBackprojection:union}), there may be LPs that cannot possibly change $\bar{\mathcal{P}}_t$ because $\ubar{\ubar{\mathbf{x}}}_t \in \bar{\mathcal{P}}_t$ or $\bar{\bar{\mathbf{x}}}_t \in \bar{\mathcal{P}}_t$. 
In such cases, we can skip the irrelevant LP, thus reducing computational cost as will be shown in \cref{sec:results}, where we refer to this strategy as \textit{SkipLP}.

\subsection{Nonlinear Dynamics: Over-Approximation of BP Sets}
\Cref{alg:NLBackprojection} overviews the steps for computing backwards reachable sets when the problem dynamics are nonlinear. 
\BRNL{} describes a \emph{concrete} approach to multi-step BP set computation where the constraints corresponding to only 1 timestep are encoded at a time. % A \emph{symbolic} or \emph{hybrid-symbolic}~\cite{sidrane2021overt} approach is also possible and is described in \cref{sec:hybridsym}.
A \emph{symbolic} or \emph{hybrid-symbolic}~\cite{sidrane2021overt} approach is also possible and is described in the next section.
%because a fully concrete approach produced tight backwards reachable sets and hybrid-symbolic approaches were not computationally tractable.
First, backreachable sets are computed.
To make the optimization problem for computing the backreachable set tractable for nonlinear dynamics, we over-approximate the dynamics using OVERT \cite{sidrane2021overt} to obtain $\hat p$. 
With this over-approximation, the first constraint in \cref{eqn:backreachable:constr_gen} becomes a set of mixed integer constraints, and we can solve the optimization problem in \cref{eqn:backreachable:obj_gen} using a MILP solver to obtain an over-approximation of the backreachable set.

Given the backreachable set, we recompute the over-approximated dynamics model $\hat p$ over this smaller domain to minimize over-approximation error and decrease the number of total constraints in the optimization problem. By combining these constraints with a mixed integer encoding of the control network $\pi$ using the techniques of \citet{tjeng2017evaluating}, we obtain a set of mixed integer constraints corresponding to the abstract closed-loop system $\hat f$. The optimization problem in \cref{eqn:bp_opt_gen} then becomes a MILP and can be solved using a MILP solver to obtain an over-approximated BP set.

\begin{algorithm}[t]
 \caption{\BRNL}
 \begin{algorithmic}[1]
 \setcounter{ALC@unique}{0}
 \renewcommand{\algorithmicrequire}{\textbf{Input:}}
 \renewcommand{\algorithmicensure}{\textbf{Output:}}
 \REQUIRE target state set $\mathcal{X}_T$, trained NN control policy $\pi$, time horizon $\tau$, dynamics model $p$, $\mathrm{OVERT}$ tolerance $\epsilon$
 \ENSURE BP set approximations $\bar{\mathcal{P}}_{-\tau:0}$
    \STATE $\bar{\mathcal{P}}_{-\tau:0} \leftarrow \emptyset$
    \STATE $\bar{\mathcal{P}}_{0} \leftarrow \mathcal{X}_T$
    \STATE $\hat p_\mathcal{X} \leftarrow \mathrm{OVERT}(p, \mathcal X, \mathcal U, \epsilon)$
    \FOR{$t$ in $\{-1, \ldots, -\tau\}$} \label{alg:NLBackprojection:timestep_for_loop}
        %\STATE $\hat p \leftarrow \mathrm{OVERT}(p, \mathcal X, \mathcal U, \epsilon)$
        \STATE $\bar{\mathcal{R}}_{t}\! \leftarrow\! \mathrm{backreach}(\bar{\mathcal{P}}_{t\text{+}1}, \mathcal{U}, \mathcal{X}, \hat p_\mathcal{X})$
        \STATE $\hat p \leftarrow \mathrm{OVERT}(p, \bar{\mathcal{R}}_{t}, \mathcal U, \epsilon)$
        \STATE $\hat f \leftarrow \mathrm{MIP}(\hat p, \pi)$
        \FOR{$\istate \in n_x$}
            \STATE $\bar{\mathbf{x}}_{t;k} \leftarrow \! \mathrm{MILPMax}(\bar{\mathcal{P}}_{t+1}, \hat f)$
            \STATE $\ubar{\mathbf{x}}_{t;k} \leftarrow \! \mathrm{MILPMin}(\bar{\mathcal{P}}_{t+1}, \hat f)$
        \ENDFOR
        \STATE $\bar{\mathcal{P}}_{t} \leftarrow \{\mathbf{x}\ \lvert\ \forall \istate \in n_x,\ \ubar{\mathbf{x}}_{t;k} \leq \mathbf{x} \leq \bar{\mathbf{x}}_{t;k}\}$
    \ENDFOR
 \RETURN $\bar{\mathcal{P}}_{-\tau:0}$ 
 \end{algorithmic}\label{alg:NLBackprojection}
\end{algorithm}

\subsection{Symbolic and Hybrid-Symbolic Methods}\label{sec:hybridsym}
The \textit{concrete} reachability method described in \cref{alg:NLBackprojection} compounds over-approximation error over time by creating a hyperrectangular over-approximation for each time step. A \textit{symbolic} approach can alleviate this compounding. To perform symbolic reachability for systems with nonlinear dynamics, we write a symbolic equivalent to \cref{eqn:bp_gen} as follows
\begin{equation}
\mathcal{F}_{\bar{\mathcal{P}}_t} \! \triangleq \! \left\{
\! \mathbf{x}_{i} \left| \ 
\begin{aligned}
& \x_i \in \bar{\mathcal{R}}_i \\
& \x_{i+1} \bowtie \hat{f}(\mathbf{x}_i; \pi) \\
& \x_{i+1} \in \bar{\mathcal{P}}_{i+1} \\ 
& i\in \{t, \ldots, -1\}
\end{aligned} \; \;  \right.\kern-\nulldelimiterspace 
\right\}.
\label{eqn:sym:constr}
\end{equation}
Symbolic reachability provides tighter bounds on the BP set at the expense of a higher computational cost. For this reason, it is desirable to balance between tightness and computational efficiency. We therefore take a \textit{hybrid-symbolic} approach to nonlinear backwards reachability similar to the approach described in \citet{sidrane2021overt}. In this approach, we reduce compounded over-approximation error by  computing a symbolic set after computing a given number of concrete sets, and repeating this process iteratively.
% Concrete steps can accumulate overapproximation error but are more computationally efficient
% Symbolic can be done with constraints
% More computation required
% Might want balance, where only run symbolic every few steps
% Hybrid approach

\subsection{Sources of Over-Approximation Error}
To produce useful estimates for safety analysis and limit over-conservative bounds, it is important to understand and mitigate the effect of over-approximation error on the estimated BP sets. 
% Both \cref{alg:hybridBackprojection} and \cref{alg:NLBackprojection} have two sources of over-approximation error, although they differ in their first source. 
For systems with linear dynamics (\cref{alg:hybridBackprojection}), the first source of over-approximation error is the linear relaxation of the NN using \cref{thm:crown_particular_x} to be used as constraints in the LP. 
Partitioning the backreachable sets, as described in \cref{sec:approach:partition}, can reduce this error. 
For systems with nonlinear dynamics, on the other hand, since we solve a MILP rather than LP to compute BP sets (\cref{alg:NLBackprojection}), we can encode the network exactly using mixed-integer constraints \cite{tjeng2017evaluating}. 
Therefore, the network encoding does not contribute to over-approximation error in the nonlinear case. 
Instead, the first source of over-approximation error in \cref{alg:NLBackprojection} is in the overapproximation of the nonlinear dynamics using piecewise-linear functions. 
To minimize this contribution, the piecewise-linear functions are selected to produce optimally tight bounds given a user-specified error tolerance.

The second source of over-approximation error is present in both algorithms and results from \cref{eqn:bp_opt_gen}, which produces axis-aligned, hyper-rectangular BP sets. 
The use of hyper-rectanglar sets can result in large over-approximation errors if the true BP sets are not well-represented by axis-aligned hyper-rectangles. 
This type of error may become especially apparent when using a concrete approach as in \cref{alg:NLBackprojection}.
%since the hyper-rectangular BP sets are made concrete at each time step and used as the target set for the next time step. 
Concretizing the BP set at each time step results in a compounding of the hyper-rectangular over-approximation error. 
In contrast, a symbolic approach as in \cref{alg:hybridBackprojection} encodes the dynamics constraints for all time steps in the optimization and only solves for the hyper-rectangular bounds at the time step of interest. 
Only concretizing the final BP set alleviates the compounding error at the expense of a more complex optimization problem.
Partitioning the backwards reachable sets as in \hybrid{} also reduces the over-approximation error due to the use of hyper-rectangular sets.

% Two main sources of overapproximation in each case
% For the linear case, overapproximation comes from the crown bounds
% For the nonlinear case, comes from OVERT bounds (no overappox of network)
% For both cases, representing as hyperrectangles causes overapprox
    % Symbolic helps

%!TEX root=main.tex

\section{Numerical Results}
\label{sec:results}
In this section, we use numerical experiments to demonstrate \hybrid{} and \BRNL{} and compare their properties to the algorithms proposed in \cite{rober2022backward} and to the forward reachability tool proposed in \cite{everett2021reachability}.
First, we show how \hybrid{} improves upon \basic{} and \nstep{} from \cite{rober2022backward} and demonstrate how the guided partitioning strategy proposed in \cref{alg:guided_partition} can outperform a uniform partitioning strategy.
We then demonstrate how \hybrid{} can be used to verify safety in a collision avoidance scenario that causes the forward reachability tool Reach-LP \cite{everett2021reachability} to fail. Next, we use \BRNL{} to verify safety of nonlinear version of the same collision avoidance scenario.
Finally, we conduct an ablation study to show how \hybrid{} improves the work done in \cite{rober2022backward}, enabling more efficient verification of an obstacle avoidance scenario for a 6D linearized quadrotor and discuss how the proposed algorithms scale with system dimension.

All numerical results for systems with linear dynamics were collected with the LP solver {\tt cvxpy} \cite{diamond2016cvxpy} on a machine running Ubuntu 20.04 with an i7-6700K CPU and 32 GB of RAM. For the experiments with nonlinear dynamics, results were obtained using the Gurobi \cite{gurobi} MILP solver on single Intel Core i7 4.20 GHz processor with 32 GB of RAM.

\subsection{Double Integrator}

Consider the discrete-time double integrator model \cite{hu2020reach}
\begin{equation}
    \mathbf{x}_{t+1} =
    \underbrace{
    \begin{bmatrix}
    1 & 1 \\
    0 & 1
    \end{bmatrix}}_{\mathbf{A}} \mathbf{x}_t +
    \underbrace{
    \begin{bmatrix}
    0.5 \\ 1
    \end{bmatrix}}_{\mathbf{B}} \mathbf{u}_t
\end{equation}
with $\mathbf{c}=0$, and discrete sampling time $t_s=1$s. 
The NN controller (identical to the double integrator controller used in \cite{everett2021reachability}) has $[5,5]$ neurons, ReLU activations and was trained with state-action pairs generated by an MPC controller.
\subsubsection{Update vs. Previous Approach}
\cref{fig:double_integrator_n_vs_one}  compares \basic{} (orange) and \nstep{} (blue) from \cite{rober2022backward} to \hybrid{} (dashed). 
As shown, each algorithm collects some approximation error
\begin{equation}
    \mathrm{error} = \frac{A_{BPE}-A_\mathrm{true}}{A_\mathrm{true}},
    \label{eqn:BP_estimate_error}
\end{equation}
where $A_\mathrm{true}$ denotes the area of the tightest rectangular bound of the true BP set (dark green in~\cref{fig:double_integrator_n_vs_one:BP_sets}), calculated using Monte Carlo simulations, and $A_\mathrm{BPE}$ denotes the area of the BP estimate.
As shown by \cref{fig:double_integrator_n_vs_one} and \cref{tab:alg_error_comparison}, \hybrid{} falls between \basic{} and \nstep{} in terms of final step error and computation time. However, in addition to guided partitioning and the SkipLP strategy described in \cref{sec:approach:partition}, we also improved computational efficiency from \cite{rober2022backward} by implementing disciplined parametric programming (DPP) \cite{diamond2016cvxpy} which allows {\tt cvxpy} to solve large amounts of LPs more efficiently.
With DPP, the time discrepancy between \basic{} and \hybrid{} disappears as shown by the third column of  \cref{tab:alg_error_comparison}, thus leading to \hybrid{} as the go-to strategy.
\begin{figure}[t]
\setlength\belowcaptionskip{-0.1\baselineskip}
\centering
\captionsetup[subfigure]{aboveskip=-1pt,belowskip=-1pt}
    \begin{subfigure}[t]{\columnwidth}
        \begin{tikzpicture}[fill=white]
            \node[anchor=south west,inner sep=0] (image) at (0,0) {\includegraphics[width=1\columnwidth]{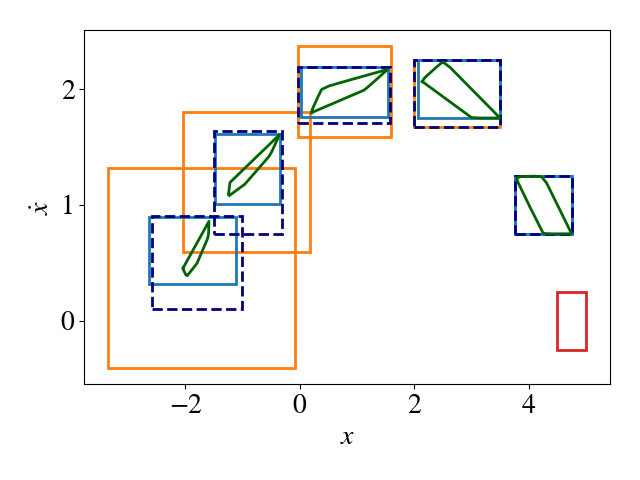}};
            \begin{scope}[x={(image.south east)},y={(image.north west)}]
                \node[] at (0.85,.23) {{\small Target Set}};
            \end{scope}
        \end{tikzpicture}
        \caption{BP set estimates and true BP convex hulls (dark green)}
        \label{fig:double_integrator_n_vs_one:BP_sets}
    \end{subfigure}
    \begin{subfigure}[t]{\columnwidth} 
        \begin{tikzpicture}[fill=white]
            \node[anchor=south west,inner sep=0] (image) at (0,0) {\includegraphics[width=1\columnwidth]{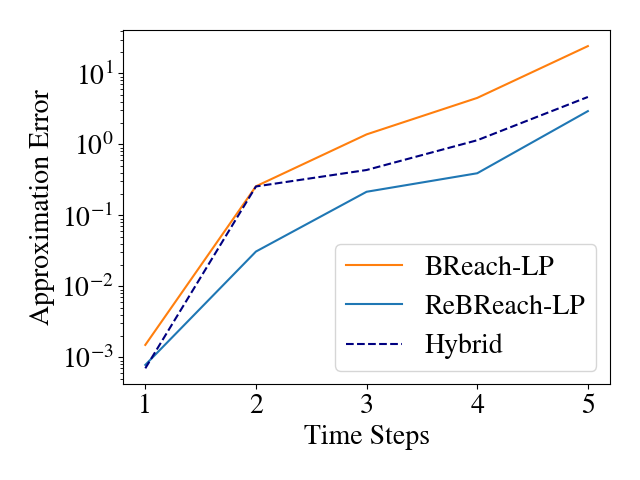}};
            % \begin{scope}[x={(image.south east)},y={(image.north west)}]
            %     \fill [white] (0.57,0.9) rectangle (0.35,0.74);
            % \end{scope}
            % \begin{scope}[x={(image.south east)},y={(image.north west)}]
            %     \node[] at (0.455,0.86) {{\tiny \basic{}}};
            %     % \draw (0.44, 0.5) \node{$t=1$}; 
            % \end{scope}
            % \begin{scope}[x={(image.south east)},y={(image.north west)}]
            %     \node[text width=2cm,align=center] at (0.455,0.78) {{\tiny \nstep{}}};
            %     % \draw (0.44, 0.5) \node{$t=1$}; 
            % \end{scope}
            
        \end{tikzpicture}
        \caption{Approximation error \cref{eqn:BP_estimate_error} calculated at each time step}
        \label{fig:double_integrator_n_vs_one:error_comparison}
    \end{subfigure}
    \caption{BP set estimates for a double integrator extending from the target set (red) calculated with \basic{} (orange) and \nstep{} (blue) and \hybrid{} (dashed).}
    \label{fig:double_integrator_n_vs_one}
\end{figure}

\begin{table}[t]
\centering
\caption{Compare error~\cref{eqn:BP_estimate_error} for BP algorithms.}

\begin{tabular}{@{}lccr@{}}
\toprule
& \multicolumn{2}{c}{Runtime [s]} & \\
\cmidrule(lr){2-3}
 Algorithm   & \cite{rober2022backward} & DPP+SkipLP &  Final Error \\
\midrule
 BReach-LP   & $1.540 \pm 0.062$ & $0.855 \pm 0.014$ &   24.5  \\
 \rowcolor{green} \hybrid{}      & $2.243 \pm 0.043$ & $0.838 \pm 0.040$ &    4.68 \\
 ReBReach-LP & $3.410 \pm 0.036$ & $1.425 \pm 0.043$ &    2.97 \\
\bottomrule
\end{tabular}
\label{tab:alg_error_comparison}
\vspace*{-0.12in}
\end{table}

% \begin{table}[t]
% \centering
% \caption{Replication of \cref{tab:alg_error_comparison:old} with LPs calculated using DPP, leading to a computational speedup.}
% \begin{tabular}{llr}
% \hline
%  Algorithm   & Runtime [s]       &   Final Step \\ & & Error \\
% \hline
%  BReach-LP   & $0.855 \pm 0.014$ &              24.5  \\
%  Hybrid      & $0.838 \pm 0.040$ &               4.68 \\
%  ReBReach-LP & $1.425 \pm 0.043$ &               2.97 \\
% \hline
% \end{tabular}
% \label{tab:alg_error_comparison:new}
% \vspace*{-0.12in}
% \end{table}

% \begin{table}[t]
% \centering
% \caption{Replication of \cref{tab:alg_error_comparison:old} with LPs calculated using DPP, leading to a computational speedup.}
% \begin{tabular}{lllr}
% \hline
%  Algorithm   & Runtime [s]       &   Final Step Error & t \\
% \hline
%  BReach-LP   & $0.855 \pm 0.014$ & 24.50 & e \\
%  Hybrid      & $0.838 \pm 0.040$ &  4.68 & s \\
%  ReBReach-LP & $1.425 \pm 0.043$ &  2.97 & t \\
% \hline
% \end{tabular}
% \label{tab:alg_error_comparison:new}
% \vspace*{-0.12in}
% \end{table}

\subsubsection{Uniform vs. Guided Partitioning}

Next, we use the double integrator model to visualize how the partitioning strategy introduced in \cref{alg:guided_partition} compares to the uniform strategy used in \cite{rober2022backward}. \cref{fig:guided_vs_uniform_sets} first shows how each partitioning strategy performs when budgeted with different numbers of partitions. Comparing \cref{fig:guided_vs_uniform_sets:3x3} and \cref{fig:guided_vs_uniform_sets:9} shows that the guided partitioning strategy is able to converge faster than uniform partitioning. This is because guided partitioning prioritizes regions that are causing the most conservativeness in the BP set estimate. Comparing \cref{fig:guided_vs_uniform_sets:16x16} and \cref{fig:guided_vs_uniform_sets:256} demonstrates how the guided partitioning strategy is able to avoid wasting time partitioning regions of the backreachable set that are far from the true BP set.

\begin{figure*}%
%% first three subfigures
\centering
\begin{subfigure}{.31\textwidth}
  \label{fig:Total_scatter}%
  \begin{tikzpicture}
      \node{\includegraphics[width=1\textwidth]{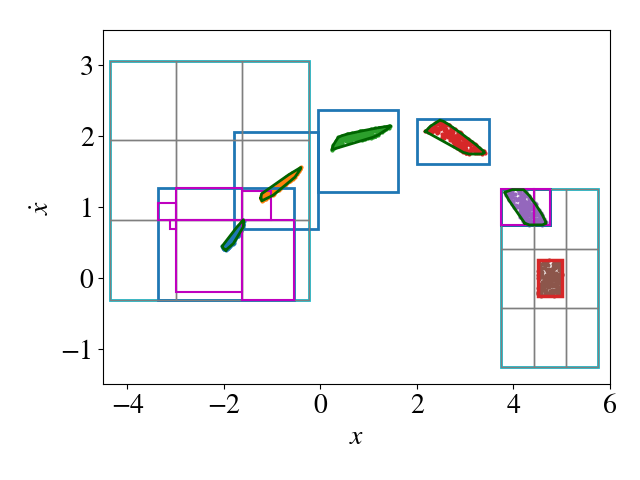}};
  \end{tikzpicture}%
  \caption{3x3 uniform partitioning}\label{fig:guided_vs_uniform_sets:3x3}
\end{subfigure}
\hspace*{\fill}
\begin{subfigure}{.31\textwidth}
  \label{fig:Num_scatter}%
  \begin{tikzpicture}[remember picture]
      \node{\includegraphics[width=1\textwidth]{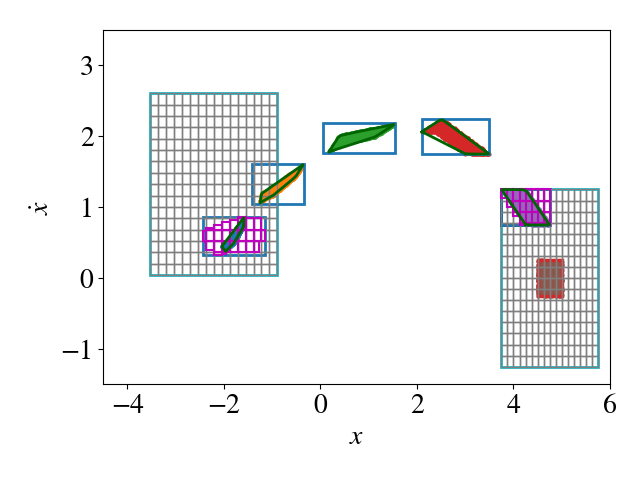}};
    %   \node[] at (0.2,-0.3) {\footnotesize{(c)}};
      \draw [draw=black,dashed,line width=0.2mm] (-1.62,-.42) rectangle ++(1.6,1.12);
    %   \draw [draw=black] (-1.86,-1.25) rectangle ++(4.37,3.05);
  \end{tikzpicture}
  \caption{16x16 uniform partitioning}\label{fig:guided_vs_uniform_sets:4x4}
\end{subfigure}
\hspace*{\fill}
\begin{subfigure}{.31\textwidth}
  \label{fig:Raw_scatter}%
  \begin{tikzpicture}[remember picture]
      \node{\includegraphics[width=1\textwidth]{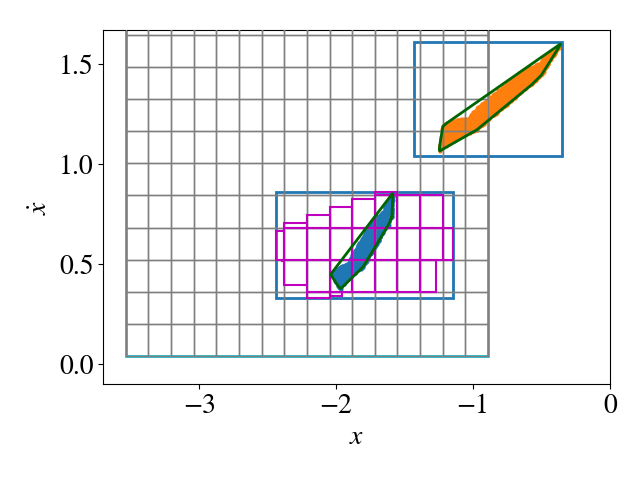}};
  \end{tikzpicture}%
  \caption{16x16 uniform partitioning $\bar{\mathcal{P}}_{\text{-}4}$ and $\bar{\mathcal{P}}_{\text{-}5}$ }\label{fig:guided_vs_uniform_sets:16x16}
\end{subfigure}
%% second group of subfigures
\begin{subfigure}{.31\textwidth}
  \label{fig:Num_Raw_scatter}%
  \begin{tikzpicture}
      \node{\includegraphics[width=1\textwidth]{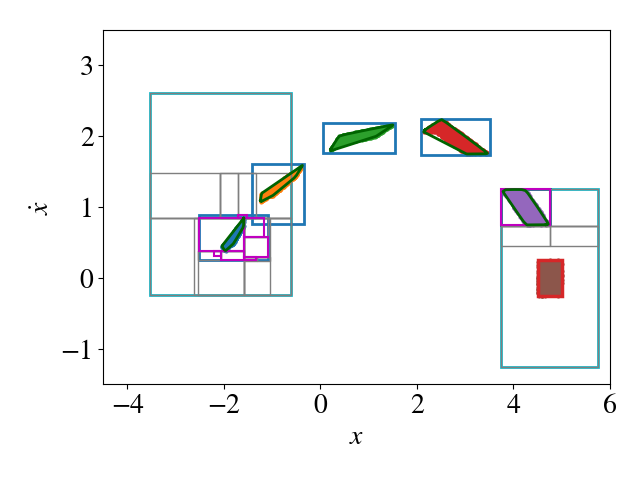}};
  \end{tikzpicture}%
  \caption{9 guided partitions}
  \label{fig:guided_vs_uniform_sets:9}
\end{subfigure}
\hspace*{\fill}
\begin{subfigure}{.31\textwidth}
  \label{fig:Total_Raw_scatter}%
  \begin{tikzpicture}
      \node{\includegraphics[width=1\textwidth]{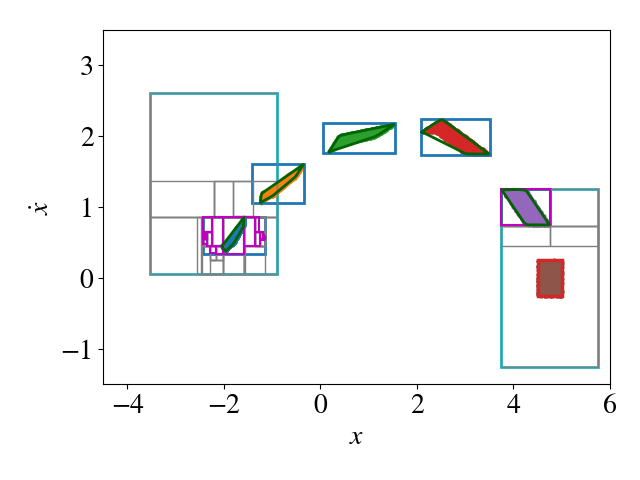}};
    %   \node[] at (0.2,-0.3) {\footnotesize{(f)}};
      \draw [draw=black,dashed,line width=0.2mm] (-1.62,-.42) rectangle ++(1.6,1.12);
  \end{tikzpicture}
  \caption{256 guided partitions}\label{fig:guided_vs_uniform_sets:16}
\end{subfigure}
\hspace*{\fill}
\begin{subfigure}{.31\textwidth}
  \label{fig:Num_Total_scatter}%
  \begin{tikzpicture}
      \node{\includegraphics[width=1\textwidth]{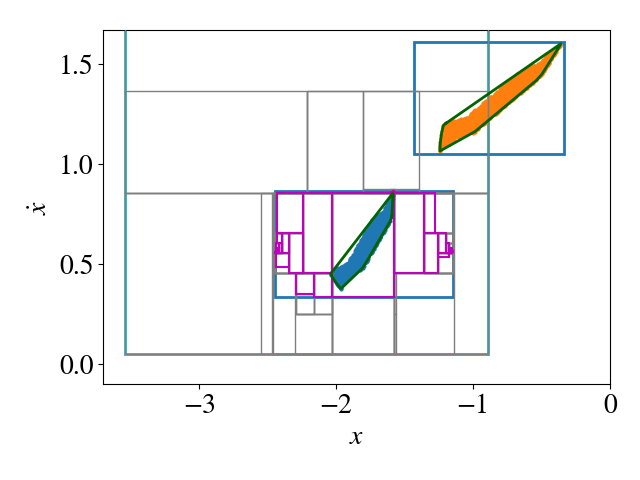}};
  \end{tikzpicture}
  \caption{256 guided partitions $\bar{\mathcal{P}}_{\text{-}4}$ and $\bar{\mathcal{P}}_{\text{-}5}$ }\label{fig:guided_vs_uniform_sets:256}
\end{subfigure}

\begin{tikzpicture}[overlay, remember picture]
    \draw[line width=0.2mm] (.1,7.8) -- (4.36,6.98);
    \draw[line width=0.2mm] (.1,8.92) -- (4.36,10.01);
    \draw[line width=0.2mm] (.1,2.6) -- (4.36,1.78);
    \draw[line width=0.2mm] (.1,3.72) -- (4.36,4.81);
\end{tikzpicture}

\caption{Partitioning of $\bar{\mathcal{R}}_t$ for first and last timesteps showing that the guided partitioning strategy disregards irrelevant regions while the uniform partitioner wastes time refining regions that cannot affect $\bar{\mathcal{P}}_t$.}\label{fig:guided_vs_uniform_sets}
\end{figure*}

Next, \cref{fig:partitioning_time_comparison} shows how the guided strategy is able to reduce approximation error more quickly than the uniform strategy. Notice that the guided strategy terminates the partitioning algorithm when it gets to the ``best possible" solution, obtained using $100 
\times 100$ uniform partitions.
\begin{figure}[t]
    \centering
%    \captionsetup{aboveskip=-12pt,belowskip=-12pt}
    \includegraphics[width=1\columnwidth]{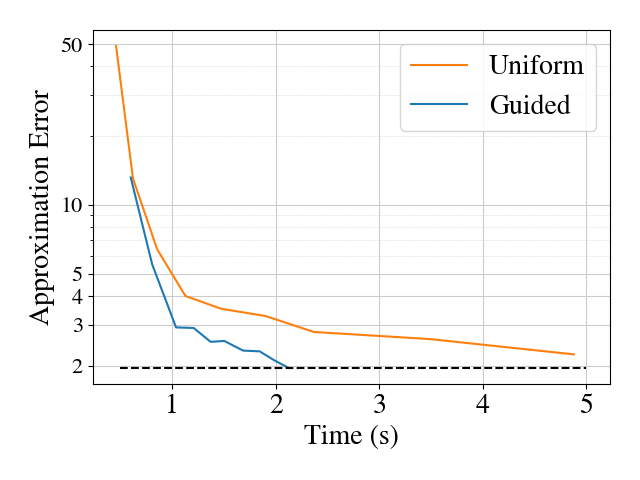}
    \caption{Guided partitioning converges to the ``best possible" approximation error in 2s, whereas uniform partitioning still has not converged using $12 
    \times 12$ partitioning with 5s of calculation time.}
    \label{fig:partitioning_time_comparison}
    \vspace*{-0.18in}
\end{figure}
To explain why the best possible error using our method is not 0, we show in \cref{fig:lower_bound_explanation} an example of a point that satisfies the conditions from \cref{eqn:hybrid:lp_constr}, but does not lie among the samples representing the true BP set. 
\cref{fig:lower_bound_explanation} shows that because the BP sets are approximated with an axis-aligned hyper-rectangle, it can be easy to step through $\bar{\mathcal{P}}$, but not through $\mathcal{P}$, leading to some conservativeness.
This issue highlights the need to adapt our methods to accommodate other set representations that can more tightly approximate true BP sets, which is an interesting and practical direction for future work.
\begin{figure}[t]
    \centering
%    \captionsetup{aboveskip=-12pt,belowskip=-12pt}
    \includegraphics[width=1\columnwidth]{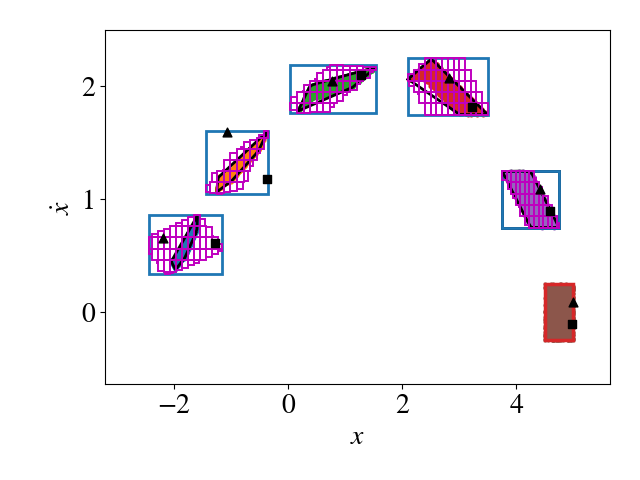}
    \caption{Example of two points that can reach $\mathcal{X}_T$ from $\bar{\mathcal{P}}_{-5}$ without stepping through the true BP set at each step. These points exist because \cref{eqn:hybrid:lp_constr} can only require points to go through $\bar{\mathcal{P}}_t$ and not $\mathcal{P}_t$, leading to non-zero error even with very fine partitioning.}
    \label{fig:lower_bound_explanation}
    \vspace*{-0.18in}
\end{figure}

\subsection{Linearized Ground Robot}
\label{sec:results:lgr}
\begin{figure*}[t]
\centering
 \captionsetup[subfigure]{aboveskip=-4pt,belowskip=-4pt}
        \begin{subfigure}[t]{0.325\textwidth}
        \begin{tikzpicture}[fill=white]
            \node[anchor=south west,inner sep=0] (image) at (0,0) {\includegraphics[width=\columnwidth]{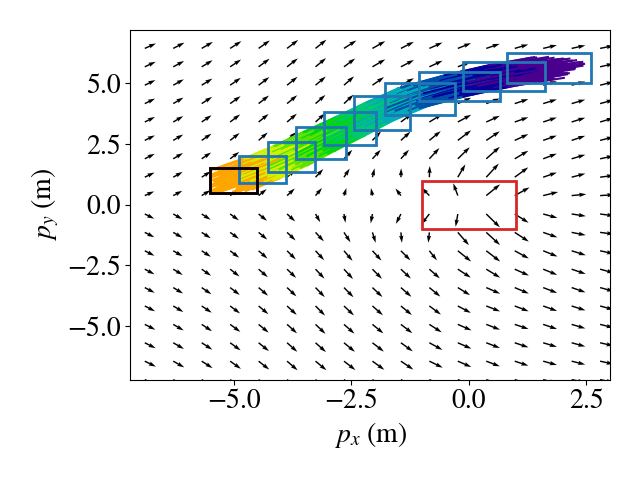}};
            \begin{scope}[x={(image.south east)},y={(image.north west)}]
                \node[draw,fill] at (0.44,0.3) {{\tiny $\mathcal{X}_0\! =\! \mathcal{B}_\infty \! \left( \begin{bmatrix} \text{-}5 \\ {\color{teal}1} \end{bmatrix}\!,\! \begin{bmatrix} 0.5 \\ 0.5 \end{bmatrix} \right)$}};
                % \draw (0.44, 0.5) \node{$t=1$}; 
            \end{scope}
            \begin{scope}[x={(image.south east)},y={(image.north west)}]
                \node[] at (0.58,1) {{\color{teal}\small Certified Safe \ding{51}}};
            \end{scope}
        \end{tikzpicture}
        \caption{{\color{black}Forward reachability for nominal collision avoidance scenario and vector field representation of control input. No intersection of target set (red) and reachable sets (blue) implies that safety can correctly be certified.}}
        \label{fig:forward_reach_nominal}
    \end{subfigure}~
    \begin{subfigure}[t]{0.325\textwidth}
        \begin{tikzpicture}[fill=white]
            \node[anchor=south west,inner sep=0] (image) at (0,0) {\includegraphics[width=\columnwidth]{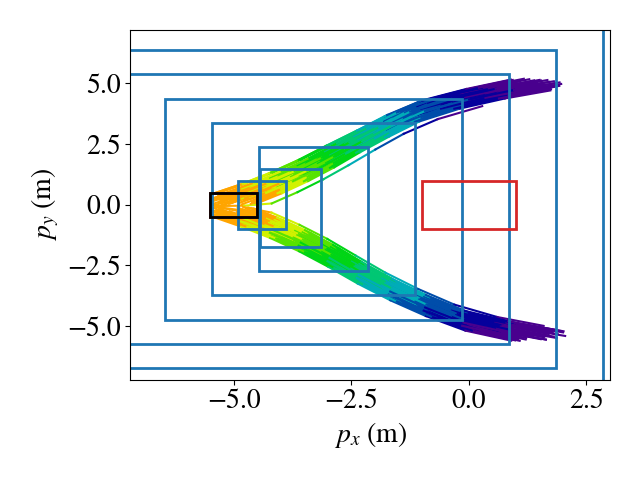}};
            \begin{scope}[x={(image.south east)},y={(image.north west)}]
                \node[draw,fill] at (0.44,0.3) {{\tiny $\mathcal{X}_0\! =\! \mathcal{B}_\infty \! \left( \begin{bmatrix} \text{-}5 \\ {\color{red}0} \end{bmatrix}\!,\! \begin{bmatrix} 0.5 \\ 0.5 \end{bmatrix} \right)$}};
                \node[text width=3cm] at (0.60,0.645) {{\tiny $t\text{=}1$}};
                \node[text width=3cm] at (0.675,0.67) {{\tiny $t\text{=}2$}};
                \node[text width=3cm] at (0.72,0.715) {{\tiny $t\text{=}3$}};
                \node[text width=3cm] at (0.78,0.765) {{\tiny $t\text{=}4$}};
                \node[text width=3cm] at (0.9,0.82) {{\tiny $t\text{=}5$}};
                \node[text width=3cm] at (0.99,0.865) {{\tiny $t\text{=}6$}};
            \end{scope}
            \begin{scope}[x={(image.south east)},y={(image.north west)}]
                \node[] at (0.58,1) {{\color{red}\small Possible Collision Detected \ding{55}}};
            \end{scope}
        \end{tikzpicture}
        \caption{{\color{black}Forward reachability for collision avoidance with bifurcating sets of trajectories. Reachable sets expand to capture both sets of possible trajectories, causing an increase in conservativeness and failure to certify safety.}}
         \label{fig:forward_reach_bifurcating}
    \end{subfigure}~
    \begin{subfigure}[t]{0.325\textwidth}
        \begin{tikzpicture}[fill=white]
            \node[anchor=south west,inner sep=0] (image) at (0,0) {\includegraphics[width=\columnwidth]{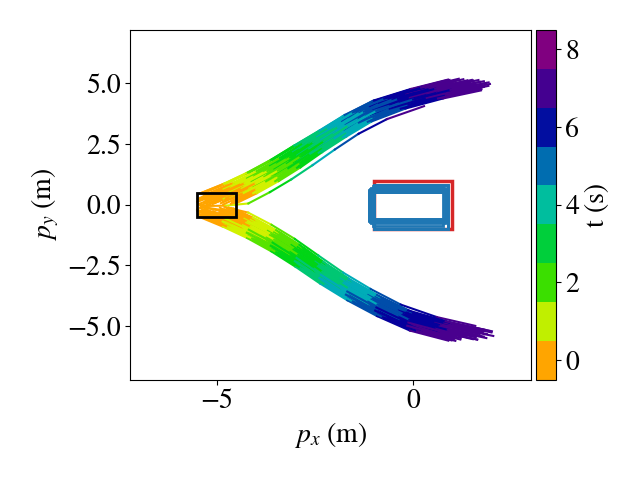}};
            \begin{scope}[x={(image.south east)},y={(image.north west)}]
                \node[draw,fill] at (0.44,0.3) {{\tiny $\mathcal{X}_0\! =\! \mathcal{B}_\infty \! \left( \begin{bmatrix} \text{-}5 \\ 0 \end{bmatrix}\!,\! \begin{bmatrix} 0.5 \\ 0.5 \end{bmatrix} \right)$}};
            \end{scope}
            \begin{scope}[x={(image.south east)},y={(image.north west)}]
                \node[] at (0.53,1) {{\color{teal}\small Certified Safe \ding{51}}};
            \end{scope}
            \begin{scope}[x={(image.south east)},y={(image.north west)}]
                \node[] at (0.64,.65) {{\tiny Target Set}};
            \end{scope}
        \end{tikzpicture}
        \caption{{\color{black}Backward reachability for collision avoidance with bifurcating sets of trajectories. Safety can correctly be certified because none of the BP set estimates (blue) intersect with the initial state set (black).}}
        \label{fig:backward_reach_single_integrator}
    \end{subfigure}
    \caption{Collision avoidance situation that \cite{everett2021reachability} incorrectly labels as dangerous and \hybrid{} correctly certifies as safe.}
    \label{fig:lgr_comparison}
    \vspace*{-0.12in}
\end{figure*}

{\color{black}
Considering the feedback-linearization technique used in \cite{martinez2021formation}, we introduce the common unicycle model, represented as a pair of discrete-time integrators:
\begin{equation}
    \mathbf{x}_{t+1} =
    \underbrace{
    \begin{bmatrix}
    1 & 0 \\
    0 & 1
    \end{bmatrix}}_{\mathbf{A}} \mathbf{x}_t +
    \underbrace{
    \begin{bmatrix}
    1 & 0 \\ 0 & 1
    \end{bmatrix}}_{\mathbf{B}} \mathbf{u}_t,
\end{equation}
with sampling time $t_s=1$s.
Thus, state $\mathbf{x}_t \triangleq [p_x, p_y]$ is the position of a ground robot in the $xy$-plane and the control inputs $\mathbf{u}_t \triangleq [v_x, v_y]$ are the desired $x$ and $y$ velocity components.

To simulate the scenarios shown in \cref{fig:reach_comp}, we trained a 2-layer NN with $[10, 10]$ neurons and ReLU activations with MSE loss to imitate the vector field
\begin{equation}
    \mathbf{u}(\mathbf{x})\text{=}\!\!
    \begin{bmatrix} \!
    \mathrm{clip}(1+\frac{2p_x}{p_x^2+p_y^2},-1,1) \\
    \mathrm{clip}(\frac{p_y}{p_x^2+p_y^2}\!+\!2\mathrm{sign}(p_y)\frac{e^{-\frac{p_x}{2}-2}}{(1+e^{-\frac{p_x}{2}-2})^2},-1,1) \label{eqn:vector_field}\!
    \end{bmatrix},
\end{equation}
where $\mathrm{sign}(\cdot)$ returns the sign of the argument and $\mathrm{clip}(x, a, b)$ constrains the value of $x$ to be within $a$ and $b$.
Eq. \cref{eqn:vector_field} was used to generate $10^5$ state-action pairs by sampling the state space uniformly over $\mathcal{B}_{\infty}([0,0]^\top,[10,10]^\top)$.
The state-action pairs were then used to train the NN for 20 epochs with a batch size of 32.

As shown in \cref{fig:forward_reach_nominal}, the vector field represented by \cref{eqn:vector_field} was designed to produce trajectories that avoid an obstacle at the origin, bounded by the target set $\mathcal{X}_T = \mathcal{B}_{\infty}([0,0]^\top,[1,1]^\top)$ (red).
\cref{fig:forward_reach_nominal} also shows the forward-reachability algorithm Reach-LP \cite{everett2021reachability} in a nominal obstacle-avoidance scenario where the forward reachable set estimates (blue) tightly bound the trajectories ($\mathbf{x}_0 \rightarrow \mathbf{x}_{\tau}: \mathrm{orange} \rightarrow \mathrm{purple}$) sampled from the intial state set $\mathcal{X}_0 = \mathcal{B}_{\infty}([-5,1]^\top,[0.5,0.5]^\top)$.
In this case, $\mathcal{X}_0$ lies strictly above the $x$-axis, resulting in all possible future trajectories going above the obstacle at the origin, thus remaining bunched together and allowing forward reachability to correctly certify safety.

In \cref{fig:forward_reach_bifurcating}, the initial state set is shifted such that it is centered on the $x$-axis, i.e., $\mathcal{X}_0 = \mathcal{B}_{\infty}([-5,0]^\top,[0.5,0.5]^\top)$.
This causes the NN control policy to command the system to go above the obstacle for some states in $\mathcal{X}_0$ while others go below.
The resulting forward reachability analysis shows a drastically different picture than \cref{fig:forward_reach_nominal}, with the forward reachable sets quickly gaining conservativeness as they encompass both sets of possible trajectories.
While other forward reachability analysis tools \cite{sidrane2021overt} may reduce conservativeness in the upper- and lower-bounds of the reachable set estimates, we are not aware of any methods to consider the space between the sets of trajectories, which is the region that intersects with the obstacle and causes safety certification to fail.
Also note that while an alternative strategy could be to partition $\mathcal{X}_0$ such that each element goes in one of the two directions, this would require a perfect split along the learned decision boundary, which would be challenging in practice.

Finally, \cref{fig:backward_reach_single_integrator} shows the same situation as \cref{fig:forward_reach_bifurcating}, but now we use the backward reachability approach described in \cref{sec:approach:linear} as the strategy for safety certification.
Instead of propagating forward from $\mathcal{X}_0$, the states that lead to $\mathcal{X}_T$ are calculated using \hybrid{} with 12 guided partitions to give BP over-approximations (blue).
Since the control policy was designed to force the system away from the obstacle bounded by $\mathcal{X}_T$, the BP over-approximations remain close to $\mathcal{X}_T$ (red) and do not intersect with $\mathcal{X}_0$ (black), thus successfully certifying safety for the situation shown.
Note that while the forward reachable sets in \cref{fig:forward_reach_bifurcating} were calculated in 0.5s compared to 2.0s for the BP over-approximations in \cref{fig:backward_reach_single_integrator}, the result given by \hybrid{} is more useful.

\cref{fig:buggy_NN_demo} shows that, in addition to certifying safe situations, \hybrid{} can also be used to detect failures in a trained control policy.
The policy shown in \cref{fig:buggy_NN_demo} was trained using a modified version of \cref{eqn:vector_field} that causes the policy to lead the system to the origin for some states.
Given this ``bug'' in the policy, the BP over-approximations now expand out to include $\mathcal{X}_0$, implying that if the system is in $\mathcal{X}_0$, it is possible to collide with the obstacle.
}

{\color{red}}

\begin{figure}[t]
    \centering
%    \captionsetup{aboveskip=-12pt,belowskip=-12pt}
    \begin{tikzpicture}[fill=white]
        \node[anchor=south west,inner sep=0] (image) at (0,0) {\includegraphics[width=1\columnwidth]{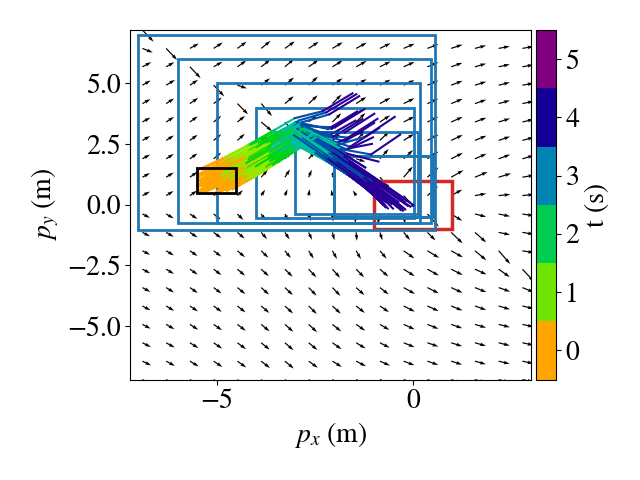}};
        \begin{scope}[x={(image.south east)},y={(image.north west)}]
            \node[draw,fill] at (0.4,0.3) {{\footnotesize $\mathcal{X}_0\! =\! \mathcal{B}_\infty \! \left( \begin{bmatrix} \text{-}5 \\ 1 \end{bmatrix}\!,\! \begin{bmatrix} 0.5 \\ 0.5 \end{bmatrix} \right)$}};
            % \node[text width=3cm] at (0.615,0.645) {{\tiny $t\text{=}1$}};
            % \node[text width=3cm] at (0.675,0.67) {{\tiny $t\text{=}2$}};
            % \node[text width=3cm] at (0.72,0.715) {{\tiny $t\text{=}3$}};
            % \node[text width=3cm] at (0.78,0.765) {{\tiny $t\text{=}4$}};
            % \node[text width=3cm] at (0.9,0.82) {{\tiny $t\text{=}5$}};
            % \node[text width=3cm] at (0.99,0.865) {{\tiny $t\text{=}6$}};
        \end{scope}
        \begin{scope}[x={(image.south east)},y={(image.north west)}]
            \node[] at (0.55,0.97) {{\color{red}\footnotesize Possible Collision Detected \ding{51}}};
        \end{scope}
    \end{tikzpicture}
    \caption{{\color{black}\hybrid{} is does not certify safety for a faulty NN control policy.}}
    \label{fig:buggy_NN_demo}
    \vspace*{-0.18in}
\end{figure}

\subsection{Nonlinear Ground Robot} \label{sec:nonlinear_robot}
We represent a nonlinear version of the ground robot shown in \cref{sec:results:lgr} by changing the dynamics as follows
\begin{equation}
    \begin{aligned}
        p_{x, t+1} = p_{x, t} + v_t\cos(\theta_t) \\
        p_{y, t+1} = p_{y, t} + v_t\sin(\theta_t) 
    \end{aligned}\label{eq:nonlinear_ground_dyn}
\end{equation}
with sampling time $t_s=1$s. The state is still represented by $\mathbf{x}=[p_x,p_y]^\top$, but the control inputs are now $\mathbf{u} = [v, \theta]^\top$ making the dynamics nonlinear in the control inputs due to the multiplicative and trigonometric operations in \cref{eq:nonlinear_ground_dyn}. The velocity and heading control inputs are derived from the vector field described in \cref{eqn:vector_field} and shown in \cref{fig:forward_reach_nominal}. A NN with three hidden layers containing $10$ units each and ReLU activations was trained using $ 1.01 \times 10^6 $ data points generated from the state space region $\mathcal{B}_{\infty}([0,0]^\top,[6,6]^\top)$ to replicate the vector field in \cref{eqn:vector_field}. Data points were more densely sampled near critical regions of the state space (e.g., avoid set). The network was trained for $250$ epochs with a batch size of $512$ and a mean squared error (MSE) loss function.

%We trained a network with three hidden layers containing XXX\stan{Esen fill in} units each to generate these control inputs. \stan{Esen add level of network training details that Nick has (num data points, epochs, batch size, loss fn)}

We can apply forward reachability to assess safety in this scenario using the techniques described by \citet{sidrane2021overt}. However, a forward reachability analysis will behave similarly to the results shown for the linear case in \cref{fig:forward_reach_nominal,fig:forward_reach_bifurcating}, and this scenario, therefore, requires a backward reachability analysis to make a useful safety assessment. \Cref{fig:nonlinear_robot} shows the results of applying \BRNL{} to generate BP sets with $\tau=9$. Since the overapproximated BP sets do not intersect with the initial set, we can verify safety over this time horizon.

\begin{figure}[t]
    \centering
    \begin{tikzpicture}[fill=white]
        \node[anchor=south west,inner sep=0] (image) at (0,0) {\includegraphics[width=1\columnwidth]{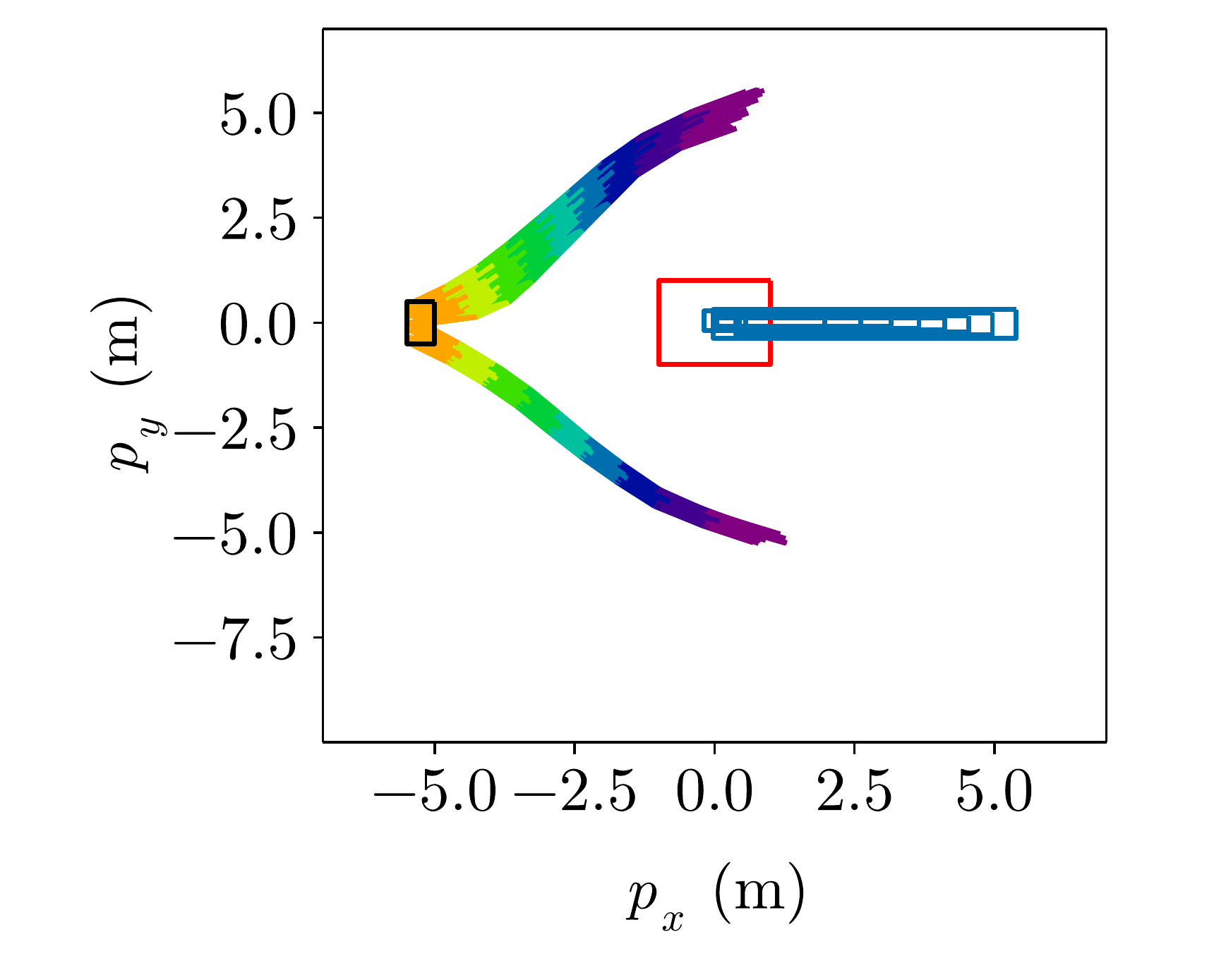}};
        \begin{scope}[x={(image.south east)},y={(image.north west)}]
            \node[draw,fill] at (0.47,0.33) {{\footnotesize $\mathcal{X}_0\! =\! \mathcal{B}_\infty \! \left( \begin{bmatrix} \text{-}5 \\ 1 \end{bmatrix}\!,\! \begin{bmatrix} 0.5 \\ 0.5 \end{bmatrix} \right)$}};
        \end{scope}
        \begin{scope}[x={(image.south east)},y={(image.north west)}]
            \node[] at (0.6,1.01) {{\color{teal}\footnotesize Certified Safe \ding{51}}};
        \end{scope}
        \fill[t0] (7.9, 1.7) rectangle (8.1, 2.256);
        \fill[t1] (7.9, 2.256) rectangle (8.1, 2.811);
        \fill[t2] (7.9, 2.811) rectangle (8.1, 3.367);
        \fill[t3] (7.9, 3.367) rectangle (8.1, 3.922);
        \fill[t4] (7.9, 3.922) rectangle (8.1, 4.478);
        \fill[t5] (7.9, 4.478) rectangle (8.1, 5.033);
        \fill[t6] (7.9, 5.033) rectangle (8.1, 5.589);
        \fill[t7] (7.9, 5.589) rectangle (8.1, 6.144);
        \fill[t8] (7.9, 6.144) rectangle (8.1, 6.7);
        \draw (7.9, 1.7) rectangle (8.1, 6.7);
        \draw (8.1, 1.978) -- (8.18, 1.978);
        \node at (8.3, 1.978) {$0$};
        \draw (8.1, 3.089) -- (8.18, 3.089);
        \node at (8.3, 3.089) {$2$};
        \draw (8.1, 4.2) -- (8.18, 4.2);
        \node at (8.3, 4.2) {$4$};
        \draw (8.1, 5.311) -- (8.18, 5.311);
        \node at (8.3, 5.311) {$6$};
        \draw (8.1, 6.422) -- (8.18, 6.422);
        \node at (8.3, 6.422) {$8$};
        \node at (8.0, 7.0) {t(s)};
    \end{tikzpicture} %\vspace*{-0.4in}
    \caption{Backward reachability analysis for collision avoidance for a nonlinear 2D robot. The controller is certified safe as the backprojection sets (blue) do not intersect with the initial set (black).}
    \label{fig:nonlinear_robot}
    \vspace*{-0.18in}
\end{figure}

We note that the BP sets for the nonlinear ground robot shown in \cref{fig:nonlinear_robot} do not behave similarly to the BP sets for the linear ground robot shown in \cref{fig:backward_reach_single_integrator}. In particular, while the BP sets for the linear ground robot all remain relatively close to the avoid set, the BP sets for the nonlinear ground robot appear to march to the right in a direction pointed away from the avoid set. This discrepancy provides insight into a potential vulnerability of the system. In particular, the BP sets shown in \cref{fig:nonlinear_robot} indicate that there must be some control outputs near the $x$-axis that cause the robot to move directly to the left and into the avoid set. 

\Cref{fig:nonlinear_samples} confirms this hypothesis by showing samples from the first BP set that reach the avoid set when simulated one step forward in time. 
Upon further analysis, we were able to attribute this behavior to angle wraparound. 
Since we chose to represent $\theta$ using values between $0$ and $2\pi$ radians (where $0$ and $2\pi$ both represent a heading pointing to the right in the $+x$ direction), the vector field in \cref{eqn:vector_field} causes inputs just above the $x$-axis to have an angle near $0$ and inputs just below the $x$-axis to have an angle near $2\pi$. 
Since ReLU networks represent piecewise continuous functions, there must exist a point near the $x$-axis where the control network commands a heading of $\pi$. A heading of $\pi$ causes the robot to point in the $-x$ direction and therefore towards the avoid set, resulting in the rightward marching of the BP sets in \cref{fig:nonlinear_robot}. 
This insight is critical in informing future designs of the system. 
For instance, if we had instead parameterized $\theta$ to use values between $-\pi$ and $\pi$, this vulnerability would occur on the negative portion of the $x$-axis and cause the BP sets to march leftward and intersect with the initial set, ultimately resulting in an unsafe system.

\begin{figure}
    \centering
     \includegraphics[width=0.92\columnwidth]{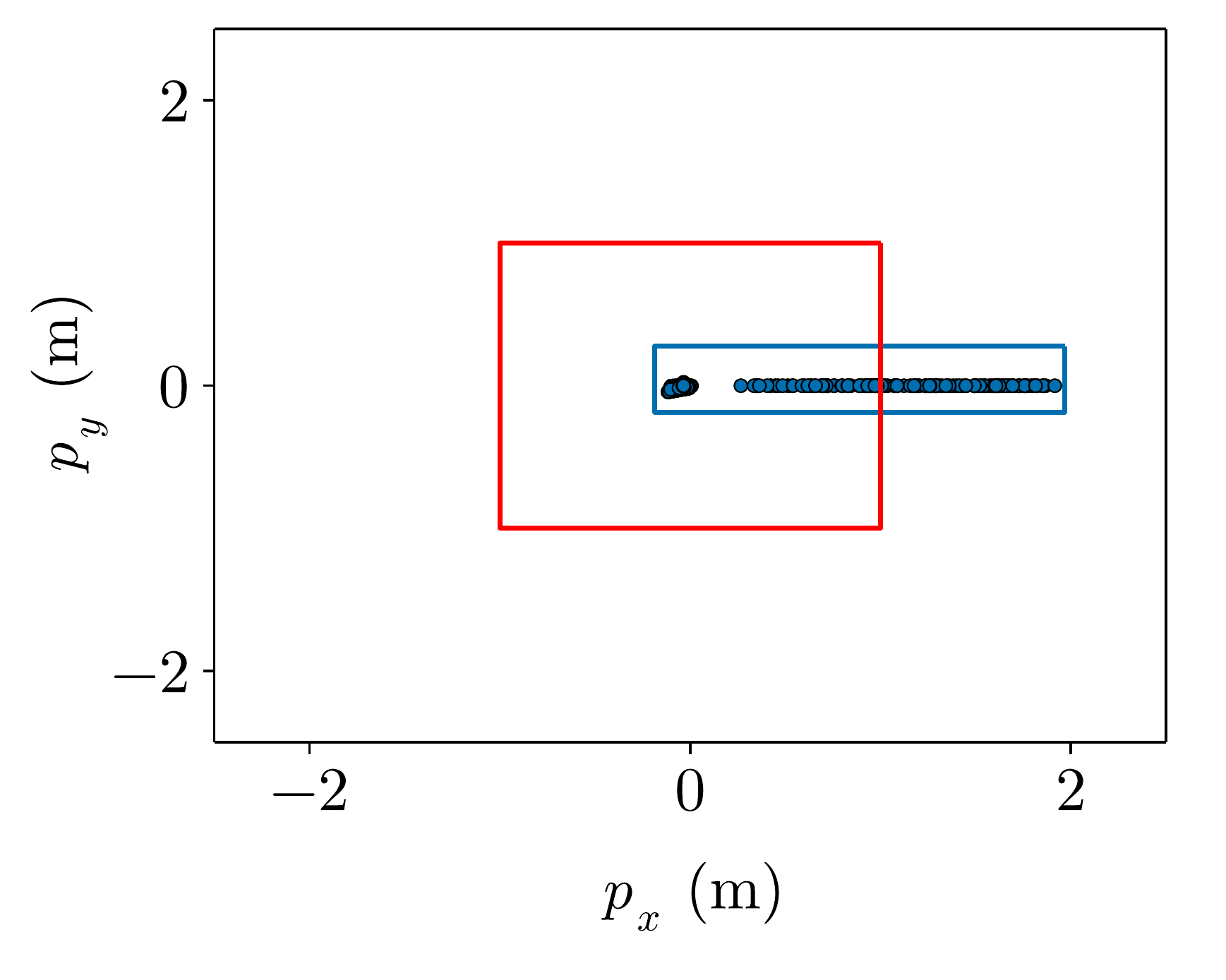}
     \caption{Samples from the one-step BP set (blue) for the nonlinear ground robot that reach the avoid set.}
     \label{fig:nonlinear_samples}
\end{figure}

While \textit{concrete} multi-step backward reachability (as shown in \cref{fig:nonlinear_robot}) provided sufficient accuracy for verifying safety, we also tested a \textit{hybrid-symbolic} approach similar to that described by \citet{sidrane2021overt} for an arbitrary target set. \Cref{fig:symbolic_demo} visualizes the results of performing a symbolic step after every third concrete step. Each set of samples approximates the true BP set at the given time step. In general, the bounds lose tightness over time due to both the accumulation of over-approximation error from the concrete steps and the fact that the true BP sets become less rectangular in shape over time. Nevertheless, the symbolic sets (solid) provide tighter overapproximations than the concrete sets (dashed).

% \begin{figure}
%     \centering
%     \includegraphics[width=\columnwidth]{figures/nonlinear/robot/nstep_traj_v2.pdf}
%     \caption{Backward reachability analysis for collision avoidance for a nonlinear 2D robot. The controller is certified safe as the backprojection sets (blue) do not collide with the initial set (black).}
%     \label{fig:nonlinear_robot}
% \end{figure}

\begin{figure}
    \centering
    \includegraphics[width=0.92\columnwidth]{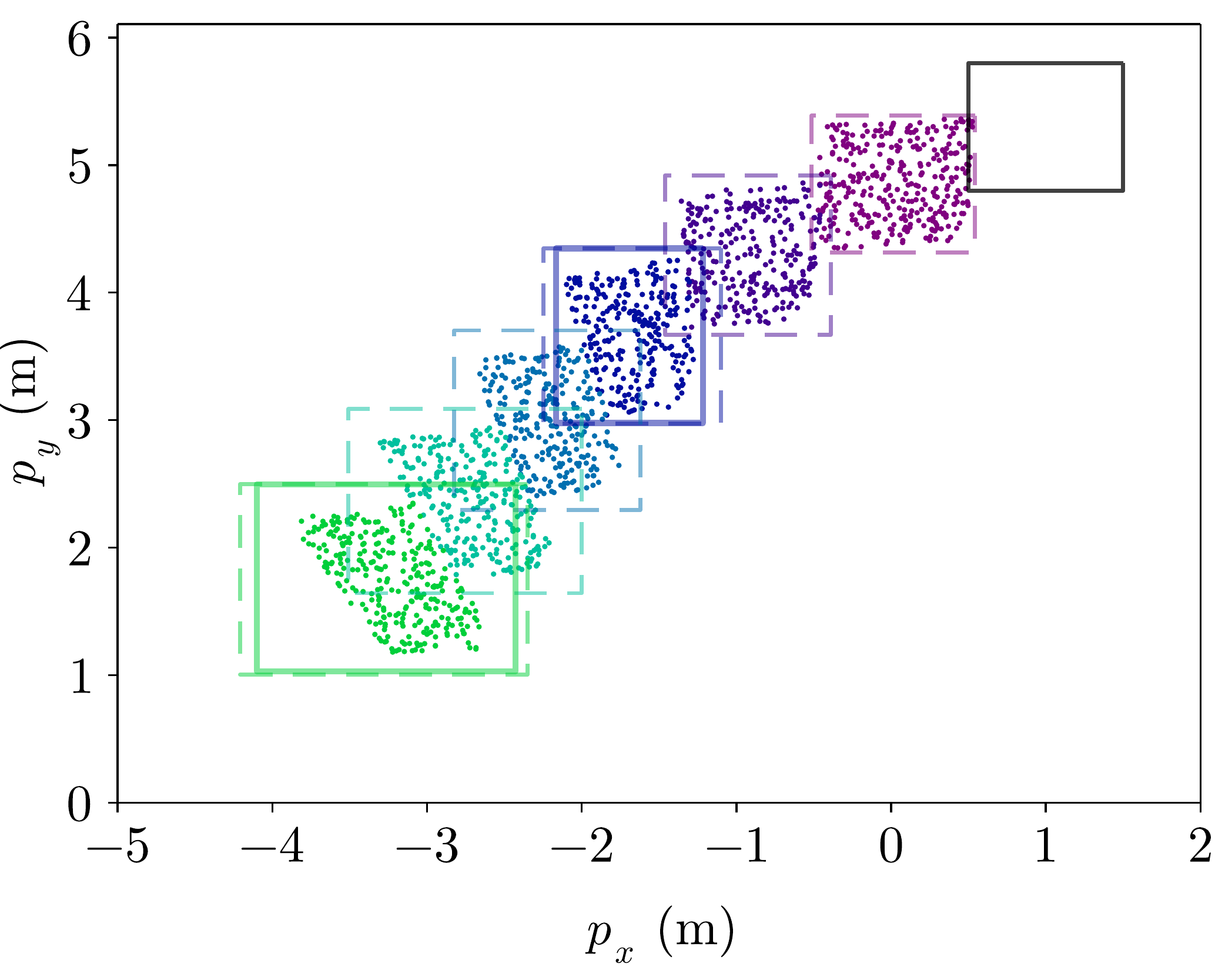}
    \caption{Visualization of concrete (dashed) and symbolic (solid) over-appoximated BP sets corresponding to the target set shown in back. Samples to approximate the true target set were obtained by sampling $50,000$ points uniformly inside the domain and checking for simulations that reach the target set.}
    \label{fig:symbolic_demo}
\end{figure}

\subsection{Ablation Study: Linearized Quadrotor}
To investigate how our algorithm scales with the state space, we next consider a 6-dimensional linearized quadrotor model. By representing the quadrotor as three double integrators corresponding to the vehicle's $x-y-z$ coordinates, we obtain the model
{\small
\begin{equation}
    \mathbf{x}_{t+1} =
    \underbrace{
    \begin{bmatrix}
    1 & 0 & 0 & 1 & 0 & 0 \\
    0 & 1 & 0 & 0 & 1 & 0 \\
    0 & 0 & 1 & 0 & 0 & 1 \\
    0 & 0 & 0 & 1 & 0 & 0 \\
    0 & 0 & 0 & 0 & 1 & 0 \\
    0 & 0 & 0 & 0 & 0 & 1 \\
    \end{bmatrix}}_{\mathbf{A}} \mathbf{x}_t +
    \underbrace{
    \begin{bmatrix}
    0.5 & 0 & 0 \\
    0 & 0.5 & 0 \\
    0 & 0 & 0.5 \\
    1 & 0 & 0 \\
    0 & 1 & 0 \\
    0 & 0 & 1 \\
    \end{bmatrix}}_{\mathbf{B}} \mathbf{u}_t.
\end{equation}
}
% with
% \begin{align*}
%     \mathcal{X} = \left\{ 
%     \begin{bmatrix}
%         -\infty, & \infty \\
%         -\infty, & \infty \\
%         -\infty, & \infty \\
%         -1, & 1 \\
%         -1, & 1 \\
%         -1, & 1 \\
%     \end{bmatrix} 
%     \right\} & \quad
%     \mathcal{U} = \left\{ 
%     \begin{bmatrix}
%         -4, & 4 \\
%         -4, & 4 \\
%         -4, & 4 \\
%     \end{bmatrix} 
%     \right\}
% \end{align*}
In a similar way to \cref{sec:results:lgr}, we trained a control policy with $[20, 20]$ ReLU-activated neurons by specifying a vector field that avoids an obstacle placed at $\mathcal{X}_T = \mathcal{B}_{\infty}([0,0, 2.5, 0,0,0]^\top,[1,1,1,1,1,1]^\top)$. The vector field is represented by the equations
\begin{equation}
    \mathbf{u} = \begin{cases}
    4\mathrm{sign}(\mathbf{p}) \quad & \|\mathbf{p}\|_{\infty} < 2.25 \\
    
    \begin{bmatrix}
        0 \\
        \mathrm{clip}(\frac{4}{p_y}-\frac{p_y}{4}-3v_y, -2, 2) \\
        0 \\
    \end{bmatrix} \quad & \|\mathbf{p}\|_{\infty} \geq 2.25.
    \end{cases} \label{eqn:linearized_quad_policy}
\end{equation}

\cref{fig:6d_linear_quad} shows simulated trajectories using the controller trained with \cref{eqn:linearized_quad_policy} propagated from an initial condition at $\mathcal{X}_0 = \mathcal{B}_{\infty}([-5,0,2.5,0.97,0,0]^\top,$ $[0.25,0.25,0.25,0.02,0.01,0.01]^\top)$ with $dt=1$ over a time horizon $\tau = 6$.
Like \cref{fig:backward_reach_single_integrator}, the BP set estimates (blue), obtained with \cref{alg:hybridBackprojection} using $r=750$ do not intersect with the initial state set, thus proving safety in this scenario. This is confirmed by observing that the simulated trajectories do not intersect with the obstacle (red).
Note that while BP over-approximations are calculated through $\tau = 6$, the fact that the first BP over-approximation is fully contained within the target set implies that all subsequent BP over-approximations should also fall within the target set, allowing us to stop after the calculation of the first BP over-approximation.
The idea that $\bar{\mathcal{P}}_{t}\! \subseteq \! \mathcal{P}_{t+1}$ leads to a form of invariance is stated formally and proven in \cref{sec:proof_invariance}.

\begin{figure}[t]
    \centering
%    \captionsetup{aboveskip=-12pt,belowskip=-12pt}
    % \includegraphics[width=1\columnwidth]{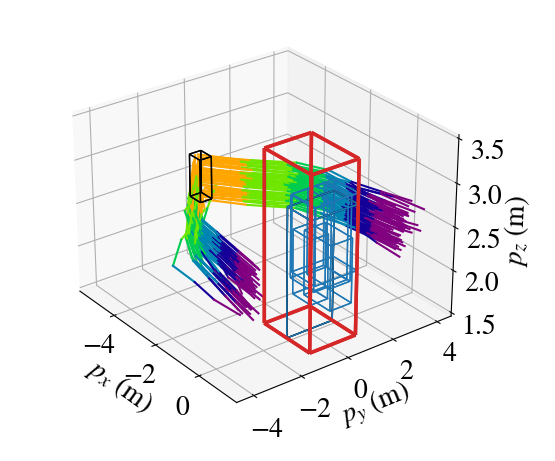}
    \begin{tikzpicture}[fill=white]
        \node[anchor=south west,inner sep=0] (image) at (0,0) {\includegraphics[width=1\columnwidth]{figures/extension/discrete_quad2.png}};
        % \begin{scope}[x={(image.south east)},y={(image.north west)}]
        %     \node[] at (0.6,1.01) {{\color{teal}\footnotesize Certified Safe \ding{51}}};
        % \end{scope}
        
        \fill[t02] (8.2+\xshift, 0.7+\yshift) rectangle (8.4+\xshift, 1.367+\yshift);
        \fill[t12] (8.2+\xshift, 1.367+\yshift) rectangle (8.4+\xshift, 2.033+\yshift);
        \fill[t22] (8.2+\xshift, 2.033+\yshift) rectangle (8.4+\xshift, 2.700+\yshift);
        \fill[t32] (8.2+\xshift, 2.700+\yshift) rectangle (8.4+\xshift, 3.367+\yshift);
        \fill[t42] (8.2+\xshift, 3.367+\yshift) rectangle (8.4+\xshift, 4.033+\yshift);
        \fill[t52] (8.2+\xshift, 4.033+\yshift) rectangle (8.4+\xshift, 4.700+\yshift);
        % \draw (8.2+\xshift, 0.7+\yshift) rectangle (8.4+\xshift, 4.7+\yshift);
        % \draw (8.4+\xshift, 1.033+\yshift) -- (8.48+\xshift, 1.033+\yshift);
        % \node at (8.6+\xshift, 1.033+\yshift) {$0$};
        % \draw (8.4+\xshift, 1.700+\yshift) -- (8.48+\xshift, 1.700+\yshift);
        % \node at (8.6+\xshift, 1.700+\yshift) {$1$};
        % \draw (8.4+\xshift, 2.367+\yshift) -- (8.48+\xshift, 2.367+\yshift);
        % \node at (8.6+\xshift, 2.367+\yshift) {$2$};
        % \draw (8.4+\xshift, 3.033+\yshift) -- (8.48+\xshift, 3.033+\yshift);
        % \node at (8.6+\xshift, 3.033+\yshift) {$3$};
        % \draw (8.4+\xshift, 3.700+\yshift) -- (8.48+\xshift, 3.700+\yshift);
        % \node at (8.6+\xshift, 3.700+\yshift) {$4$};
        % \draw (8.4+\xshift, 4.367+\yshift) -- (8.48+\xshift, 4.367+\yshift);
        % \node at (8.6+\xshift, 4.367+\yshift) {$5$};
        % \node at (8.3+\xshift, 5.0+\yshift) {t(s)};
        \draw (8.2+\xshift, 0.7+\yshift) rectangle (8.4+\xshift, 4.7+\yshift);
        \draw (8.2+\xshift, 1.033+\yshift) -- (8.12+\xshift, 1.033+\yshift);
        \node at (8.0+\xshift, 1.033+\yshift) {$0$};
        \draw (8.2+\xshift, 1.700+\yshift) -- (8.12+\xshift, 1.700+\yshift);
        \node at (8.0+\xshift, 1.700+\yshift) {$1$};
        \draw (8.2+\xshift, 2.367+\yshift) -- (8.12+\xshift, 2.367+\yshift);
        \node at (8.0+\xshift, 2.367+\yshift) {$2$};
        \draw (8.2+\xshift, 3.033+\yshift) -- (8.12+\xshift, 3.033+\yshift);
        \node at (8.0+\xshift, 3.033+\yshift) {$3$};
        \draw (8.2+\xshift, 3.700+\yshift) -- (8.12+\xshift, 3.700+\yshift);
        \node at (8.0+\xshift, 3.700+\yshift) {$4$};
        \draw (8.2+\xshift, 4.367+\yshift) -- (8.12+\xshift, 4.367+\yshift);
        \node at (8.0+\xshift, 4.367+\yshift) {$5$};
        \node at (8.3+\xshift, 5.0+\yshift) {t(s)};
    \end{tikzpicture} %\vspace*{-0.4in}
    \caption{Verification of 6-dimensional quadrotor system during obstacle avoidance maneuver.}
    \label{fig:6d_linear_quad}
    %\vspace*{-0.18in}
\end{figure}

To help quantify the improvements in computation time made to our approach when compared with \cite{rober2022backward}, we replicated the scenario showed in \cref{fig:6d_linear_quad} with several different configurations of \hybrid{}. \cref{tab:linear_quad_ablation} shows the results of our comparison. Note that rows using a uniform partitioning strategy split each state into 5 partitions (resulting in $5^6$ total elements), which is the smallest amount capable of correctly verifying safety.
The first row represents \cite{rober2022backward} with the exception that it uses \hybrid{} rather than \basic{} or \nstep{}.
The second row makes use of disciplined parametric programming (DPP) \cite{diamond2016cvxpy}, which allows {\tt cvxpy} to solve many LPs more efficiently.
Finally, the third row includes the SkipLP strategy described in \cref{sec:approach:partition} and the last row uses all components together with the guided partitioning strategy described in \cref{alg:guided_partition}.

\begin{table}[t]
\centering
\caption{Ablation study showing the computational cost of calculating \cref{fig:6d_linear_quad} with different combinations of DPP, SkipLP, and guided partitioning.}
\begin{tabular}{@{}lr@{}}
\toprule
 Configuration   & Runtime [s] \\
\midrule
 Uniform \cite{rober2022backward} & $3796 $ \\
 Uniform + DPP & $2064 $ \\
 Uniform + DPP + SkipLP & $1365 $ \\
 Guided + DPP + SkipLP   & $28 $ \\
\bottomrule
\end{tabular}
\label{tab:linear_quad_ablation}
\vspace*{-0.12in}
\end{table}

\section{Conclusion}
In this paper, we considered the problem of backward reachability for NFLs.
We presented an algorithm that combines the best elements of the two algorithms proposed by \cite{rober2022backward} to calculate over-approximations of BP sets over a given time horizon for systems with linear dynamics. We also introduced a new algorithm to compute over-approximations of the BP sets for nonlinear dynamical systems.
We highlighted the advantages of backward reachability on both linear and nonlinear versions of a 2D collision avoidance system.
We also demonstrated reduced computation time by introducing a guided partitioning strategy that more intelligently divides up the input space to the NN relaxation used to find the BP sets.
These improvements were used to verify a 6D linearized quadrotor system over 100 times faster than the previous method.

\balance
% \bibliographystyle{IEEEtran}
% \bibliography{refs}
\printbibliography

\newpage
\appendix
\subsection{Proof of \texorpdfstring{\cref{thm:hybridbackprojection}}{Lemma 4.1}}\label{sec:proof_hybridbackprojection}

\begingroup
\allowdisplaybreaks

\thmhybridbackprojection*
\begin{proof}
Consider the calculation of the first BP set over-approximation $\bar{\mathcal{P}}_{-1}$. Using \cref{eqn:backreachable:lp_obj}, we first calculate $\bar{\mathcal{R}}_{-1}$, which is guaranteed to contain all the states for which the system can arrive in $\mathcal{X}_T$ given $\mathbf{u} \in \mathcal{U}$ and $\mathbf{x} \in \mathcal{X}$. Notice that $\bar{\mathcal{P}}_{-1} \subseteq \bar{\mathcal{R}}_{-1}$, which allows us to apply \cref{thm:crown_particular_x} over $\bar{\mathcal{R}}_{-1}$ to obtain bounds $\pi^L_{-1}(\mathbf{x})$ and $\pi^U_{-1}(\mathbf{x})$ that are valid over $\mathcal{P}_{-1}$. We then use \cref{eqn:hybrid:lp_constr} to include the constraint that $\pi^L_{-1}(\x) \leq \mathbf{u}_{-1} \leq \pi^U_{-1}(\x)$, which is guaranteed by \cref{thm:crown_particular_x} to satisfy $\pi^L_{-1}(\x_{-1}) \leq \pi(\mathbf{x}) \leq \pi^U_{-1}(\x_{-1})$. From this we can conclude that
\begin{align}
    \bar{\mathcal{P}}_{-1}(\mathcal{X}_T) = &\ \{\x\ |\ \ubar{\x}_{-1} \leq \x \leq \bar{\x}_{-1} \}
    \label{eqn:BPproof:rect_bound} \\ 
    \begin{split}
         \supseteq & \ \{\x\ |\ f(\x, \mathbf{u}) \in \mathcal{X}_T,\ \\ & \quad \quad \quad \quad \pi^L_{-1}(\x) \leq \mathbf{u} \leq \pi^U_{-1}(\x)\},
    \end{split} \label{eqn:BPproof:true_relaxed_BP} \\
    \supseteq &\ \{ \mathbf{x}\ \lvert\ f(\mathbf{x}, \pi(\mathbf{x})) \in \mathcal{X}_T \} = \mathcal{P}_{-1}(\mathcal{X}_T),
    \label{eqn:BPproof:true_BP}
\end{align}
thus proving \cref{thm:hybridbackprojection} for the first time step. This gives us the result that all states that reach $\mathcal{X}_T$ must get there by applying a control value satisfying $\pi^L_{-1}(\x) \leq \mathbf{u}_{-1} \leq \pi^U_{-1}(\x)$ from somewhere in $\bar{\mathcal{P}}_{-1}$.

Now if we consider \cref{eqn:hybrid:lp_obj} for $t=-2$, i.e.,
\begin{align}
    \bar{\mathcal{P}}_{-2}(\mathcal{X}_T) = &\ \{\x_{-2}\ |\ \ubar{\x}_{-2} \leq \x_{-2} \leq \bar{\x}_{-2} \}
    \label{eqn:BPproof:t2rect_bound} \\ 
    \begin{split}
         \supseteq 
         & \ \{\x_{-2}\ |\ \mathbf{x}_{-1} = f(\x_{-2}, \mathbf{u}_{-2}) \in \bar{\mathcal{P}}_{-1},\ \\ 
         & \quad \quad \quad \ f(\mathbf{x}_{-1}, \mathbf{u_{-1}}) \in \mathcal{X}_T,\ \\ 
         & \quad \quad \quad \ \pi^L_{-2}(\x) \leq \mathbf{u}_{-2} \leq \pi^U_{-2}(\x), \\
         & \quad \quad \quad \ \pi^L_{-1}(\x) \leq \mathbf{u}_{-1} \leq \pi^U_{-1}(\x)\},
    \end{split} \label{eqn:BPproof:t2true_relaxed_BP} \\
    \begin{split}
    \supseteq 
    & \ \{ \mathbf{x}_{-2}\ \lvert\ \mathbf{x}_{-1} =  f(\mathbf{x}_{-2}, \pi(\mathbf{x}_{-2})) \in \mathcal{P}_{-1}, \\ & f(\mathbf{x}_{-1}, \pi(\mathbf{x}_{-1})) \in \mathcal{X}_T, \} = \mathcal{P}_{-2}(\mathcal{X}_T),
    \end{split}
    \label{eqn:BPproof:t2true_BP}
\end{align}
it becomes clear that because $\mathcal{P}_{-1}(\mathcal{X}_T) \subseteq \bar{\mathcal{P}}_{-1}(\mathcal{X}_T)$, and with bounds on the control limits from \cref{thm:crown_particular_x}, we can conclude that $\mathcal{P}_{-2}(\mathcal{X}_T) \subseteq \bar{\mathcal{P}}_{-2}(\mathcal{X}_T)$. From here, the same argument can be applied an arbitrary number of time steps backward, allowing us to conclude
\begin{equation}
    \mathcal{P}_{t}(\mathcal{X}_T) \subseteq \bar{\mathcal{P}}_{t}(\mathcal{X}_T) \quad \forall t \in \mathcal{T}.
\end{equation}
\end{proof}
\endgroup

{\color{black}
\subsection{Backward Invariance} \label{sec:proof_invariance}

\begin{lemma} \label{lem:invariance}
If there exists a BP over-approximation $\bar{\mathcal{P}}_{t_2}$ and a concrete set $\mathcal{P}_{t_1}$ at times $t_1, t_2$ with $t_2 < t_1 \leq 0$, such that $\bar{\mathcal{P}}_{t_2} \subseteq \mathcal{P}_{t_1}$, then
\begin{equation}
    \mathcal{I} = \mathcal{P}_{t_1} \cup \bigcup_{t=t_2}^{t_1-1} \bar{\mathcal{P}}_t
\end{equation}
contains all BP sets for $t \leq t_1$, i.e., $\bar{\mathcal{P}}_{t} \subseteq \mathcal{I}\ \forall t \leq t_1$. 
\end{lemma}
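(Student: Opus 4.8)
The plan is to view the backward recursion as repeated application of a single \emph{monotone} one‑step backprojection over‑approximation map and to read the hypothesis as an $n$‑step invariance condition, where $n \triangleq t_1 - t_2 \ge 1$. Write $\mathcal{G}(\mathcal{S}) \triangleq \{\, \x \mid \exists\, \x' \bowtie \hat{f}(\x;\pi) \text{ s.t. } \x' \in \mathcal{S} \,\}$ for the one‑step over‑approximation of \cref{eqn:backprojection_set_over_gen}, so that the concrete recursion generating the sets satisfies $\bar{\mathcal{P}}_{t-1} = \mathcal{G}(\bar{\mathcal{P}}_t)$ for $t \le t_1$ with $\bar{\mathcal{P}}_{t_1} = \mathcal{P}_{t_1}$, hence $\bar{\mathcal{P}}_{t_1 - j} = \mathcal{G}^{j}(\mathcal{P}_{t_1})$ for every $j \ge 0$. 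First I would record that $\mathcal{G}$ is inclusion‑monotone: if $\mathcal{S} \subseteq \mathcal{S}'$, then any successor $\x'$ witnessing $\x \in \mathcal{G}(\mathcal{S})$ also lies in $\mathcal{S}'$, so $\mathcal{G}(\mathcal{S}) \subseteq \mathcal{G}(\mathcal{S}')$ — the same monotonicity that drives the proof of \cref{thm:hybridbackprojection}. With this, the hypothesis $\bar{\mathcal{P}}_{t_2} \subseteq \mathcal{P}_{t_1}$ says exactly $\mathcal{G}^{n}(\mathcal{P}_{t_1}) \subseteq \mathcal{P}_{t_1}$.

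The next step is to propagate this invariance along the already‑computed sets in the window $\{t_2,\dots,t_1\}$. For each $r \in \{1,\dots,n\}$, monotonicity together with $\mathcal{G}^{n}\mathcal{G}^{r} = \mathcal{G}^{r}\mathcal{G}^{n}$ gives
\[
\mathcal{G}^{n}\!\big(\bar{\mathcal{P}}_{t_1 - r}\big) \;=\; \mathcal{G}^{r}\!\big(\mathcal{G}^{n}(\mathcal{P}_{t_1})\big) \;\subseteq\; \mathcal{G}^{r}(\mathcal{P}_{t_1}) \;=\; \bar{\mathcal{P}}_{t_1 - r},
\]
so each of $\mathcal{P}_{t_1}, \bar{\mathcal{P}}_{t_1 - 1}, \dots, \bar{\mathcal{P}}_{t_1 - n} = \bar{\mathcal{P}}_{t_2}$ is invariant under $\mathcal{G}^{n}$; iterating and applying monotonicity again yields $\mathcal{G}^{qn}(\bar{\mathcal{P}}_{t_1 - r}) \subseteq \bar{\mathcal{P}}_{t_1 - r}$ for all $q \ge 0$ and all $r \in \{0,1,\dots,n-1\}$ (writing $\bar{\mathcal{P}}_{t_1} \triangleq \mathcal{P}_{t_1}$).

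The last step is a counting argument that folds an arbitrary $t \le t_1$ into that window. If $t_2 \le t \le t_1$, then $\bar{\mathcal{P}}_t$ is literally one of the terms of $\mathcal{I}$, so $\bar{\mathcal{P}}_t \subseteq \mathcal{I}$. If $t < t_2$, do Euclidean division $t_1 - t = qn + r$ with $q \ge 1$ and $r \in \{1,\dots,n\}$, so $t = (t_1 - r) - qn$ with $t_1 - r \in \{t_2,\dots,t_1 - 1\}$; then
\[
\bar{\mathcal{P}}_t \;=\; \mathcal{G}^{qn + r}(\mathcal{P}_{t_1}) \;=\; \mathcal{G}^{qn}\!\big(\bar{\mathcal{P}}_{t_1 - r}\big) \;\subseteq\; \bar{\mathcal{P}}_{t_1 - r} \;\subseteq\; \mathcal{I},
\]
which proves $\bar{\mathcal{P}}_t \subseteq \mathcal{I}$ for all $t \le t_1$.

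The main obstacle is making the ``monotone operator'' picture rigorous: the relaxation $\hat{f}$ (CROWN bounds in the linear case, the OVERT/MILP abstraction in the nonlinear case) is recomputed over a backreachable set that itself depends on the target, so $\mathcal{G}$ is not literally target‑independent; one must argue that shrinking the target can only shrink a valid choice of the resulting over‑approximation, e.g.\ by falling back on the coarser relaxation computed over the larger set. A secondary subtlety is that the exact identity $\bar{\mathcal{P}}_{t-1} = \mathcal{G}(\bar{\mathcal{P}}_t)$ holds for the \emph{concrete} recursion, whereas the symbolic and hybrid‑symbolic variants only give the inclusion $\bar{\mathcal{P}}_{t-1} \subseteq \mathcal{G}(\bar{\mathcal{P}}_t)$, so the argument is cleanest (and I would state it) for the concrete recursion, noting that any symbolic set is in any case sandwiched between $\mathcal{P}_t$ and the corresponding concrete over‑approximation.
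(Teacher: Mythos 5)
Your argument is correct in its core mechanism and rests on the same two ingredients as the paper's proof: monotonicity of a single backward step under set inclusion, and a periodicity/folding argument that reduces an arbitrary $t \leq t_1$ to an index in the window $\{t_2,\dots,t_1\}$ (the paper folds one period $\Delta t = t_1 - t_2$ at a time inside an induction descending from $t_2$; your Euclidean division does all periods at once --- cosmetically different, logically identical). The genuine difference is the object you induct on. You run the recursion on the \emph{computed} over-approximations via an abstract operator $\mathcal{G}$ with $\bar{\mathcal{P}}_{t-1} = \mathcal{G}(\bar{\mathcal{P}}_t)$, which has the advantage of proving the lemma's literal conclusion $\bar{\mathcal{P}}_t \subseteq \mathcal{I}$, but it forces you to confront exactly the obstacle you flag at the end: the implemented relaxation (the backreachable set $\bar{\mathcal{R}}_t$, the CROWN/OVERT abstraction recomputed over it, the partitioning, the hyperrectangular hull) depends on the target at each step, so there is no single target-independent monotone $\mathcal{G}$ whose iterates are the algorithm's outputs, and the required inclusion-monotonicity of the actual pipeline is asserted rather than proven. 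The paper sidesteps this entirely by running the induction on the \emph{true} BP sets $\mathcal{P}_t$, for which the one-step backward map is the exact preimage under $f(\cdot;\pi)$ --- trivially monotone and relaxation-free ("any state that leads to $\mathcal{P}_{t_2}$ must also lead to $\mathcal{P}_{t_1}$") --- and only invokes $\mathcal{P}_s \subseteq \bar{\mathcal{P}}_s$ to land inside $\mathcal{I}$; the price is that, read literally, the paper establishes $\mathcal{P}_t \subseteq \mathcal{I}$ rather than $\bar{\mathcal{P}}_t \subseteq \mathcal{I}$ as stated (which is what safety certification actually needs, since $\mathcal{I} \cap \mathcal{X}_0 = \emptyset$ then rules out reaching the target from $\mathcal{X}_0$). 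If you want a complete proof without auditing the monotonicity of the full relaxation pipeline, replace $\mathcal{G}$ by the true preimage map and conclude for $\mathcal{P}_t$; otherwise the remaining task in your route is to prove, not just note, that one concrete backward step is inclusion-monotone in its target.
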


\begin{proof}
This is a proof by induction where we wish to show that for any $t \leq t_2$, $\mathcal{P}_{t} \subseteq \mathcal{I}$.
To prove the base case, we simply have to show that $\mathcal{P}_{t_2} \subseteq \mathcal{I}$. Given that $\mathcal{P}_{t_2} \subseteq \bar{\mathcal{P}}_{t_2}$ and that $\mathcal{I} = \mathcal{P}_{t_1} \cup \dots \cup \bar{\mathcal{P}}_{t_2}$, it must be that $\mathcal{P}_{t_2} \subseteq \mathcal{I}$.

To show the inductive step, we need to show that for any $t < t_2$, $\mathcal{P}_{t} \subseteq \mathcal{I}$.
First note that given $\bar{\mathcal{P}}_{t_2} \subseteq \mathcal{P}_{t_1}$ and that $\mathcal{P}_{t_2} \subseteq \bar{\mathcal{P}}_{t_2}$, 
we know that $\mathcal{P}_{t_2} \subseteq \mathcal{P}_{t_1}$. It then follows that $\mathcal{P}_{t_2-1} \subseteq \mathcal{P}_{t_1-1}$ because if $\mathcal{P}_{t_2} \subseteq \mathcal{P}_{t_1}$, any state that leads to $\mathcal{P}_{t_2}$ must also lead to $\mathcal{P}_{t_1}$.
This argument can be applied iteratively to obtain
\begin{equation*}
    \begin{split}
        \mathcal{P}_{t_2-1} & \subseteq \mathcal{P}_{t_1-1} \\
        \mathcal{P}_{t_2-2} & \subseteq \mathcal{P}_{t_1-2} \\
        &\ \vdots \\
        \mathcal{P}_{t_2-\Delta t} & \subseteq \mathcal{P}_{t_1-\Delta t}.
    \end{split}
\end{equation*}
By selecting $\Delta t \triangleq t_1-t_2$, we can make the substitution $\mathcal{P}_{t_1-\Delta t} =  \mathcal{P}_{t_2}$ to get
\begin{equation*}
    \mathcal{P}_{t_2 - \Delta t} \subseteq \mathcal{P}_{t_2} \subseteq \mathcal{P}_{t_1}.
\end{equation*}
It follows that $\mathcal{P}_{t} \subseteq \mathcal{P}_{t +  \Delta t}\ \forall t \leq t_2$.
If we assume that $\bar{\mathcal{P}}_{i} \subseteq \mathcal{I}\ \forall i \in \{t+1, \dots t_2\}$, then we get $\mathcal{P}_{t} \subseteq \mathcal{P}_{t +  \Delta t} \subseteq \bar{\mathcal{P}}_{t +  \Delta t} \subseteq \mathcal{I}$, thus proving the inductive step.
\end{proof}

\begin{corollary}
If there exists a BP overapproximation at time $t=-1$ such that $\bar{\mathcal{P}}_{-1} \subseteq \mathcal{X}_{T}$, then $\mathcal{X}_{T}$ 
contains all BP sets with $t \leq t_1$, i.e., $\mathcal{X}_{T} \subseteq \mathcal{I}\ \forall t \leq t_1$, thus rendering $\mathcal{X}_{T}$ an invariant set. 
\end{corollary}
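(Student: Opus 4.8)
The plan is to obtain the corollary as an immediate specialization of \cref{lem:invariance} with the choices $t_1 = 0$ and $t_2 = -1$. The one conceptual point to settle first is that the target set itself is a legitimate \emph{concrete} backprojection set at time $t_1 = 0$: a state reaches $\mathcal{X}_T$ in zero steps exactly when it already lies in $\mathcal{X}_T$, so $\mathcal{P}_0 = \mathcal{X}_T$, which is also consistent with the convention $\bar{\mathcal{P}}_0 \triangleq \mathcal{X}_T$ used in \cref{sec:approach:linear}. In particular $\mathcal{P}_0$ is exact, not merely an over-approximation, so it can play the role of the set $\mathcal{P}_{t_1}$ in the hypothesis of \cref{lem:invariance}.

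Next I would check that the hypothesis of \cref{lem:invariance} holds. We are given a BP over-approximation $\bar{\mathcal{P}}_{-1}$ with $\bar{\mathcal{P}}_{-1} \subseteq \mathcal{X}_T$. Taking $t_2 = -1 < 0 = t_1$ and the concrete set $\mathcal{P}_{t_1} = \mathcal{P}_0 = \mathcal{X}_T$, the required inclusion $\bar{\mathcal{P}}_{t_2} \subseteq \mathcal{P}_{t_1}$ is precisely the assumed $\bar{\mathcal{P}}_{-1} \subseteq \mathcal{X}_T$. Hence \cref{lem:invariance} applies verbatim with these choices.

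Then I would evaluate the invariant set the lemma returns. With $t_1 = 0$ and $t_2 = -1$, the union $\bigcup_{t = t_2}^{t_1 - 1} \bar{\mathcal{P}}_t$ degenerates to the single term $\bar{\mathcal{P}}_{-1}$, so
\begin{equation*}
\mathcal{I} = \mathcal{P}_0 \cup \bar{\mathcal{P}}_{-1} = \mathcal{X}_T \cup \bar{\mathcal{P}}_{-1} = \mathcal{X}_T,
\end{equation*}
where the last equality uses $\bar{\mathcal{P}}_{-1} \subseteq \mathcal{X}_T$. The conclusion of \cref{lem:invariance} then gives $\bar{\mathcal{P}}_t \subseteq \mathcal{I} = \mathcal{X}_T$ for all $t \leq t_1 = 0$, and therefore $\mathcal{P}_t \subseteq \bar{\mathcal{P}}_t \subseteq \mathcal{X}_T$ for all such $t$. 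Since every backprojection set at every earlier time is contained in $\mathcal{X}_T$ — equivalently, no state outside $\mathcal{X}_T$ is ever driven into $\mathcal{X}_T$ under the closed-loop dynamics — the set $\mathcal{X}_T$ is backward invariant, which is the claim.

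The argument is essentially bookkeeping once \cref{lem:invariance} is available; the only step needing a moment of care — the ``main obstacle,'' such as it is — is justifying that $\mathcal{X}_T$ may be substituted for the concrete set $\mathcal{P}_{t_1}$ at $t_1 = 0$ (rather than an over-approximation), together with checking that the degenerate index range $t = t_2, \dots, t_1 - 1$ collapses to the single index $-1$ so that $\mathcal{I}$ simplifies exactly to $\mathcal{X}_T$.
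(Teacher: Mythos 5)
Your proposal is correct and matches the paper's intent: the paper gives no separate proof for this corollary, treating it as the immediate instantiation of \cref{lem:invariance} with $t_1=0$, $t_2=-1$, and $\mathcal{P}_0=\mathcal{X}_T$, which is exactly what you carry out. Your observation that $\mathcal{I}$ collapses to $\mathcal{X}_T$ because $\bar{\mathcal{P}}_{-1}\subseteq\mathcal{X}_T$ is the right reading of the (somewhat loosely worded) conclusion $\mathcal{X}_T\subseteq\mathcal{I}$ in the corollary statement.
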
}

\end{document}